\providecommand{\tabularnewline}{\\}
\numberwithin{equation}{section}
\newcommand{\lyxaddress}[1]{
	\par {\raggedright #1
	\vspace{1.4em}
	\noindent\par}
}
\theoremstyle{plain}
\newtheorem{thm}{\protect\theoremname}
\providecommand{\theoremname}{Theorem}
\begin{document}
\title{A dual-stress Bayesian Weibull accelerated life testing model}
\author{Neill Smit$^{1}$ \thanks{Corresponding author. E-mail: neillsmit1@gmail.com}
and Lizanne Raubenheimer$^{2}$}
\maketitle

\lyxaddress{$^{1}$School of Mathematical and Statistical Sciences, North-West
University, South Africa}

\lyxaddress{$^{2}$Department of Statistics, Rhodes University, South Africa}
\begin{abstract}
In this paper, a Bayesian accelerated life testing model is presented.
The Weibull distribution is used as the life distribution and the
generalised Eyring model as the time transformation function. This
is a model that allows for the use of more than one stressor, whereas
other commonly used acceleration models, such as the Arrhenius and
power law models, allow for only one stressor. The generalised Eyring-Weibull
model is used in an application, where MCMC methods are utilised to
generate samples for posterior inference.

\textbf{Keywords:} Accelerated life testing; Bayes; Generalised Eyring
model; Markov chain Monte Carlo; Weibull distribution.
\end{abstract}

\section{Introduction\label{sec:INTRODUCTION}}

Reliability and life testing is vitally important in the manufacturing
of consumer and capital goods, particularly in the engineering fields.
It is crucial to determine whether a product will perform its prescribed
function, without failure, for a desired period of time. Reliability
and life testing provides a theoretical as well as practical framework
by which the life characteristics of a product, component or system
can be quantified. An extensive discussion on the objectives of, and
need for, reliability and life testing can be found in Kececioglu
(2002).

In this day and age, manufacturers are pressured to provide durable
products, timely and at competitive prices. Performing life tests
on high-reliability materials, components and systems can be a tedious
and unproductive task. Many modern products are made to endure for
years or even decades. Obtaining sufficient failure time data for
these products at normal use conditions can prove to be very costly
and time consuming. This longevity obstacle has lead to a rising interest
in reliability engineering and the development of accelerated life
tests (ALTs). In ALTs, products are tested in a more severe than normal
use environment, by applying or fluctuating one or more stressors
(stress variables) at accelerated levels, in order to induce early
failures (Nelson, 1990). This failure data is then extrapolated to
predict the reliability characteristics of the products at normal
use conditions (Pan, 2009).

Stressors can include temperature, pressure, voltage, wattage, humidity,
loads, vibration amplitude, use rate, etc.(Kececioglu, 2002; Escobar
\& Meeker, 2006). Appertaining to the understanding of the physics
of failure, a functional relationship, known as a time transformation
function (TTF), is assumed between the stressors and the parameters
of the life distribution (see, for example, Singpurwalla, 1971a,b,
1973; Singpurwalla et al. , 1975). The most commonly used models for
acceleration include the Arrhenius, inverse power law, Eyring, and
generalised Eyring models. Various other acceleration models are discussed
in Escobar \& Meeker (2006), Kececioglu (2002), and Thiraviam (2010).

In this paper, a Bayesian approach to the generalised Eyring model
with the Weibull distribution as life distribution will be explored.
Section 2 consists of a short review of some prominent Bayesian models
and methods in accelerated life testing. The generalised Eyring-Weibull
(GEW) model and the general likelihood formulation for this model
is given in Section 3. In Section 4, variations of the GEW model is
presented by means of different choices for the prior distributions.
The posterior and full conditional posterior distributions for each
variation, up to at least proportionality, is also provided. Section
5 will provide an extensive application and comparison of the GEW
models to a data set. Due to the mathematically intractable posteriors,
the log-concavity of the models are assessed in the appendix to determine
which Markov Chain Monte Carlo (MCMC) method can be utilised for posterior
sampling. The paper will conclude in Section 6 with some final remarks
on the GEW model.

\section{Bayesian Accelerated Life Testing\label{sec:BAYESIAN_ALT}}

The development of Bayesian methods to draw inferences from ALTs has
experienced a surge in recent years. Ahmad (1990) provides a general
overview of early Bayesian methods in ALTs. These methods include
reducing high dimensional integration problems via semi-sufficient
statistics, the use of semi-parametric inference, a Kalman filter
approach, and failure rate strategies. Achcar \& Louzada-Neto (1991)
considers an exponential distribution with the Eyring model. Type-II
censored data is used and Jeffreys\textquoteright{} priors are employed
for the model parameters, where Laplace approximations are utilised
for integrals that are difficult to solve analytically. Chaloner \&
Larntz (1992) examines experimental Bayesian design for ALTs, where
lifetimes follow either a Weibull or a log-normal distribution. A
dynamic general linear model setup for ALTs, that uses linear Bayesian
methods for inference, is presented in Mazzuchi \& Soyer (1992). Lifetimes
are assumed to be exponential with the power law as the TTF. Achcar
(1993) explores Laplace approximation methods for deriving posterior
densities for various commonly used life distributions and TTFs.

Van Dorp et al. (1996) develops a Bayesian model for step-stress ALTs
where the lifetimes are exponentially distributed, and provides Bayesian
point estimates and credibility intervals for parameters at normal
use conditions. Dietrich \& Mazzuchi (1996) discusses the design of
experiments in ALTs, pointing out some problems that emerge from typical
regression and ANOVA techniques, and proposes an alternative analysis
procedure based on Bayesian considerations. Mazzuchi et al. (1997)
presents a Bayesian approach, based on general linear models, for
inference from ALTs. The lifetimes are assumed to be Weibull distributed
and the power law is used as TTF on the scale parameter. Erkanli \&
Soyer (2000) provides simulation-based designs for ALTs, using a Bayesian
desicion theory approach, where lifetimes follow an exponential distribution
with the power law as TTF. Perdona \& Louzada-Neto (2005) proposes
an ALT model where lifetimes are exponential and a general log-non-linear
TTF is used. A uniform prior is used and Laplace approximations are
utilised to find marginal posterior distributions.

Van Dorp \& Mazzuchi (2004) develops a general Bayes inference model
for ALTs, which is compatible with various types of stress loading,
where the lifetimes are exponentially distributed and strict conformity
to a specific TTF is not compulsory. MCMC methods are used for inference
and to derive posterior quantities. This model is also used in Van
Dorp et al. (2006) to compare constant stress, step-stress and profile
stress ALTs within a single Bayesian inferential framework, and is
extended to the Weibull distribution in Van Dorp \& Mazzuchi (2005).
Le�n et al. (2007) presents Bayesian inferences from ALTs, making
use of MCMC methods, where items originate from different groups with
random effects. Although the procedure is applied to an elementary
example, it can also be used with multiple random effects and acceleration
factors. Barriga et al. (2008) considers Bayesian methods for ALTs
where the exponentiated Weibull is used as the life distribution and
the Arrhenius model is the TTF. MCMC methods are used to sample from
the posterior for inferences, where a mixture of normal and gamma
priors are imposed on the parameters.

Soyer (2008) reviews Bayesian designs for ALTs by comparing Bayesian
decision theory, linear Bayesian, and Bayesian simulation-based designs.
An exponential life distribution with the power law as TTF is used
throughout the review. A basic parametric Bayesian ALT model, where
lifetimes follow a Weibull distribution with the power law as TTF,
is discussed in Soyer et al. (2008), and then extended to allow for
a more dynamic TTF. The two extensions are a hierarchical Bayesian
model and a Markov model where inferences are made via MCMC methods.
Pan (2009) provides a Bayesian approach, and introduces a calibration
factor, which allows for a combination of field data and ALT data
to be used for reliability prediction. A log-linear TTF is used with
the exponential and Weibull life distributions, and MCMC methods are
employed for intricate posterior distributions. Upadhyay \& Mukherjee
(2010) presents a Bayesian comparison between accelerated Weibull
and Birnbaum-Saunders models, where the TTF is the inverse power law.
A combination of independent vague priors for the model parameters
are chosen and MCMC techniques are used to generate posterior samples.

Yuan et al. (2014) proposes a semi-parametric Bayesian approach for
ALTs where the TTF is log-linear and no assumption is made on the
form of the life distribution. The authors employ a Dirichlet process
mixture model, using a Weibull kernel, to model failure times. Model
fitting is performed via a simulation-based algorithm that incorporates
Gibbs sampling. Mukhopadhyay \& Roy (2016) considers Bayesian analysis
for ALTs where log-lifetimes follow a distribution from the log-concave
location-scale family and a linear TTF is used. After a general discussion,
the article focuses on using the log-normal life distribution with
the Arrhenius model as TTF. MCMC techniques are used to obtain posterior
samples for inference. Classical and Bayesian inference on a progressive
stress ALT with a generalised Birnbaum-Saunders life distribution
is presented in Sha (2018). MCMC methods are used for Bayesian analysis
and the author states that the proposed Bayesian method is not only
efficient and accurate, but outperforms the classical likelihood-based
approach to ALTs.

To our knowledge, the generalised Eyring model has primarily been
implemented only in frequentistic ALT setups, and is still less commonly
used than other TTFs. The generalised Eyring model allows for an ALT
model with more than one stressor, which makes it a more applicable
model to use in industry. It is however an intricate model due to
the number of unknown parameters, which complicates both classical
and Bayesian inferences. The aim of this article is to present a Bayesian
ALT model that incorporates the generalised Eyring relationship, where
lifetimes follow a Weibull distribution. The Weibull distribution
is a more flexible and appropriate life distribution to use in practice,
compared to, for example, the exponential distribution. In the event
of mathematically intractable posterior distributions, MCMC methods
are used to draw posterior samples for inference.

\section{The Generalised Eyring-Weibull Model Specification\label{sec:GEW_MODEL}}

Let $X$ be a continuous random variable that follows a Weibull distribution
with scale parameter $\alpha$ and shape parameter $\beta$ $\left(\alpha>0,\beta>0\right)$.
The probability density function (PDF) is given by
\begin{eqnarray}
f\left(x\left|\alpha,\beta\right.\right) & = & \alpha\beta x^{\beta-1}\exp\left(-\alpha x^{\beta}\right),\quad x\geq0,\label{eq:PDF_WEIBULL}
\end{eqnarray}

\noindent and the reliability function by
\begin{equation}
R\left(x\right)=1-F\left(x\right)=\exp\left(-\alpha x^{\beta}\right).\label{eq:RF_WEIBULL}
\end{equation}

\noindent Consider two stressors, one thermal and one non-thermal.
Indicate the $k$ distinct accelerated levels of the stressors by
$\left\{ T_{i},S_{i}\right\} ,i=1,\ldots,k$, where $T_{i},i=1,\ldots,k$,
is the accelerated levels of the thermal stressor and $S_{i},i=1,\ldots,k$
is the accelerated levels of the non-thermal stressor. An item is
exposed to the constant application of a specific stress level combination
$\left\{ T_{i},S_{i}\right\} $. A common assumption in the literature
is that the Weibull scale parameter $\alpha$ is then dependent on
the stress levels, whereas the shape parameter $\beta$ is not (see,
for example, Mazzuchi et al. , 1997; Soyer et al. , 2008; Upadhyay
\& Mukherjee, 2010). The reparameterisation of $\alpha$ given by
the generalised Eyring model is
\begin{eqnarray}
\alpha_{i} & = & T_{i}\exp\left(-\theta_{1}-\frac{\theta_{2}}{T_{i}}-\theta_{3}V_{i}-\frac{\theta_{4}V_{i}}{T_{i}}\right),\label{eq:GEW_ALPHA}
\end{eqnarray}

\noindent where $\theta_{1},$ $\theta_{2},$ $\theta_{3},$ and $\theta_{4}$
are unknown model parameters (ReliaSoft Corporation, 2015), and $V_{i}$
is a function of the non-thermal stressor $S_{i}$(Escobar \& Meeker,
2006). For a lifetime subjected to the $i^{th}$ level of the stressors,
it follows from (\ref{eq:PDF_WEIBULL}) and (\ref{eq:GEW_ALPHA})
that the Weibull PDF can be written as
\begin{eqnarray}
f\left(x_{i}\left|\theta_{1},\theta_{2},\theta_{3},\theta_{4},\beta\right.\right) & = & T_{i}\exp\left(-\theta_{1}-\frac{\theta_{2}}{T_{i}}-\theta_{3}V_{i}-\frac{\theta_{4}V_{i}}{T_{i}}\right)\beta x_{i}^{\beta-1}\nonumber \\
 &  & \times\exp\left[-T_{i}\exp\left(-\theta_{1}-\frac{\theta_{2}}{T_{i}}-\theta_{3}V_{i}-\frac{\theta_{4}V_{i}}{T_{i}}\right)x_{i}^{\beta}\right].\label{eq:GEW_PDF}
\end{eqnarray}
From (\ref{eq:RF_WEIBULL}) and (\ref{eq:GEW_ALPHA}), the Weibull
reliability function at some time $\tau$ can be written as
\begin{equation}
R\left(\tau\right)=\exp\left[-T_{i}\exp\left(-\theta_{1}-\frac{\theta_{2}}{T_{i}}-\theta_{3}V_{i}-\frac{\theta_{4}V_{i}}{T_{i}}\right)\tau^{\beta}\right].\label{eq:GEW_RF}
\end{equation}
Suppose that $n_{i}$ items are tested at each of the $k$ different
stress levels and denote the failure times by $x_{ij},j=1,\ldots,n_{i},i=1,\ldots,k.$
The likelihood function, in general, is then given by
\[
L\left(\underline{x}\left|\theta_{1},\theta_{2},\theta_{3},\theta_{4},\beta\right.\right)=\prod_{i=1}^{k}\left[\prod_{j=1}^{r_{i}}f\left(x_{ij}\left|\theta_{1},\theta_{2},\theta_{3},\theta_{4},\beta\right.\right)\right]\left[R\left(\tau_{i}\right)\right]^{n_{i}-r_{i}}.
\]
Note that for complete samples $r_{i}=n_{i}$. For type-I censoring
$r_{i}$ is the number of failures that occur before censoring time
$\tau_{i}$, where $\tau_{i}<\infty,i=1,\ldots,k$ is predetermined
censoring times for the $k$ different stress levels. For type-II
censoring $\tau_{i}=x_{i(r_{i})}$, where $x_{i(r_{i})}$ is the failure
time of the $r_{i}^{th}$ failure, where $r_{i},i=1,\ldots,k$ is
the pre-chosen number of failures after which censoring occurs for
the $k$ different stress levels. From (\ref{eq:GEW_PDF}) and (\ref{eq:GEW_RF})
it follows that

\noindent 
\begin{eqnarray}
L\left(\underline{x}\left|\theta_{1},\theta_{2},\theta_{3},\theta_{4},\beta\right.\right) & = & \prod_{i=1}^{k}\left[\prod_{j=1}^{r_{i}}f\left(x_{ij}\left|\theta_{1},\theta_{2},\theta_{3},\theta_{4},\beta\right.\right)\right]\left[R\left(\tau_{i}\right)\right]^{n_{i}-r_{i}}\nonumber \\
 & = & \prod_{i=1}^{k}\exp\left[-\left(n_{i}-r_{i}\right)T_{i}\exp\left(-\theta_{1}-\frac{\theta_{2}}{T_{i}}-\theta_{3}V_{i}-\frac{\theta_{4}V_{i}}{T_{i}}\right)\tau_{i}^{\beta}\right]\nonumber \\
 &  & \times\beta^{r_{i}}T_{i}^{r_{i}}\exp\left(-\theta_{1}r_{i}-\frac{\theta_{2}r_{i}}{T_{i}}-\theta_{3}r_{i}V_{i}-\frac{\theta_{4}r_{i}V_{i}}{T_{i}}\right)\nonumber \\
 &  & \times\prod_{j=1}^{r_{i}}x_{ij}^{\beta-1}\exp\left[-T_{i}\exp\left(-\theta_{1}-\frac{\theta_{2}}{T_{i}}-\theta_{3}V_{i}-\frac{\theta_{4}V_{i}}{T_{i}}\right)x_{ij}^{\beta}\right]\nonumber \\
 & = & \beta^{\sum_{i=1}^{k}r_{i}}\exp\left(-\theta_{1}\sum_{i=1}^{k}r_{i}-\theta_{2}\sum_{i=1}^{k}\frac{r_{i}}{T_{i}}-\theta_{3}\sum_{i=1}^{k}r_{i}V_{i}-\theta_{4}\sum_{i=1}^{k}\frac{r_{i}V_{i}}{T_{i}}\right)\nonumber \\
 &  & \times\exp\left[-\sum_{i=1}^{k}\left(n_{i}-r_{i}\right)T_{i}\exp\left(-\theta_{1}-\frac{\theta_{2}}{T_{i}}-\theta_{3}V_{i}-\frac{\theta_{4}V_{i}}{T_{i}}\right)\tau_{i}^{\beta}\right]\label{eq:GEW_LIKELIHOOD}\\
 &  & \times\exp\left[-\sum_{i=1}^{k}\sum_{j=1}^{r_{i}}T_{i}\exp\left(-\theta_{1}-\frac{\theta_{2}}{T_{i}}-\theta_{3}V_{i}-\frac{\theta_{4}V_{i}}{T_{i}}\right)x_{ij}^{\beta}\right]\left[\prod_{i=1}^{k}\prod_{j=1}^{r_{i}}T_{i}x_{ij}^{\beta-1}\right].\nonumber 
\end{eqnarray}

\section{Priors and Posterior\label{sec:PRIORS_POSTERIORS}}

In this section, a number of GEW models are defined, resulting from
the selection of different prior distributions. Soland (1969) affirms
that there is no conjugate family of continuous joint prior distributions
for the two-parameter Weibull distribution, and proposes a gamma prior
for the scale parameter and a discrete prior for the shape parameter.
Tsokos (1972) suggests using an inverse gamma prior for the scale
parameter and a uniform prior for the shape parameter. A compound
inverse gamma prior on the scale parameter, with either a discrete,
inverse gamma, or uniform prior on the shape parameter is proposed
in Papadopoulos \& Tsokos (1976). Kundu (2008) assumes a gamma prior
for the scale parameter and a non-specific log-concave prior density
with support $(0,\infty)$ for the shape parameter. Banerjee \& Kundu
(2008) considers gamma priors on both the scale and shape parameters.

Taking the above into consideration, we formulate four GEW models.
Assume that the priors on the unknown parameters $\theta_{1},$ $\theta_{2},$
$\theta_{3},$ $\theta_{4}$ and $\beta$ are independent, and given
by
\[
\pi\left(\theta_{1},\theta_{2},\theta_{3},\theta_{4},\beta\right)=\pi\left(\theta_{1}\right)\pi\left(\theta_{2}\right)\pi\left(\theta_{3}\right)\pi\left(\theta_{4}\right)\pi\left(\beta\right).
\]
The joint posterior distribution is then given by
\begin{eqnarray*}
\pi\left(\theta_{1},\theta_{2},\theta_{3},\theta_{4},\beta\left|\underline{x}\right.\right) & \propto & L\left(\underline{x}\left|\theta_{1},\theta_{2},\theta_{3},\theta_{4},\beta\right.\right)\pi\left(\theta_{1},\theta_{2},\theta_{3},\theta_{4},\beta\right).
\end{eqnarray*}

\subsection{$GEW_{1}$ Model\label{subsec:GEW1(UUUUU)}}

The GEW model where uniform priors are imposed on all the parameters,
thus
\begin{eqnarray*}
\theta_{1}\sim U(c_{0},c_{1}) & ,c_{1}>c_{0} & ,\pi_{1}\left(\theta_{1}\right)\propto\textrm{constant}\\
\theta_{2}\sim U(c_{2},c_{3}) & ,c_{3}>c_{2} & ,\pi_{1}\left(\theta_{2}\right)\propto\textrm{constant}\\
\theta_{3}\sim U(c_{4},c_{5}) & ,c_{5}>c_{4} & ,\pi_{1}\left(\theta_{3}\right)\propto\textrm{constant}\\
\theta_{4}\sim U(c_{6},c_{7}) & ,c_{7}>c_{6} & ,\pi_{1}\left(\theta_{4}\right)\propto\textrm{constant}\\
\beta\sim U(c_{8},c_{9}) & ,c_{9}>c_{8} & ,\pi_{1}\left(\beta\right)\propto\textrm{constant},
\end{eqnarray*}
is denoted by $GEW_{1}$. The joint prior for $GEW_{1}$ is then given
by
\begin{equation}
\pi_{1}\left(\theta_{1},\theta_{2},\theta_{3},\theta_{4},\beta\right)\propto\textrm{constant},\label{eq:GEW1_PRIORS}
\end{equation}
and using (\ref{eq:GEW_LIKELIHOOD}) and (\ref{eq:GEW1_PRIORS}) the
joint posterior is given by
\begin{eqnarray*}
\pi_{1}\left(\theta_{1},\theta_{2},\theta_{3},\theta_{4},\beta\left|\underline{x}\right.\right) & \propto & \beta^{\sum_{i=1}^{k}r_{i}}\exp\left(-\theta_{1}\sum_{i=1}^{k}r_{i}-\theta_{2}\sum_{i=1}^{k}\frac{r_{i}}{T_{i}}-\theta_{3}\sum_{i=1}^{k}r_{i}V_{i}-\theta_{4}\sum_{i=1}^{k}\frac{r_{i}V_{i}}{T_{i}}\right)\\
 &  & \times\exp\left[-\sum_{i=1}^{k}\left(n_{i}-r_{i}\right)T_{i}\exp\left(-\theta_{1}-\frac{\theta_{2}}{T_{i}}-\theta_{3}V_{i}-\frac{\theta_{4}V_{i}}{T_{i}}\right)\tau_{i}^{\beta}\right]\\
 &  & \times\exp\left[-\sum_{i=1}^{k}\sum_{j=1}^{r_{i}}T_{i}\exp\left(-\theta_{1}-\frac{\theta_{2}}{T_{i}}-\theta_{3}V_{i}-\frac{\theta_{4}V_{i}}{T_{i}}\right)x_{ij}^{\beta}\right]\left[\prod_{i=1}^{k}\prod_{j=1}^{r_{i}}T_{i}x_{ij}^{\beta-1}\right].
\end{eqnarray*}
The posterior is of an intractable form, therefore MCMC methods are
employed in order to draw posterior samples for inferences. The full
conditional posteriors for the $GEW_{1}$ model are given by
\begin{eqnarray*}
\pi_{1}\left(\theta_{1}\left|\underline{x},\theta_{2},\theta_{3},\theta_{4},\beta\right.\right) & \propto & \exp(-\theta_{1}\sum_{i=1}^{k}r_{i})\exp\left[-\sum_{i=1}^{k}\left(n_{i}-r_{i}\right)T_{i}\exp\left(-\theta_{1}-\frac{\theta_{2}}{T_{i}}-\theta_{3}V_{i}-\frac{\theta_{4}V_{i}}{T_{i}}\right)\tau_{i}^{\beta}\right]\\
 &  & \times\exp\left[-\sum_{i=1}^{k}\sum_{j=1}^{r_{i}}T_{i}\exp\left(-\theta_{1}-\frac{\theta_{2}}{T_{i}}-\theta_{3}V_{i}-\frac{\theta_{4}V_{i}}{T_{i}}\right)x_{ij}^{\beta}\right]
\end{eqnarray*}
\begin{eqnarray*}
\pi_{1}\left(\theta_{2}\left|\underline{x},\theta_{1},\theta_{3},\theta_{4},\beta\right.\right) & \propto & \exp\left(-\theta_{2}\sum_{i=1}^{k}\frac{r_{i}}{T_{i}}\right)\exp\left[-\sum_{i=1}^{k}\left(n_{i}-r_{i}\right)T_{i}\exp\left(-\theta_{1}-\frac{\theta_{2}}{T_{i}}-\theta_{3}V_{i}-\frac{\theta_{4}V_{i}}{T_{i}}\right)\tau_{i}^{\beta}\right]\\
 &  & \times\exp\left[-\sum_{i=1}^{k}\sum_{j=1}^{r_{i}}T_{i}\exp\left(-\theta_{1}-\frac{\theta_{2}}{T_{i}}-\theta_{3}V_{i}-\frac{\theta_{4}V_{i}}{T_{i}}\right)x_{ij}^{\beta}\right]
\end{eqnarray*}
\begin{eqnarray*}
\pi_{1}\left(\theta_{3}\left|\underline{x},\theta_{1},\theta_{2},\theta_{4},\beta\right.\right) & \propto & \exp\left(-\theta_{3}\sum_{i=1}^{k}r_{i}V_{i}\right)\exp\left[-\sum_{i=1}^{k}\left(n_{i}-r_{i}\right)T_{i}\exp\left(-\theta_{1}-\frac{\theta_{2}}{T_{i}}-\theta_{3}V_{i}-\frac{\theta_{4}V_{i}}{T_{i}}\right)\tau_{i}^{\beta}\right]\\
 &  & \times\exp\left[-\sum_{i=1}^{k}\sum_{j=1}^{r_{i}}T_{i}\exp\left(-\theta_{1}-\frac{\theta_{2}}{T_{i}}-\theta_{3}V_{i}-\frac{\theta_{4}V_{i}}{T_{i}}\right)x_{ij}^{\beta}\right]
\end{eqnarray*}
\begin{eqnarray*}
\pi_{1}\left(\theta_{4}\left|\underline{x},\theta_{1},\theta_{2},\theta_{3},\beta\right.\right) & \propto & \exp\left(-\theta_{4}\sum_{i=1}^{k}\frac{r_{i}V_{i}}{T_{i}}\right)\exp\left[-\sum_{i=1}^{k}\left(n_{i}-r_{i}\right)T_{i}\exp\left(-\theta_{1}-\frac{\theta_{2}}{T_{i}}-\theta_{3}V_{i}-\frac{\theta_{4}V_{i}}{T_{i}}\right)\tau_{i}^{\beta}\right]\\
 &  & \times\exp\left[-\sum_{i=1}^{k}\sum_{j=1}^{r_{i}}T_{i}\exp\left(-\theta_{1}-\frac{\theta_{2}}{T_{i}}-\theta_{3}V_{i}-\frac{\theta_{4}V_{i}}{T_{i}}\right)x_{ij}^{\beta}\right]
\end{eqnarray*}
\begin{eqnarray*}
\pi_{1}\left(\beta\left|\underline{x},\theta_{1},\theta_{2},\theta_{3},\theta_{4}\right.\right) & \propto & \beta^{\sum_{i=1}^{k}r_{i}}\exp\left[-\sum_{i=1}^{k}\left(n_{i}-r_{i}\right)T_{i}\exp\left(-\theta_{1}-\frac{\theta_{2}}{T_{i}}-\theta_{3}V_{i}-\frac{\theta_{4}V_{i}}{T_{i}}\right)\tau_{i}^{\beta}\right]\\
 &  & \times\exp\left[-\sum_{i=1}^{k}\sum_{j=1}^{r_{i}}T_{i}\exp\left(-\theta_{1}-\frac{\theta_{2}}{T_{i}}-\theta_{3}V_{i}-\frac{\theta_{4}V_{i}}{T_{i}}\right)x_{ij}^{\beta}\right]\left[\prod_{i=1}^{k}\prod_{j=1}^{r_{i}}x_{ij}^{\beta-1}\right].
\end{eqnarray*}

\subsection{$GEW_{2}$ Model\label{subsec:GEW2(GGGGG)}}

Let $GEW_{2}$ denote the GEW model where gamma priors on all the
parameters are assumed, with
\begin{eqnarray*}
\theta_{1}\sim\Gamma(c_{10},c_{11}) & ,c_{10},c_{11}>0 & ,\pi_{2}\left(\theta_{1}\right)\propto\theta_{1}^{c_{10}-1}\exp\left(-c_{11}\theta_{1}\right)\\
\theta_{2}\sim\Gamma(c_{12},c_{13}) & ,c_{12},c_{13}>0 & ,\pi_{2}\left(\theta_{2}\right)\propto\theta_{2}^{c_{12}-1}\exp\left(-c_{13}\theta_{2}\right)\\
\theta_{3}\sim\Gamma(c_{14},c_{15}) & ,c_{14},c_{15}>0 & ,\pi_{2}\left(\theta_{3}\right)\propto\theta_{3}^{c_{14}-1}\exp\left(-c_{15}\theta_{3}\right)\\
\theta_{4}\sim\Gamma(c_{16},c_{17}) & ,c_{16},c_{17}>0 & ,\pi_{2}\left(\theta_{4}\right)\propto\theta_{4}^{c_{16}-1}\exp\left(-c_{17}\theta_{4}\right)\\
\beta\sim\Gamma(c_{18},c_{19}) & ,c_{18},c_{19}>0 & ,\pi_{2}\left(\beta\right)\propto\beta^{c_{18}-1}\exp\left(-c_{19}\beta\right).
\end{eqnarray*}
The joint prior for $GEW_{2}$ is then given by
\begin{equation}
\pi_{2}\left(\theta_{1},\theta_{2},\theta_{3},\theta_{4},\beta\right)\propto\theta_{1}^{c_{10}-1}\theta_{2}^{c_{12}-1}\theta_{3}^{c_{14}-1}\theta_{4}^{c_{16}-1}\beta^{c_{18}-1}\exp\left(-c_{11}\theta_{1}-c_{13}\theta_{2}-c_{15}\theta_{3}-c_{17}\theta_{4}-c_{19}\beta\right),\label{eq:GEW2_PRIORS}
\end{equation}
and using (\ref{eq:GEW_LIKELIHOOD}) and (\ref{eq:GEW2_PRIORS}) the
joint posterior is given by
\begin{eqnarray*}
\pi_{2}\left(\theta_{1},\theta_{2},\theta_{3},\theta_{4},\beta\left|\underline{x}\right.\right) & \propto & \theta_{1}^{c_{10}-1}\theta_{2}^{c_{12}-1}\theta_{3}^{c_{14}-1}\theta_{4}^{c_{16}-1}\beta^{c_{18}-1}\exp\left(-c_{11}\theta_{1}-c_{13}\theta_{2}-c_{15}\theta_{3}-c_{17}\theta_{4}-c_{19}\beta\right)\\
 &  & \times\beta^{\sum_{i=1}^{k}r_{i}}\exp\left(-\theta_{1}\sum_{i=1}^{k}r_{i}-\theta_{2}\sum_{i=1}^{k}\frac{r_{i}}{T_{i}}-\theta_{3}\sum_{i=1}^{k}r_{i}V_{i}-\theta_{4}\sum_{i=1}^{k}\frac{r_{i}V_{i}}{T_{i}}\right)\\
 &  & \times\exp\left[-\sum_{i=1}^{k}\left(n_{i}-r_{i}\right)T_{i}\exp\left(-\theta_{1}-\frac{\theta_{2}}{T_{i}}-\theta_{3}V_{i}-\frac{\theta_{4}V_{i}}{T_{i}}\right)\tau_{i}^{\beta}\right]\\
 &  & \times\exp\left[-\sum_{i=1}^{k}\sum_{j=1}^{r_{i}}T_{i}\exp\left(-\theta_{1}-\frac{\theta_{2}}{T_{i}}-\theta_{3}V_{i}-\frac{\theta_{4}V_{i}}{T_{i}}\right)x_{ij}^{\beta}\right]\left[\prod_{i=1}^{k}\prod_{j=1}^{r_{i}}T_{i}x_{ij}^{\beta-1}\right].
\end{eqnarray*}
Due to the complexity of the posterior, MCMC methods are used to draw
posterior samples to base inferences on. For $GEW_{2}$, the full
conditional posteriors are given by
\begin{eqnarray*}
\pi_{2}\left(\theta_{1}\left|\underline{x},\theta_{2},\theta_{3},\theta_{4},\beta\right.\right) & \propto & \theta_{1}^{c_{10}-1}\exp\left(-c_{11}\theta_{1}\right)\exp(-\theta_{1}\sum_{i=1}^{k}r_{i})\\
 &  & \times\exp\left[-\sum_{i=1}^{k}\left(n_{i}-r_{i}\right)T_{i}\exp\left(-\theta_{1}-\frac{\theta_{2}}{T_{i}}-\theta_{3}V_{i}-\frac{\theta_{4}V_{i}}{T_{i}}\right)\tau_{i}^{\beta}\right]\\
 &  & \times\exp\left[-\sum_{i=1}^{k}\sum_{j=1}^{r_{i}}T_{i}\exp\left(-\theta_{1}-\frac{\theta_{2}}{T_{i}}-\theta_{3}V_{i}-\frac{\theta_{4}V_{i}}{T_{i}}\right)x_{ij}^{\beta}\right]
\end{eqnarray*}
\begin{eqnarray*}
\pi_{2}\left(\theta_{2}\left|\underline{x},\theta_{1},\theta_{3},\theta_{4},\beta\right.\right) & \propto & \theta_{2}^{c_{12}-1}\exp\left(-c_{13}\theta_{2}\right)\exp\left(-\theta_{2}\sum_{i=1}^{k}\frac{r_{i}}{T_{i}}\right)\\
 &  & \times\exp\left[-\sum_{i=1}^{k}\left(n_{i}-r_{i}\right)T_{i}\exp\left(-\theta_{1}-\frac{\theta_{2}}{T_{i}}-\theta_{3}V_{i}-\frac{\theta_{4}V_{i}}{T_{i}}\right)\tau_{i}^{\beta}\right]\\
 &  & \times\exp\left[-\sum_{i=1}^{k}\sum_{j=1}^{r_{i}}T_{i}\exp\left(-\theta_{1}-\frac{\theta_{2}}{T_{i}}-\theta_{3}V_{i}-\frac{\theta_{4}V_{i}}{T_{i}}\right)x_{ij}^{\beta}\right]
\end{eqnarray*}
\begin{eqnarray*}
\pi_{2}\left(\theta_{3}\left|\underline{x},\theta_{1},\theta_{2},\theta_{4},\beta\right.\right) & \propto & \theta_{3}^{c_{14}-1}\exp\left(-c_{15}\theta_{3}\right)\exp\left(-\theta_{3}\sum_{i=1}^{k}r_{i}V_{i}\right)\\
 &  & \times\exp\left[-\sum_{i=1}^{k}\left(n_{i}-r_{i}\right)T_{i}\exp\left(-\theta_{1}-\frac{\theta_{2}}{T_{i}}-\theta_{3}V_{i}-\frac{\theta_{4}V_{i}}{T_{i}}\right)\tau_{i}^{\beta}\right]\\
 &  & \times\exp\left[-\sum_{i=1}^{k}\sum_{j=1}^{r_{i}}T_{i}\exp\left(-\theta_{1}-\frac{\theta_{2}}{T_{i}}-\theta_{3}V_{i}-\frac{\theta_{4}V_{i}}{T_{i}}\right)x_{ij}^{\beta}\right]
\end{eqnarray*}
\begin{eqnarray*}
\pi_{2}\left(\theta_{4}\left|\underline{x},\theta_{1},\theta_{2},\theta_{3},\beta\right.\right) & \propto & \theta_{4}^{c_{16}-1}\exp\left(-c_{17}\theta_{4}\right)\exp\left(-\theta_{4}\sum_{i=1}^{k}\frac{r_{i}V_{i}}{T_{i}}\right)\\
 &  & \times\exp\left[-\sum_{i=1}^{k}\left(n_{i}-r_{i}\right)T_{i}\exp\left(-\theta_{1}-\frac{\theta_{2}}{T_{i}}-\theta_{3}V_{i}-\frac{\theta_{4}V_{i}}{T_{i}}\right)\tau_{i}^{\beta}\right]\\
 &  & \times\exp\left[-\sum_{i=1}^{k}\sum_{j=1}^{r_{i}}T_{i}\exp\left(-\theta_{1}-\frac{\theta_{2}}{T_{i}}-\theta_{3}V_{i}-\frac{\theta_{4}V_{i}}{T_{i}}\right)x_{ij}^{\beta}\right]
\end{eqnarray*}
\begin{eqnarray*}
\pi_{2}\left(\beta\left|\underline{x},\theta_{1},\theta_{2},\theta_{3},\theta_{4}\right.\right) & \propto & \beta^{c_{18}-1}\exp\left(-c_{19}\beta\right)\beta^{\sum_{i=1}^{k}r_{i}}\\
 &  & \times\exp\left[-\sum_{i=1}^{k}\left(n_{i}-r_{i}\right)T_{i}\exp\left(-\theta_{1}-\frac{\theta_{2}}{T_{i}}-\theta_{3}V_{i}-\frac{\theta_{4}V_{i}}{T_{i}}\right)\tau_{i}^{\beta}\right]\\
 &  & \times\exp\left[-\sum_{i=1}^{k}\sum_{j=1}^{r_{i}}T_{i}\exp\left(-\theta_{1}-\frac{\theta_{2}}{T_{i}}-\theta_{3}V_{i}-\frac{\theta_{4}V_{i}}{T_{i}}\right)x_{ij}^{\beta}\right]\left[\prod_{i=1}^{k}\prod_{j=1}^{r_{i}}x_{ij}^{\beta-1}\right].
\end{eqnarray*}

\subsection{$GEW_{3}$ Model\label{subsec:GEW3(UUUUG)}}

Consider a mixture of uniform and gamma priors for the model parameters,
and denote this model by $GEW_{3}$. Let $\beta$ have a gamma prior,
and the other parameters all have uniform priors,
\begin{eqnarray*}
\theta_{1}\sim U(c_{20},c_{21}) & ,c_{21}>c_{20} & ,\pi_{3}\left(\theta_{1}\right)\propto\textrm{constant}\\
\theta_{2}\sim U(c_{22},c_{23}) & ,c_{23}>c_{22} & ,\pi_{3}\left(\theta_{2}\right)\propto\textrm{constant}\\
\theta_{3}\sim U(c_{24},c_{25}) & ,c_{25}>c_{24} & ,\pi_{3}\left(\theta_{3}\right)\propto\textrm{constant}\\
\theta_{4}\sim U(c_{26},c_{27}) & ,c_{27}>c_{26} & ,\pi_{3}\left(\theta_{4}\right)\propto\textrm{constant}\\
\beta\sim\Gamma(c_{28},c_{29}) & ,c_{28},c_{29}>0 & ,\pi_{3}\left(\beta\right)\propto\beta^{c_{28}-1}\exp(-c_{29}\beta).
\end{eqnarray*}
The joint prior for $GEW_{3}$ is then given by
\begin{equation}
\pi_{3}\left(\theta_{1},\theta_{2},\theta_{3},\theta_{4},\beta\right)\propto\beta^{c_{28}-1}\exp\left(-c_{29}\beta\right),\label{eq:GEW3_PRIORS}
\end{equation}
and using (\ref{eq:GEW_LIKELIHOOD}) and (\ref{eq:GEW3_PRIORS}) the
joint posterior is given by
\begin{eqnarray*}
\pi_{3}\left(\theta_{1},\theta_{2},\theta_{3},\theta_{4},\beta\left|\underline{x}\right.\right) & \propto & \beta^{c_{28}-1}\exp\left(-c_{29}\beta\right)\beta^{\sum_{i=1}^{k}r_{i}}\\
 &  & \times\exp\left(-\theta_{1}\sum_{i=1}^{k}r_{i}-\theta_{2}\sum_{i=1}^{k}\frac{r_{i}}{T_{i}}-\theta_{3}\sum_{i=1}^{k}r_{i}V_{i}-\theta_{4}\sum_{i=1}^{k}\frac{r_{i}V_{i}}{T_{i}}\right)\\
 &  & \times\exp\left[-\sum_{i=1}^{k}\left(n_{i}-r_{i}\right)T_{i}\exp\left(-\theta_{1}-\frac{\theta_{2}}{T_{i}}-\theta_{3}V_{i}-\frac{\theta_{4}V_{i}}{T_{i}}\right)\tau_{i}^{\beta}\right]\\
 &  & \times\exp\left[-\sum_{i=1}^{k}\sum_{j=1}^{r_{i}}T_{i}\exp\left(-\theta_{1}-\frac{\theta_{2}}{T_{i}}-\theta_{3}V_{i}-\frac{\theta_{4}V_{i}}{T_{i}}\right)x_{ij}^{\beta}\right]\left[\prod_{i=1}^{k}\prod_{j=1}^{r_{i}}T_{i}x_{ij}^{\beta-1}\right].
\end{eqnarray*}
MCMC methods is used for posterior inference since the posterior is
of an unmanageable form. The full conditional posteriors for this
model can be written as
\begin{eqnarray*}
\pi_{3}\left(\theta_{1}\left|\underline{x},\theta_{2},\theta_{3},\theta_{4},\beta\right.\right) & \propto & \exp(-\theta_{1}\sum_{i=1}^{k}r_{i})\exp\left[-\sum_{i=1}^{k}\left(n_{i}-r_{i}\right)T_{i}\exp\left(-\theta_{1}-\frac{\theta_{2}}{T_{i}}-\theta_{3}V_{i}-\frac{\theta_{4}V_{i}}{T_{i}}\right)\tau_{i}^{\beta}\right]\\
 &  & \times\exp\left[-\sum_{i=1}^{k}\sum_{j=1}^{r_{i}}T_{i}\exp\left(-\theta_{1}-\frac{\theta_{2}}{T_{i}}-\theta_{3}V_{i}-\frac{\theta_{4}V_{i}}{T_{i}}\right)x_{ij}^{\beta}\right]
\end{eqnarray*}
\begin{eqnarray*}
\pi_{3}\left(\theta_{2}\left|\underline{x},\theta_{1},\theta_{3},\theta_{4},\beta\right.\right) & \propto & \exp\left(-\theta_{2}\sum_{i=1}^{k}\frac{r_{i}}{T_{i}}\right)\exp\left[-\sum_{i=1}^{k}\left(n_{i}-r_{i}\right)T_{i}\exp\left(-\theta_{1}-\frac{\theta_{2}}{T_{i}}-\theta_{3}V_{i}-\frac{\theta_{4}V_{i}}{T_{i}}\right)\tau_{i}^{\beta}\right]\\
 &  & \times\exp\left[-\sum_{i=1}^{k}\sum_{j=1}^{r_{i}}T_{i}\exp\left(-\theta_{1}-\frac{\theta_{2}}{T_{i}}-\theta_{3}V_{i}-\frac{\theta_{4}V_{i}}{T_{i}}\right)x_{ij}^{\beta}\right]
\end{eqnarray*}
\begin{eqnarray*}
\pi_{3}\left(\theta_{3}\left|\underline{x},\theta_{1},\theta_{2},\theta_{4},\beta\right.\right) & \propto & \exp\left(-\theta_{3}\sum_{i=1}^{k}r_{i}V_{i}\right)\exp\left[-\sum_{i=1}^{k}\left(n_{i}-r_{i}\right)T_{i}\exp\left(-\theta_{1}-\frac{\theta_{2}}{T_{i}}-\theta_{3}V_{i}-\frac{\theta_{4}V_{i}}{T_{i}}\right)\tau_{i}^{\beta}\right]\\
 &  & \times\exp\left[-\sum_{i=1}^{k}\sum_{j=1}^{r_{i}}T_{i}\exp\left(-\theta_{1}-\frac{\theta_{2}}{T_{i}}-\theta_{3}V_{i}-\frac{\theta_{4}V_{i}}{T_{i}}\right)x_{ij}^{\beta}\right]
\end{eqnarray*}
\begin{eqnarray*}
\pi_{3}\left(\theta_{4}\left|\underline{x},\theta_{1},\theta_{2},\theta_{3},\beta\right.\right) & \propto & \exp\left(-\theta_{4}\sum_{i=1}^{k}\frac{r_{i}V_{i}}{T_{i}}\right)\exp\left[-\sum_{i=1}^{k}\left(n_{i}-r_{i}\right)T_{i}\exp\left(-\theta_{1}-\frac{\theta_{2}}{T_{i}}-\theta_{3}V_{i}-\frac{\theta_{4}V_{i}}{T_{i}}\right)\tau_{i}^{\beta}\right]\\
 &  & \times\exp\left[-\sum_{i=1}^{k}\sum_{j=1}^{r_{i}}T_{i}\exp\left(-\theta_{1}-\frac{\theta_{2}}{T_{i}}-\theta_{3}V_{i}-\frac{\theta_{4}V_{i}}{T_{i}}\right)x_{ij}^{\beta}\right]
\end{eqnarray*}
\begin{eqnarray*}
\pi_{3}\left(\beta\left|\underline{x},\theta_{1},\theta_{2},\theta_{3},\theta_{4}\right.\right) & \propto & \beta^{c_{28}-1}\exp\left(-c_{29}\beta\right)\beta^{\sum_{i=1}^{k}r_{i}}\\
 &  & \times\exp\left[-\sum_{i=1}^{k}\left(n_{i}-r_{i}\right)T_{i}\exp\left(-\theta_{1}-\frac{\theta_{2}}{T_{i}}-\theta_{3}V_{i}-\frac{\theta_{4}V_{i}}{T_{i}}\right)\tau_{i}^{\beta}\right]\\
 &  & \times\exp\left[-\sum_{i=1}^{k}\sum_{j=1}^{r_{i}}T_{i}\exp\left(-\theta_{1}-\frac{\theta_{2}}{T_{i}}-\theta_{3}V_{i}-\frac{\theta_{4}V_{i}}{T_{i}}\right)x_{ij}^{\beta}\right]\left[\prod_{i=1}^{k}\prod_{j=1}^{r_{i}}x_{ij}^{\beta-1}\right].
\end{eqnarray*}

\subsection{$GEW_{4}$ Model\label{subsec:GEW4(GUUUG)}}

Denote by $GEW_{4}$ another model where a different mixture of uniform
and gamma priors are used. Impose gamma priors on the parameters $\theta_{1},\theta_{2},\theta_{3}$
and $\theta_{4}$,and let $\beta$ have a uniform prior,
\begin{eqnarray*}
\theta_{1}\sim\Gamma(c_{30},c_{31}) & ,c_{30},c_{31}>0 & ,\pi_{4}\left(\theta_{1}\right)\propto\theta_{1}^{c_{30}-1}\exp(-c_{31}\theta_{1})\\
\theta_{2}\sim\Gamma(c_{32},c_{33}) & ,c_{32},c_{33}>0 & ,\pi_{4}\left(\theta_{2}\right)\propto\theta_{2}^{c_{32}-1}\exp(-c_{33}\theta_{2})\\
\theta_{3}\sim\Gamma(c_{34},c_{35}) & ,c_{34},c_{35}>0 & ,\pi_{4}\left(\theta_{3}\right)\propto\theta_{3}^{c_{34}-1}\exp(-c_{35}\theta_{3})\\
\theta_{4}\sim\Gamma(c_{36},c_{37}) & ,c_{36},c_{37}>0 & ,\pi_{4}\left(\theta_{4}\right)\propto\theta_{4}^{c_{36}-1}\exp(-c_{37}\theta_{4})\\
\beta\sim U(c_{38},c_{39}) & ,c_{39}>c_{38} & ,\pi_{4}\left(\beta\right)\propto\textrm{constant}.
\end{eqnarray*}
The joint prior for $GEW_{4}$ is then given by
\begin{equation}
\pi_{4}\left(\theta_{1},\theta_{2},\theta_{3},\theta_{4},\beta\right)\propto\theta_{1}^{c_{30}-1}\theta_{2}^{c_{32}-1}\theta_{3}^{c_{34}-1}\theta_{4}^{c_{36}-1}\exp(-c_{31}\theta_{1}-c_{33}\theta_{2}-c_{35}\theta_{3}-c_{37}\theta_{4}),\label{eq:GEW4_PRIORS}
\end{equation}
and using (\ref{eq:GEW_LIKELIHOOD}) and (\ref{eq:GEW4_PRIORS}) the
joint posterior is given by
\begin{eqnarray*}
\pi_{4}\left(\theta_{1},\theta_{2},\theta_{3},\theta_{4},\beta\left|\underline{x}\right.\right) & \propto & \theta_{1}^{c_{30}-1}\theta_{2}^{c_{32}-1}\theta_{3}^{c_{34}-1}\theta_{4}^{c_{36}-1}\exp\left(-c_{31}\theta_{1}-c_{33}\theta_{2}-c_{35}\theta_{3}-c_{37}\theta_{4}\right)\beta^{\sum_{i=1}^{k}r_{i}}\\
 &  & \times\exp\left(-\theta_{1}\sum_{i=1}^{k}r_{i}-\theta_{2}\sum_{i=1}^{k}\frac{r_{i}}{T_{i}}-\theta_{3}\sum_{i=1}^{k}r_{i}V_{i}-\theta_{4}\sum_{i=1}^{k}\frac{r_{i}V_{i}}{T_{i}}\right)\\
 &  & \times\exp\left[-\sum_{i=1}^{k}\left(n_{i}-r_{i}\right)T_{i}\exp\left(-\theta_{1}-\frac{\theta_{2}}{T_{i}}-\theta_{3}V_{i}-\frac{\theta_{4}V_{i}}{T_{i}}\right)\tau_{i}^{\beta}\right]\\
 &  & \times\exp\left[-\sum_{i=1}^{k}\sum_{j=1}^{r_{i}}T_{i}\exp\left(-\theta_{1}-\frac{\theta_{2}}{T_{i}}-\theta_{3}V_{i}-\frac{\theta_{4}V_{i}}{T_{i}}\right)x_{ij}^{\beta}\right]\left[\prod_{i=1}^{k}\prod_{j=1}^{r_{i}}T_{i}x_{ij}^{\beta-1}\right].
\end{eqnarray*}
The posterior is difficult to work with, consequently MCMC methods
are employed to draw posterior samples for inferences. The full conditional
posteriors for the $GEW_{4}$ model is given by
\begin{eqnarray*}
\pi_{4}\left(\theta_{1}\left|\underline{x},\theta_{2},\theta_{3},\theta_{4},\beta\right.\right) & \propto & \theta_{1}^{c_{30}-1}\exp\left(-c_{31}\theta_{1}\right)\exp(-\theta_{1}\sum_{i=1}^{k}r_{i})\\
 &  & \times\exp\left[-\sum_{i=1}^{k}\left(n_{i}-r_{i}\right)T_{i}\exp\left(-\theta_{1}-\frac{\theta_{2}}{T_{i}}-\theta_{3}V_{i}-\frac{\theta_{4}V_{i}}{T_{i}}\right)\tau_{i}^{\beta}\right]\\
 &  & \times\exp\left[-\sum_{i=1}^{k}\sum_{j=1}^{r_{i}}T_{i}\exp\left(-\theta_{1}-\frac{\theta_{2}}{T_{i}}-\theta_{3}V_{i}-\frac{\theta_{4}V_{i}}{T_{i}}\right)x_{ij}^{\beta}\right]
\end{eqnarray*}
\begin{eqnarray*}
\pi_{4}\left(\theta_{2}\left|\underline{x},\theta_{1},\theta_{3},\theta_{4},\beta\right.\right) & \propto & \theta_{2}^{c_{32}-1}\exp(-c_{33}\theta_{2})\exp\left(-\theta_{2}\sum_{i=1}^{k}\frac{r_{i}}{T_{i}}\right)\\
 &  & \times\exp\left[-\sum_{i=1}^{k}\left(n_{i}-r_{i}\right)T_{i}\exp\left(-\theta_{1}-\frac{\theta_{2}}{T_{i}}-\theta_{3}V_{i}-\frac{\theta_{4}V_{i}}{T_{i}}\right)\tau_{i}^{\beta}\right]\\
 &  & \times\exp\left[-\sum_{i=1}^{k}\sum_{j=1}^{r_{i}}T_{i}\exp\left(-\theta_{1}-\frac{\theta_{2}}{T_{i}}-\theta_{3}V_{i}-\frac{\theta_{4}V_{i}}{T_{i}}\right)x_{ij}^{\beta}\right]
\end{eqnarray*}
\begin{eqnarray*}
\pi_{4}\left(\theta_{3}\left|\underline{x},\theta_{1},\theta_{2},\theta_{4},\beta\right.\right) & \propto & \theta_{3}^{c_{34}-1}\exp(-c_{35}\theta_{3})\exp\left(-\theta_{3}\sum_{i=1}^{k}r_{i}V_{i}\right)\\
 &  & \times\exp\left[-\sum_{i=1}^{k}\left(n_{i}-r_{i}\right)T_{i}\exp\left(-\theta_{1}-\frac{\theta_{2}}{T_{i}}-\theta_{3}V_{i}-\frac{\theta_{4}V_{i}}{T_{i}}\right)\tau_{i}^{\beta}\right]\\
 &  & \times\exp\left[-\sum_{i=1}^{k}\sum_{j=1}^{r_{i}}T_{i}\exp\left(-\theta_{1}-\frac{\theta_{2}}{T_{i}}-\theta_{3}V_{i}-\frac{\theta_{4}V_{i}}{T_{i}}\right)x_{ij}^{\beta}\right]
\end{eqnarray*}
\begin{eqnarray*}
\pi_{4}\left(\theta_{4}\left|\underline{x},\theta_{1},\theta_{2},\theta_{3},\beta\right.\right) & \propto & \theta_{4}^{c_{36}-1}\exp(-c_{37}\theta_{4})\exp\left(-\theta_{4}\sum_{i=1}^{k}\frac{r_{i}V_{i}}{T_{i}}\right)\\
 &  & \times\exp\left[-\sum_{i=1}^{k}\left(n_{i}-r_{i}\right)T_{i}\exp\left(-\theta_{1}-\frac{\theta_{2}}{T_{i}}-\theta_{3}V_{i}-\frac{\theta_{4}V_{i}}{T_{i}}\right)\tau_{i}^{\beta}\right]\\
 &  & \times\exp\left[-\sum_{i=1}^{k}\sum_{j=1}^{r_{i}}T_{i}\exp\left(-\theta_{1}-\frac{\theta_{2}}{T_{i}}-\theta_{3}V_{i}-\frac{\theta_{4}V_{i}}{T_{i}}\right)x_{ij}^{\beta}\right]
\end{eqnarray*}
\begin{eqnarray*}
\pi_{4}\left(\beta\left|\underline{x},\theta_{1},\theta_{2},\theta_{3},\theta_{4}\right.\right) & \propto & \beta^{\sum_{i=1}^{k}r_{i}}\exp\left[-\sum_{i=1}^{k}\left(n_{i}-r_{i}\right)T_{i}\exp\left(-\theta_{1}-\frac{\theta_{2}}{T_{i}}-\theta_{3}V_{i}-\frac{\theta_{4}V_{i}}{T_{i}}\right)\tau_{i}^{\beta}\right]\\
 &  & \times\exp\left[-\sum_{i=1}^{k}\sum_{j=1}^{r_{i}}T_{i}\exp\left(-\theta_{1}-\frac{\theta_{2}}{T_{i}}-\theta_{3}V_{i}-\frac{\theta_{4}V_{i}}{T_{i}}\right)x_{ij}^{\beta}\right]\left[\prod_{i=1}^{k}\prod_{j=1}^{r_{i}}x_{ij}^{\beta-1}\right].
\end{eqnarray*}

\section{Application\label{sec:GEW_APPLICATION}}

An ALT data set in ReliaSoft Corporation (2015) is used for the application
of the GEW model. The data relates to failure times (in hours) obtained
from an electronics epoxy packaging ALT, where temperature and relative
humidity are used as the accelerated stressors. The normal use conditions
are $T_{u}=350\text{K}$ and $S_{u}=0.3$. Table \ref{Flo:GEW_PARMS}
contains the specifications for the priors used in the application.
Flat uniform and gamma priors are imposed on all the parameters for
the $GEW_{1}$, $GEW_{2,1}$, $GEW_{2,2}$, $GEW_{3}$ and $GEW_{4}$
models. Subjective gamma priors, all with mean $5$ but different
variances, are chosen for the parameters of the $GEW_{2,3}$, $GEW_{2,4}$
and $GEW_{2,5}$ models.
\begin{table}[H]
\caption{Prior specifications.}
\textbf{\label{Flo:GEW_PARMS}}
\centering{}\medskip{}
\begin{tabular}{|l|>{\centering}m{2.2cm}|>{\centering}m{2.2cm}|>{\centering}m{2.2cm}|>{\centering}m{2.2cm}|>{\centering}m{2.2cm}|}
\hline 
Model & $\theta_{1}$ & $\theta_{2}$ & $\theta_{3}$ & \textbf{$\theta_{4}$} & \textbf{$\beta$}\tabularnewline
\hline 
$GEW_{1}$ & $U(0,100)$ & $U(0,100)$ & $U(0,100)$ & $U(0,100)$ & $U(0,100)$\tabularnewline
$GEW_{2,1}$ & $\varGamma(1,0.00001)$ & $\varGamma(1,0.00001)$ & $\varGamma(1,0.00001)$ & $\varGamma(1,0.00001)$ & $\varGamma(1,0.00001)$\tabularnewline
$GEW_{2,2}$ & $\varGamma(1,0.001)$ & $\varGamma(1,0.001)$ & $\varGamma(1,0.001)$ & $\varGamma(1,0.001)$ & $\varGamma(1,0.001)$\tabularnewline
$GEW_{2,3}$ & $\varGamma(2.5,0.5)$ & $\varGamma(2.5,0.5)$ & $\varGamma(2.5,0.5)$ & $\varGamma(2.5,0.5)$ & $\varGamma(2.5,0.5)$\tabularnewline
$GEW_{2,4}$ & $\varGamma(5,1)$ & $\varGamma(5,1)$ & $\varGamma(5,1)$ & $\varGamma(5,1)$ & $\varGamma(5,1)$\tabularnewline
$GEW_{2,5}$ & $\varGamma(25,5)$ & $\varGamma(25,5)$ & $\varGamma(25,5)$ & $\varGamma(25,5)$ & $\varGamma(25,5)$\tabularnewline
$GEW_{3}$ & $U(0,100)$ & $U(0,100)$ & $U(0,100)$ & $U(0,100)$ & $\varGamma(1,0.001)$\tabularnewline
$GEW_{4}$ & $\varGamma(1,0.001)$ & $\varGamma(1,0.001)$ & $\varGamma(1,0.001)$ & $\varGamma(1,0.001)$ & $U(0,100)$\tabularnewline
\hline 
\end{tabular}
\end{table}
The log-concavity of the full conditional posteriors of the GEW models
is evaluated and given in the appendix. It is shown that $\pi_{z}\left(\theta_{1}\left|\underline{x},\theta_{2},\theta_{3},\theta_{4},\beta\right.\right)$,
$\pi_{z}\left(\theta_{2}\left|\underline{x},\theta_{1},\theta_{3},\theta_{4},\beta\right.\right)$,
$\pi_{z}\left(\theta_{3}\left|\underline{x},\theta_{1},\theta_{2},\theta_{4},\beta\right.\right)$,
$\pi_{z}\left(\theta_{4}\left|\underline{x},\theta_{1},\theta_{2},\theta_{3},\beta\right.\right)$,
and $\pi_{z}\left(\beta\left|\underline{x},\theta_{1},\theta_{2},\theta_{3},\theta_{4}\right.\right)$
for $z=1,2,3,4$ are all log-concave, subject to the conditions $c_{10},c_{12},c_{14},c_{16},c_{30},c_{32},c_{34},c_{36}\geq1$,
and at least one failure occurring.Under these conditions, it is possible
to use the adaptive rejection sampling (ARS) method of Gilks \& Wild
(1992) to sample from the full conditional posteriors at each iteration
of the Gibbs sampler. Alternative conditions for log-concavity can
also be formulated, but these conditions are difficult to implement.
Slice sampling, introduced by Neal (2003), can also be utilised.

Posterior samples are generated for the models using the Bayesian
data analysis software OpenBUGS. A single Markov chain is initiated
for each model with a burn-in of 50000 iterations. The modified Gelman-Rubin
statistic, proposed by Brooks \& Gelman (1998), and trace plots are
used to confirm that the Markov chains for the above models all converge
well in advance of 50000 iterations. Each chain is then run for another
200000 iterations to obtain posterior samples to base inference on.

The deviance information criterion (DIC), proposed by Spiegelhalter
et al. (2002), is the most popular measure that is used to compare
various models in Bayesian ALTs. The DIC not only takes into consideration
the goodness-of-fit for the model, but also penalizes the model for
complexity in terms of overparameterisation. The models with the smaller
DIC values will typically be favoured above models with larger values
for the DIC, but there are other considerations that also need to
be taken into account. For a parameter vector $\bm{\theta}$, with
likelihood function $L\left(\text{\ensuremath{\underline{x}}}\left|\bm{\theta}\right.\right)$,
the deviance can be defined as $D\left(\bm{\theta}\right)=-2\ln\left[L\left(\text{\ensuremath{\underline{x}}}\left|\bm{\theta}\right.\right)\right]$.
Let $\overline{D}$ be the posterior mean of the deviance, and $\hat{D}\left(\bm{\overline{\theta}}\right)=-2\ln\left[L\left(\text{\ensuremath{\underline{x}}}\left|\bm{\overline{\theta}}\right.\right)\right]$,
with $\bm{\overline{\theta}}$ being the posterior mean of $\bm{\theta}$,
be a point estimate for the deviance. The DIC can then be calculated
as $\textrm{DIC}=\overline{D}+p_{D}$, where $p_{D}$ is the effective
number of parameters given by $p_{D}=\overline{D}-\hat{D}\left(\bm{\overline{\theta}}\right)$.

The DIC values and the effective number of parameters for the various
GEW models are given in Table \ref{Flo:GEW_DIC}. The $GEW_{1}$,
$GEW_{2,1}$, $GEW_{2,2}$, $GEW_{3}$ and $GEW_{4}$ models have
very similar DIC values, with the $GEW_{2,1}$ and $GEW_{4}$ models
exhibiting the lowest DIC. The $GEW_{2,3}$, $GEW_{2,4}$ and $GEW_{2,5}$
models show an increasingly worse fit to the data as subjective priors
with smaller variances are implemented. This may be due to the posterior
being dominated by the prior, when the prior variance is very small.
\begin{table}[H]
\caption{Deviance information criterion.}
\textbf{\label{Flo:GEW_DIC}}
\begin{centering}
\medskip{}
\par\end{centering}
\centering{}%
\begin{tabular}{|>{\centering}p{2cm}|>{\centering}m{2cm}|>{\centering}m{2cm}|}
\hline 
Model & DIC & $p_{D}$\tabularnewline
\hline 
$GEW_{1}$ & 228.4 & 2.038\tabularnewline
$GEW_{2,1}$ & 228.2 & 2.033\tabularnewline
$GEW_{2,2}$ & 228.4 & 2.061\tabularnewline
$GEW_{2,3}$ & 231.8 & 1.890\tabularnewline
$GEW_{2,4}$ & 236.0 & 1.705\tabularnewline
$GEW_{2,5}$ & 251.8 & 1.237\tabularnewline
$GEW_{3}$ & 228.3 & 2.048\tabularnewline
$GEW_{4}$ & 228.2 & 2.029\tabularnewline
\hline 
\end{tabular}
\end{table}

The summary statistics for the marginal posterior distributions of
the GEW models are provided in Table \ref{Flo:GEW_STATS}. For the
models with flat priors, that is $GEW_{1}$, $GEW_{2,1}$, $GEW_{2,2}$,
$GEW_{3}$ and $GEW_{4}$, very similar summary statistics are produced.
When subjective priors are employed, one can note that the central
location measures of the marginal posteriors are progressively, as
more certainty is given by the prior, pulled towards the central location
of the prior. As the variances are reduced between the priors imposed
on the $GEW_{2,3}$, $GEW_{2,4}$ and $GEW_{2,5}$ models, the posterior
is dominated to a greater extent by the prior.
\begin{table}[H]
\caption{Summary statistics for the GEW models.}
\textbf{\label{Flo:GEW_STATS}}
\begin{centering}
\medskip{}
\par\end{centering}
\centering{}%
\begin{tabular}{|c|c|>{\centering}m{1.6cm}|>{\centering}m{1.6cm}|>{\centering}m{1.6cm}|>{\centering}m{1.6cm}|>{\centering}m{1.6cm}|}
\hline 
{\footnotesize{}Model} & {\footnotesize{}Parameter} & {\footnotesize{}Mean} & {\footnotesize{}Standard Deviation} & {\footnotesize{}2.5th Percentile} & {\footnotesize{}Median} & {\footnotesize{}97.5th Percentile}\tabularnewline
\hline 
 & {\footnotesize{}$\theta_{1}$} & {\footnotesize{}3.6597} & {\footnotesize{}2.8837} & {\footnotesize{}0.1155} & {\footnotesize{}2.9560} & {\footnotesize{}10.3400}\tabularnewline
 & {\footnotesize{}$\theta_{2}$} & {\footnotesize{}7.2650} & {\footnotesize{}3.4488} & {\footnotesize{}0.6997} & {\footnotesize{}7.4680} & {\footnotesize{}13.6698}\tabularnewline
{\footnotesize{}$GEW_{1}$} & {\footnotesize{}$\theta_{3}$} & {\footnotesize{}0.5989} & {\footnotesize{}0.5781} & {\footnotesize{}0.0157} & {\footnotesize{}0.4247} & {\footnotesize{}2.1460}\tabularnewline
 & {\footnotesize{}$\theta_{4}$} & {\footnotesize{}0.6053} & {\footnotesize{}0.5802} & {\footnotesize{}0.0157} & {\footnotesize{}0.4329} & {\footnotesize{}2.1390}\tabularnewline
 & {\footnotesize{}$\beta$} & {\footnotesize{}1.9474} & {\footnotesize{}0.3621} & {\footnotesize{}1.2880} & {\footnotesize{}1.9300} & {\footnotesize{}2.7100}\tabularnewline
\hline 
 & {\footnotesize{}$\theta_{1}$} & {\footnotesize{}3.1156} & {\footnotesize{}2.5264} & {\footnotesize{}0.0989} & {\footnotesize{}2.5000} & {\footnotesize{}9.2220}\tabularnewline
 & {\footnotesize{}$\theta_{2}$} & {\footnotesize{}7.9024} & {\footnotesize{}3.1888} & {\footnotesize{}1.2710} & {\footnotesize{}8.0900} & {\footnotesize{}13.7800}\tabularnewline
{\footnotesize{}$GEW_{2,1}$} & {\footnotesize{}$\theta_{3}$} & {\footnotesize{}0.5940} & {\footnotesize{}0.5727} & {\footnotesize{}0.0156} & {\footnotesize{}0.4235} & {\footnotesize{}2.1210}\tabularnewline
 & {\footnotesize{}$\theta_{4}$} & {\footnotesize{}0.6062} & {\footnotesize{}0.5804} & {\footnotesize{}0.0160} & {\footnotesize{}0.4326} & {\footnotesize{}2.1520}\tabularnewline
 & {\footnotesize{}$\beta$} & {\footnotesize{}1.9659} & {\footnotesize{}0.3584} & {\footnotesize{}1.3250} & {\footnotesize{}1.9440} & {\footnotesize{}2.7200}\tabularnewline
\hline 
 & {\footnotesize{}$\theta_{1}$} & {\footnotesize{}3.3681} & {\footnotesize{}2.7709} & {\footnotesize{}0.1046} & {\footnotesize{}2.6440} & {\footnotesize{}10.0200}\tabularnewline
 & {\footnotesize{}$\theta_{2}$} & {\footnotesize{}7.7020} & {\footnotesize{}3.3645} & {\footnotesize{}0.8950} & {\footnotesize{}7.8670} & {\footnotesize{}13.8800}\tabularnewline
{\footnotesize{}$GEW_{2,2}$} & {\footnotesize{}$\theta_{3}$} & {\footnotesize{}0.5870} & {\footnotesize{}0.5664} & {\footnotesize{}0.0156} & {\footnotesize{}0.4165} & {\footnotesize{}2.1010}\tabularnewline
 & {\footnotesize{}$\theta_{4}$} & {\footnotesize{}0.5973} & {\footnotesize{}0.5698} & {\footnotesize{}0.0156} & {\footnotesize{}0.4270} & {\footnotesize{}2.1090}\tabularnewline
 & {\footnotesize{}$\beta$} & {\footnotesize{}1.9722} & {\footnotesize{}0.3571} & {\footnotesize{}1.3280} & {\footnotesize{}1.9570} & {\footnotesize{}2.7190}\tabularnewline
\hline 
 & {\footnotesize{}$\theta_{1}$} & {\footnotesize{}3.8210} & {\footnotesize{}1.9465} & {\footnotesize{}0.7810} & {\footnotesize{}3.5780} & {\footnotesize{}8.1060}\tabularnewline
 & {\footnotesize{}$\theta_{2}$} & {\footnotesize{}5.7987} & {\footnotesize{}2.3221} & {\footnotesize{}1.5860} & {\footnotesize{}5.7200} & {\footnotesize{}10.4900}\tabularnewline
{\footnotesize{}$GEW_{2,3}$} & {\footnotesize{}$\theta_{3}$} & {\footnotesize{}1.1275} & {\footnotesize{}0.6856} & {\footnotesize{}0.1944} & {\footnotesize{}0.9976} & {\footnotesize{}2.8000}\tabularnewline
 & {\footnotesize{}$\theta_{4}$} & {\footnotesize{}1.1457} & {\footnotesize{}0.6974} & {\footnotesize{}0.2005} & {\footnotesize{}1.0110} & {\footnotesize{}2.8460}\tabularnewline
 & {\footnotesize{}$\beta$} & {\footnotesize{}1.7896} & {\footnotesize{}0.3055} & {\footnotesize{}1.2260} & {\footnotesize{}1.7780} & {\footnotesize{}2.4260}\tabularnewline
\hline 
 & {\footnotesize{}$\theta_{1}$} & {\footnotesize{}4.2772} & {\footnotesize{}1.5957} & {\footnotesize{}1.5950} & {\footnotesize{}4.1380} & {\footnotesize{}7.7670}\tabularnewline
 & {\footnotesize{}$\theta_{2}$} & {\footnotesize{}5.3606} & {\footnotesize{}1.8390} & {\footnotesize{}2.1520} & {\footnotesize{}5.2430} & {\footnotesize{}9.2860}\tabularnewline
{\footnotesize{}$GEW_{2,4}$} & {\footnotesize{}$\theta_{3}$} & {\footnotesize{}1.7371} & {\footnotesize{}0.7514} & {\footnotesize{}0.5791} & {\footnotesize{}1.6330} & {\footnotesize{}3.4850}\tabularnewline
 & {\footnotesize{}$\theta_{4}$} & {\footnotesize{}1.7620} & {\footnotesize{}0.7624} & {\footnotesize{}0.5882} & {\footnotesize{}1.6570} & {\footnotesize{}3.5270}\tabularnewline
 & {\footnotesize{}$\beta$} & {\footnotesize{}1.8549} & {\footnotesize{}0.2821} & {\footnotesize{}1.3450} & {\footnotesize{}1.8410} & {\footnotesize{}2.4490}\tabularnewline
\hline 
 & {\footnotesize{}$\theta_{1}$} & {\footnotesize{}5.1327} & {\footnotesize{}0.9165} & {\footnotesize{}3.4680} & {\footnotesize{}5.0930} & {\footnotesize{}7.0610}\tabularnewline
 & {\footnotesize{}$\theta_{2}$} & {\footnotesize{}5.4545} & {\footnotesize{}0.9607} & {\footnotesize{}3.7000} & {\footnotesize{}5.4120} & {\footnotesize{}7.4530}\tabularnewline
{\footnotesize{}$GEW_{2,5}$} & {\footnotesize{}$\theta_{3}$} & {\footnotesize{}3.5383} & {\footnotesize{}0.6988} & {\footnotesize{}2.3040} & {\footnotesize{}3.4920} & {\footnotesize{}5.0300}\tabularnewline
 & {\footnotesize{}$\theta_{4}$} & {\footnotesize{}3.5577} & {\footnotesize{}0.7038} & {\footnotesize{}2.3180} & {\footnotesize{}3.5100} & {\footnotesize{}5.0630}\tabularnewline
 & {\footnotesize{}$\beta$} & {\footnotesize{}2.1894} & {\footnotesize{}0.1884} & {\footnotesize{}1.8310} & {\footnotesize{}2.1860} & {\footnotesize{}2.5710}\tabularnewline
\hline 
 & {\footnotesize{}$\theta_{1}$} & {\footnotesize{}3.2365} & {\footnotesize{}2.5657} & {\footnotesize{}0.1039} & {\footnotesize{}2.6390} & {\footnotesize{}9.3070}\tabularnewline
 & {\footnotesize{}$\theta_{2}$} & {\footnotesize{}7.8247} & {\footnotesize{}3.3255} & {\footnotesize{}1.1760} & {\footnotesize{}7.9700} & {\footnotesize{}14.1600}\tabularnewline
{\footnotesize{}$GEW_{3}$} & {\footnotesize{}$\theta_{3}$} & {\footnotesize{}0.5883} & {\footnotesize{}0.5689} & {\footnotesize{}0.0157} & {\footnotesize{}0.4169} & {\footnotesize{}2.1120}\tabularnewline
 & {\footnotesize{}$\theta_{4}$} & {\footnotesize{}0.5999} & {\footnotesize{}0.5744} & {\footnotesize{}0.0158} & {\footnotesize{}0.4287} & {\footnotesize{}2.1280}\tabularnewline
 & {\footnotesize{}$\beta$} & {\footnotesize{}1.9717} & {\footnotesize{}0.3619} & {\footnotesize{}1.3290} & {\footnotesize{}1.9500} & {\footnotesize{}2.7430}\tabularnewline
\hline 
 & {\footnotesize{}$\theta_{1}$} & {\footnotesize{}3.3114} & {\footnotesize{}2.6794} & {\footnotesize{}0.1083} & {\footnotesize{}2.6300} & {\footnotesize{}9.7870}\tabularnewline
 & {\footnotesize{}$\theta_{2}$} & {\footnotesize{}7.6783} & {\footnotesize{}3.3530} & {\footnotesize{}0.9724} & {\footnotesize{}7.8380} & {\footnotesize{}13.9900}\tabularnewline
{\footnotesize{}$GEW_{4}$} & {\footnotesize{}$\theta_{3}$} & {\footnotesize{}0.5929} & {\footnotesize{}0.5674} & {\footnotesize{}0.0155} & {\footnotesize{}0.4246} & {\footnotesize{}2.0970}\tabularnewline
 & {\footnotesize{}$\theta_{4}$} & {\footnotesize{}0.6064} & {\footnotesize{}0.5846} & {\footnotesize{}0.0161} & {\footnotesize{}0.4325} & {\footnotesize{}2.1690}\tabularnewline
 & {\footnotesize{}$\beta$} & {\footnotesize{}1.9600} & {\footnotesize{}0.3654} & {\footnotesize{}1.3000} & {\footnotesize{}1.9370} & {\footnotesize{}2.7410}\tabularnewline
\hline 
\end{tabular}
\end{table}

The marginal posterior distributions for the GEW models are shown
in Figure \ref{Flo:GEW_MARGINAL}. Again, it can be observed that
the $GEW_{1}$, $GEW_{2,1}$, $GEW_{2,2}$, $GEW_{3}$ and $GEW_{4}$
models produce very similar marginal posteriors. The marginal posteriors
of the $GEW_{2,3}$, $GEW_{2,4}$ and $GEW_{2,5}$ models show how
the density is increasingly concentrated towards the central location
of the priors as more prior certainty is conveyed by means of smaller
prior variances.

\begin{sidewaysfigure}[H]
\includegraphics[width=4.8cm,height=2.2cm]{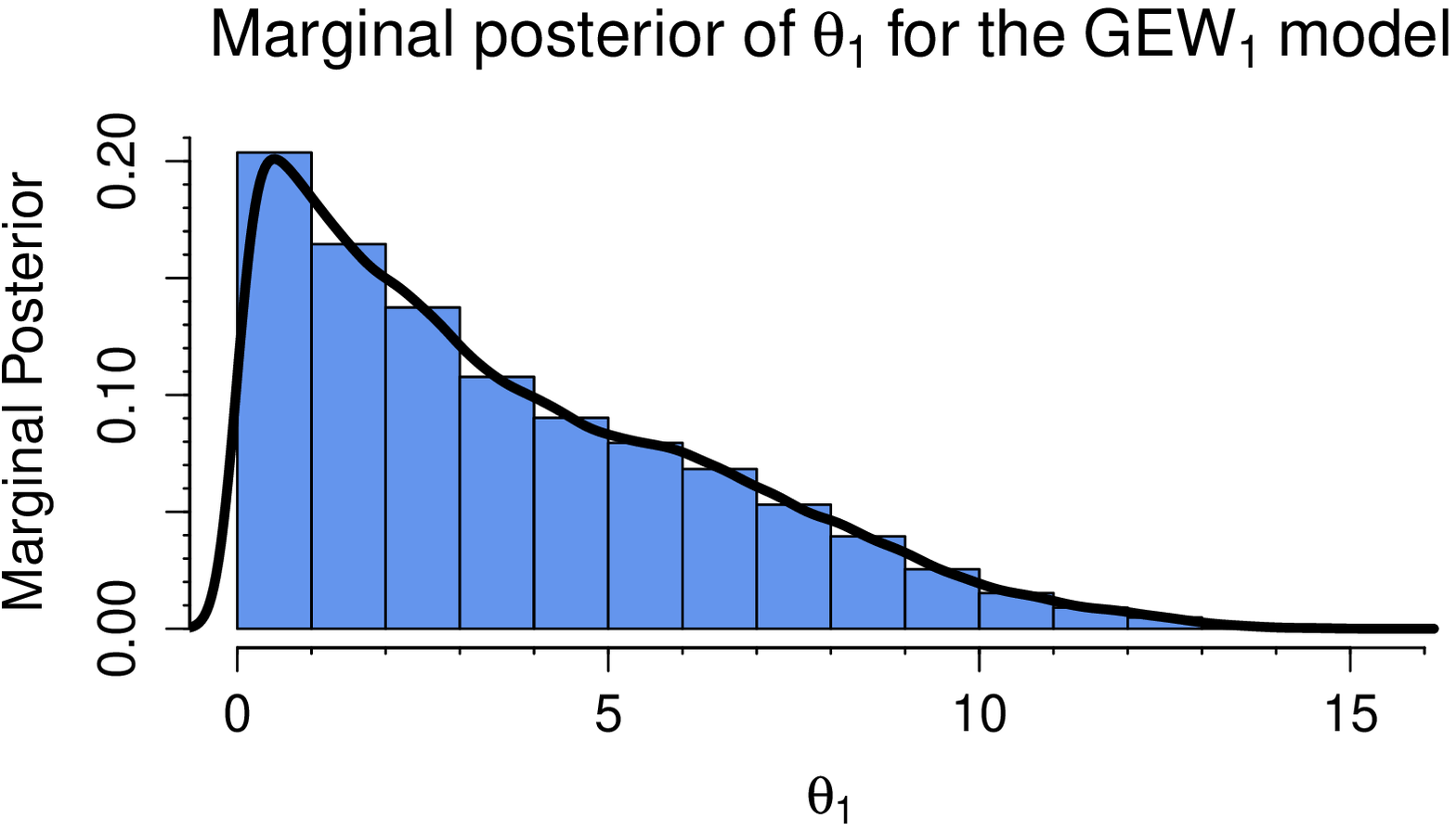}\includegraphics[width=4.8cm,height=2.2cm]{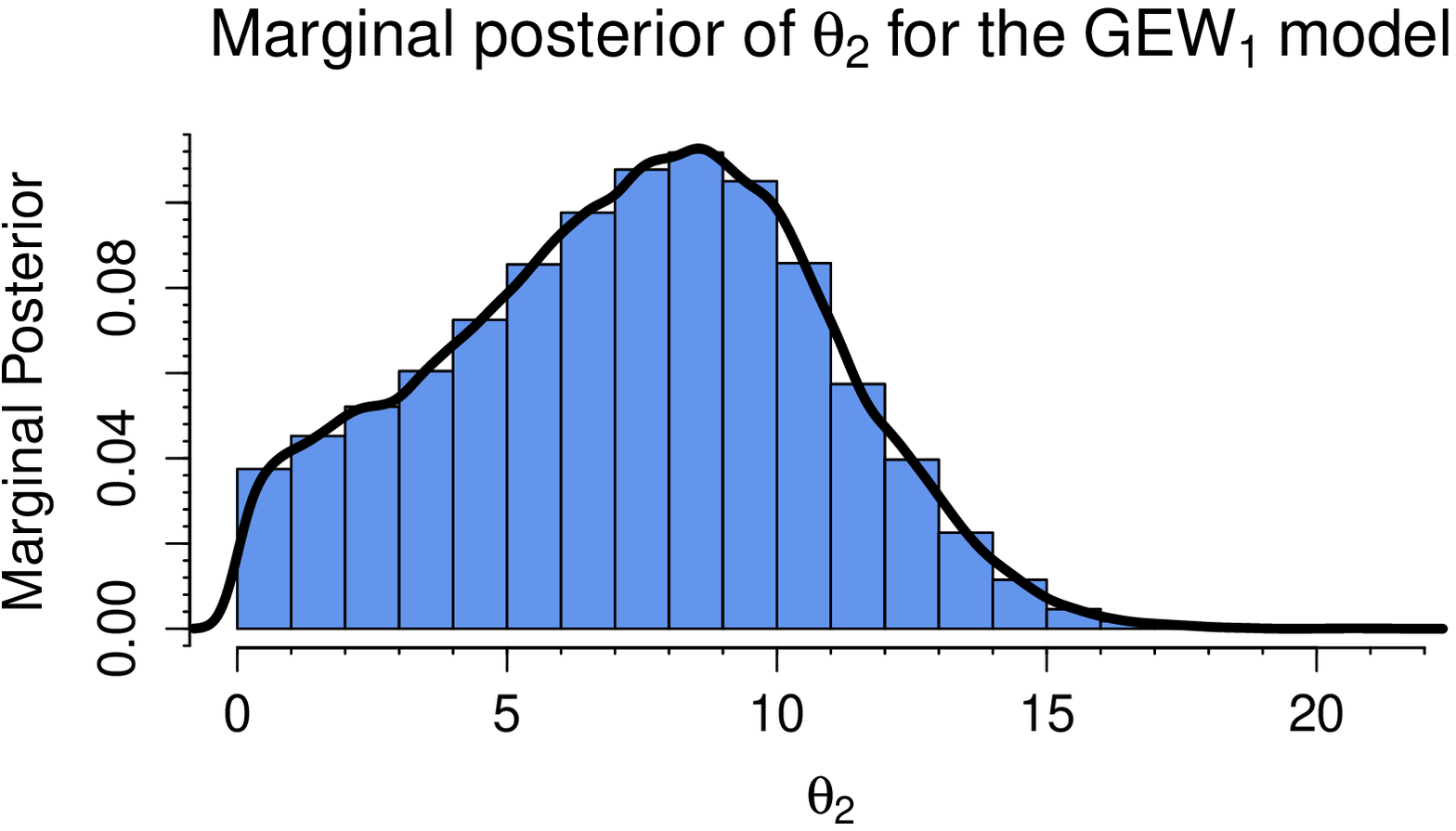}\includegraphics[width=4.8cm,height=2.2cm]{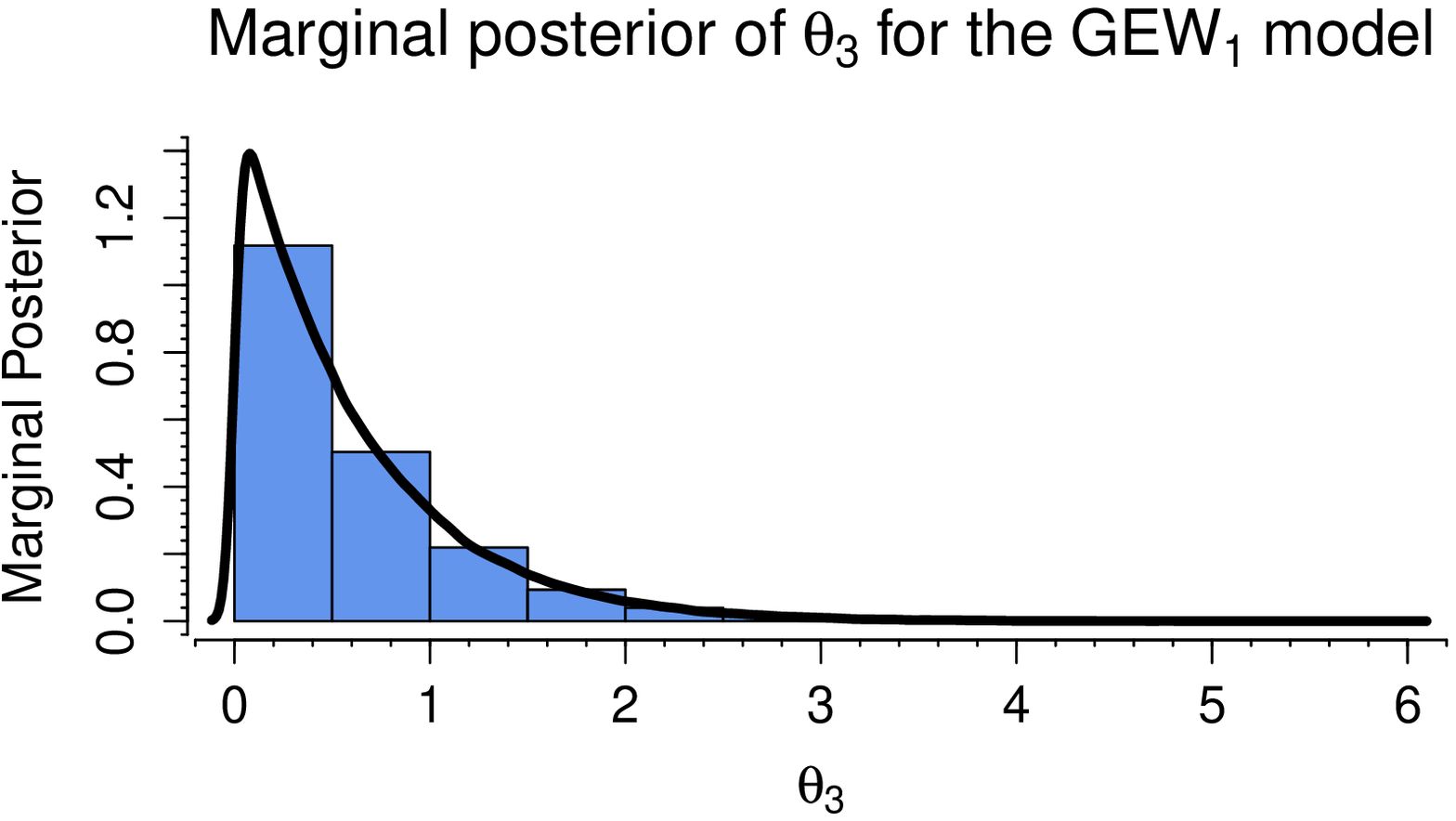}\includegraphics[width=4.8cm,height=2.2cm]{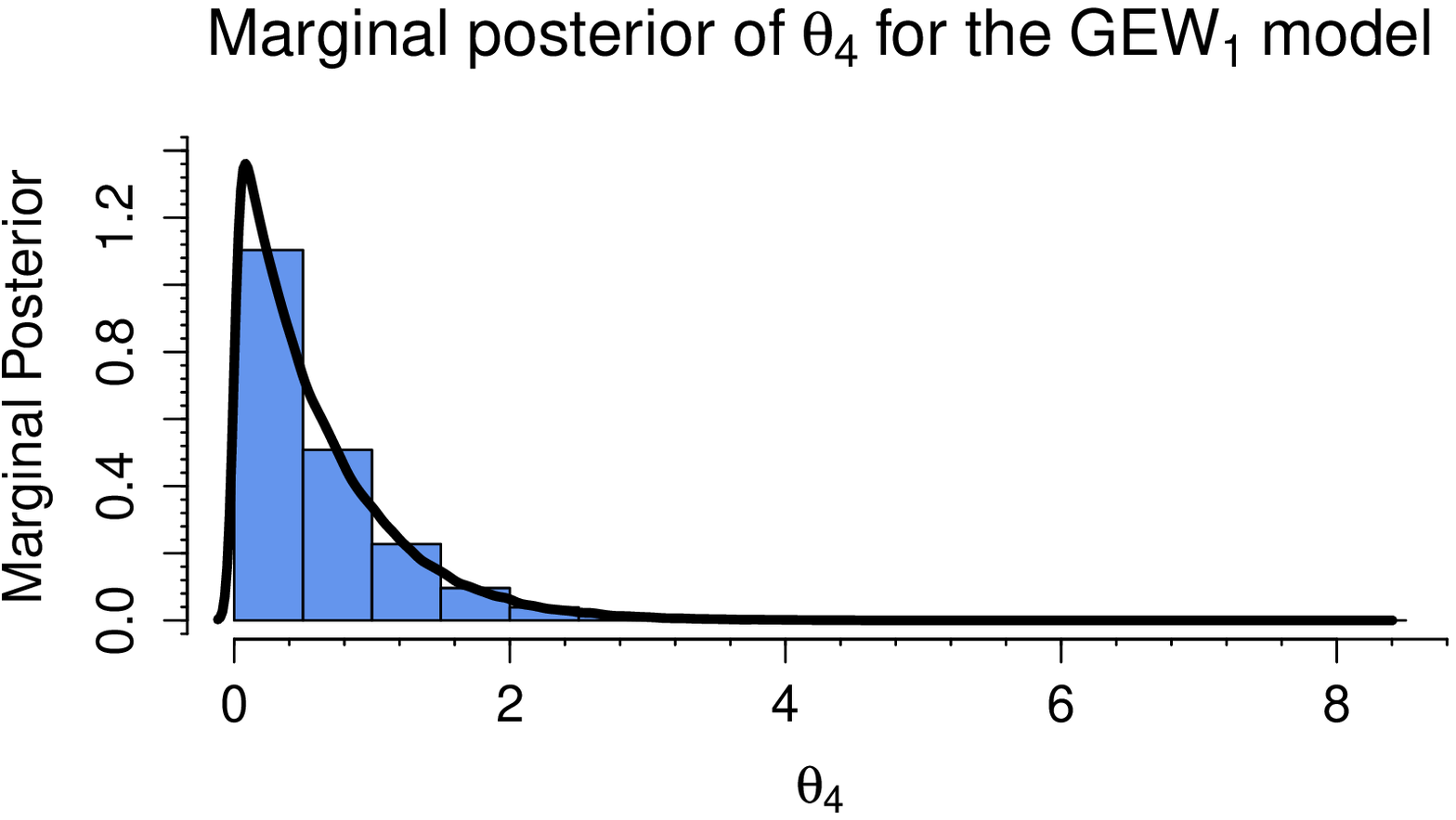}\includegraphics[width=4.8cm,height=2.2cm]{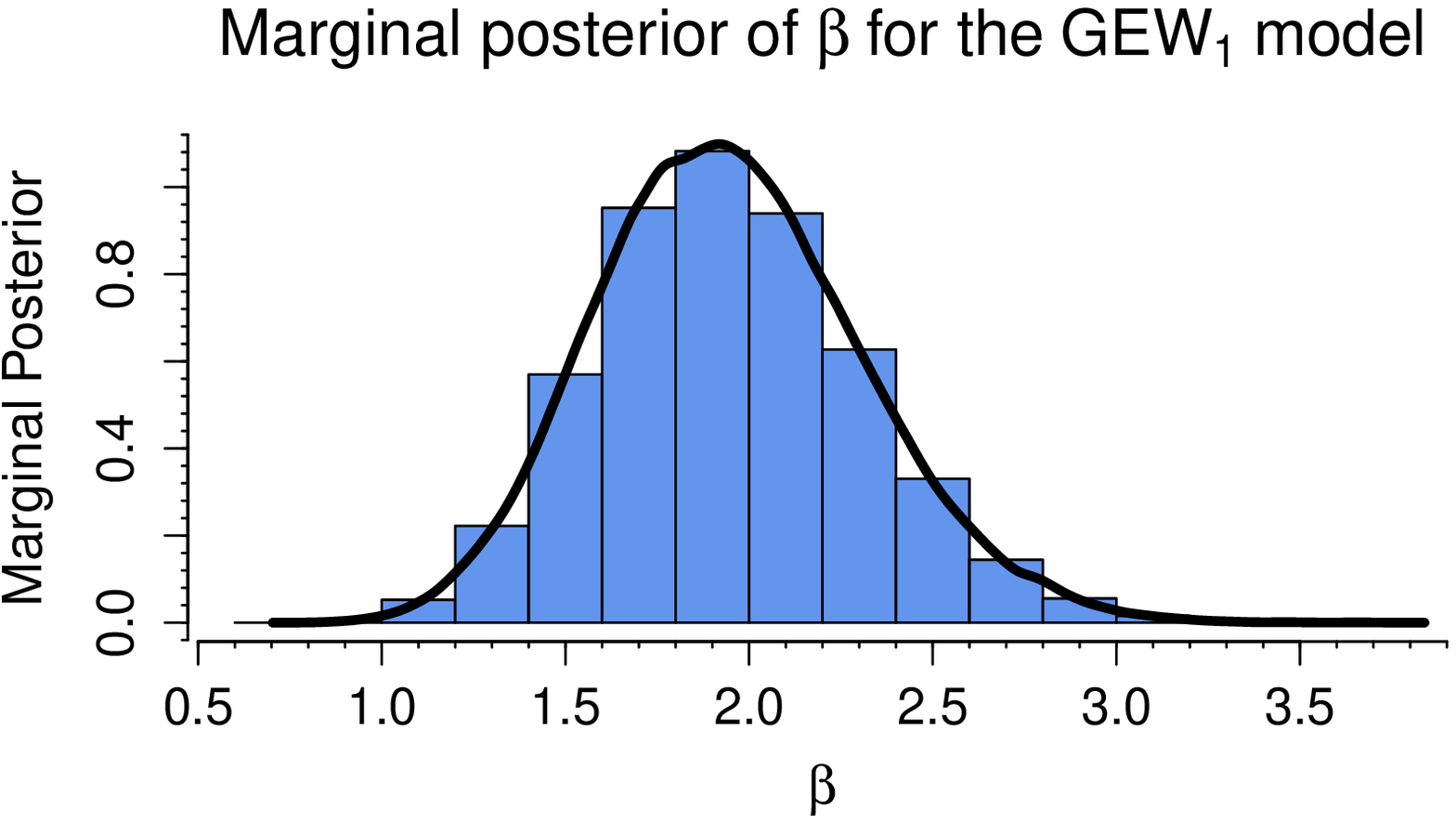}

\includegraphics[width=4.8cm,height=2.2cm]{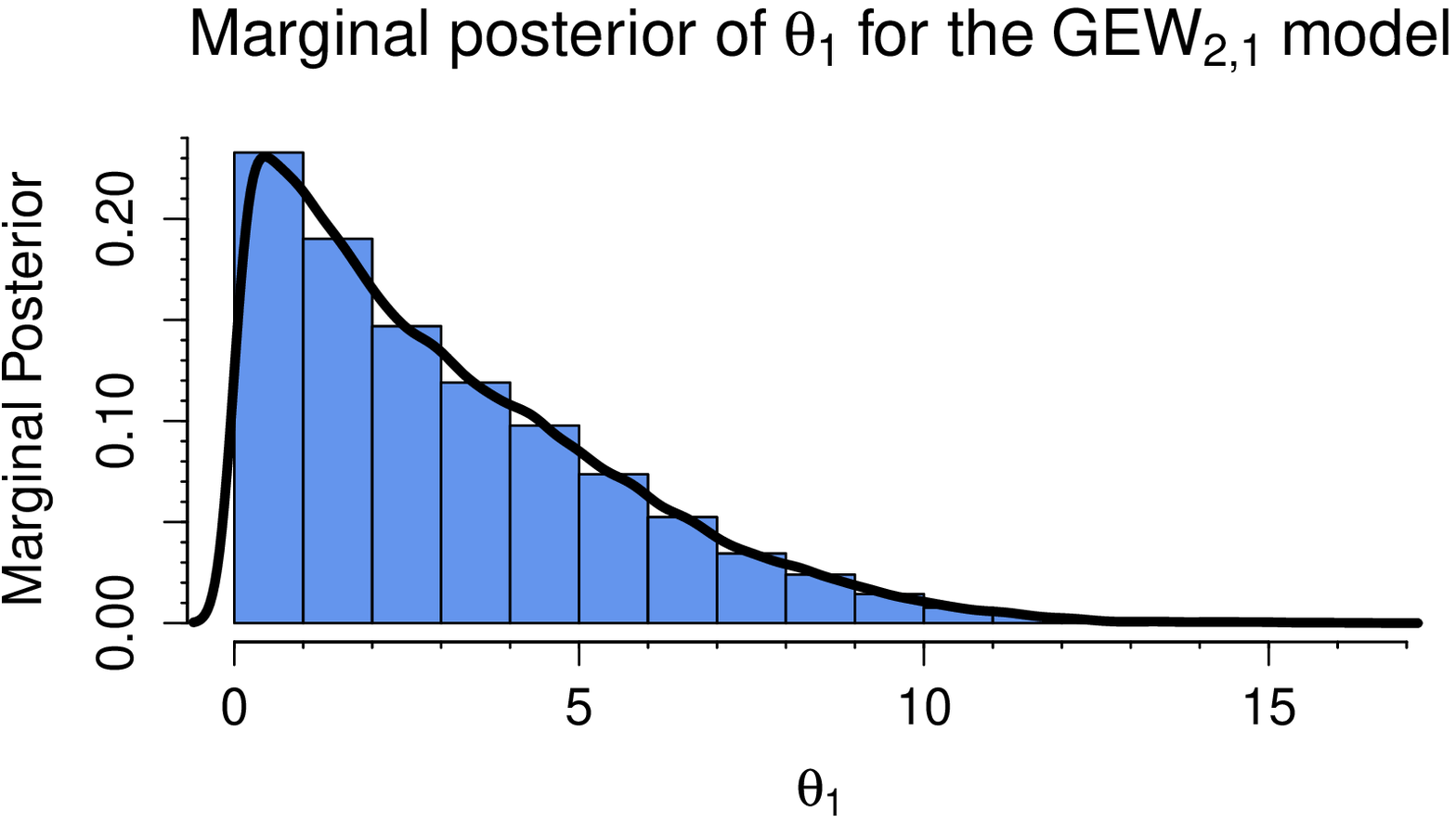}\includegraphics[width=4.8cm,height=2.2cm]{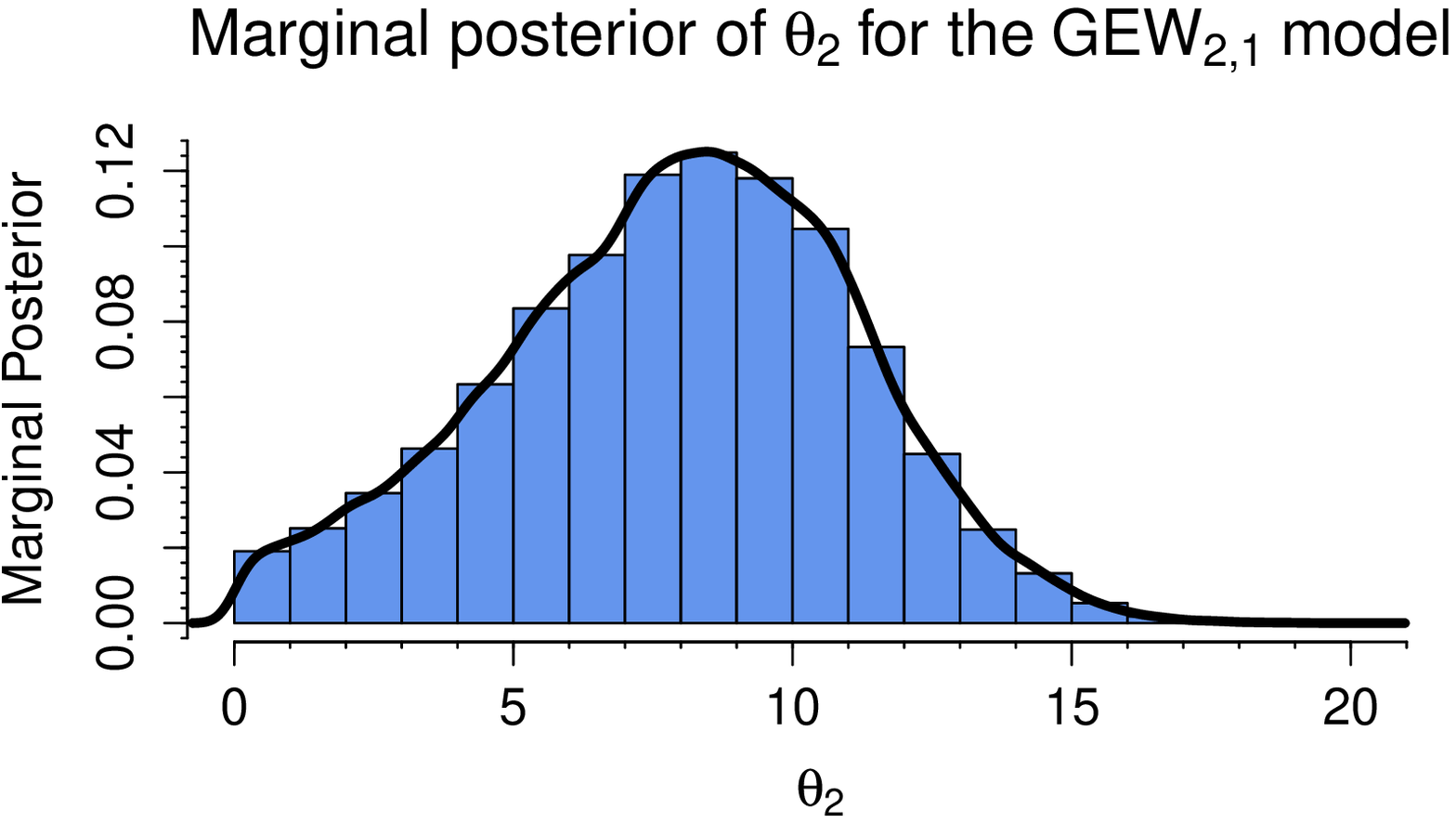}\includegraphics[width=4.8cm,height=2.2cm]{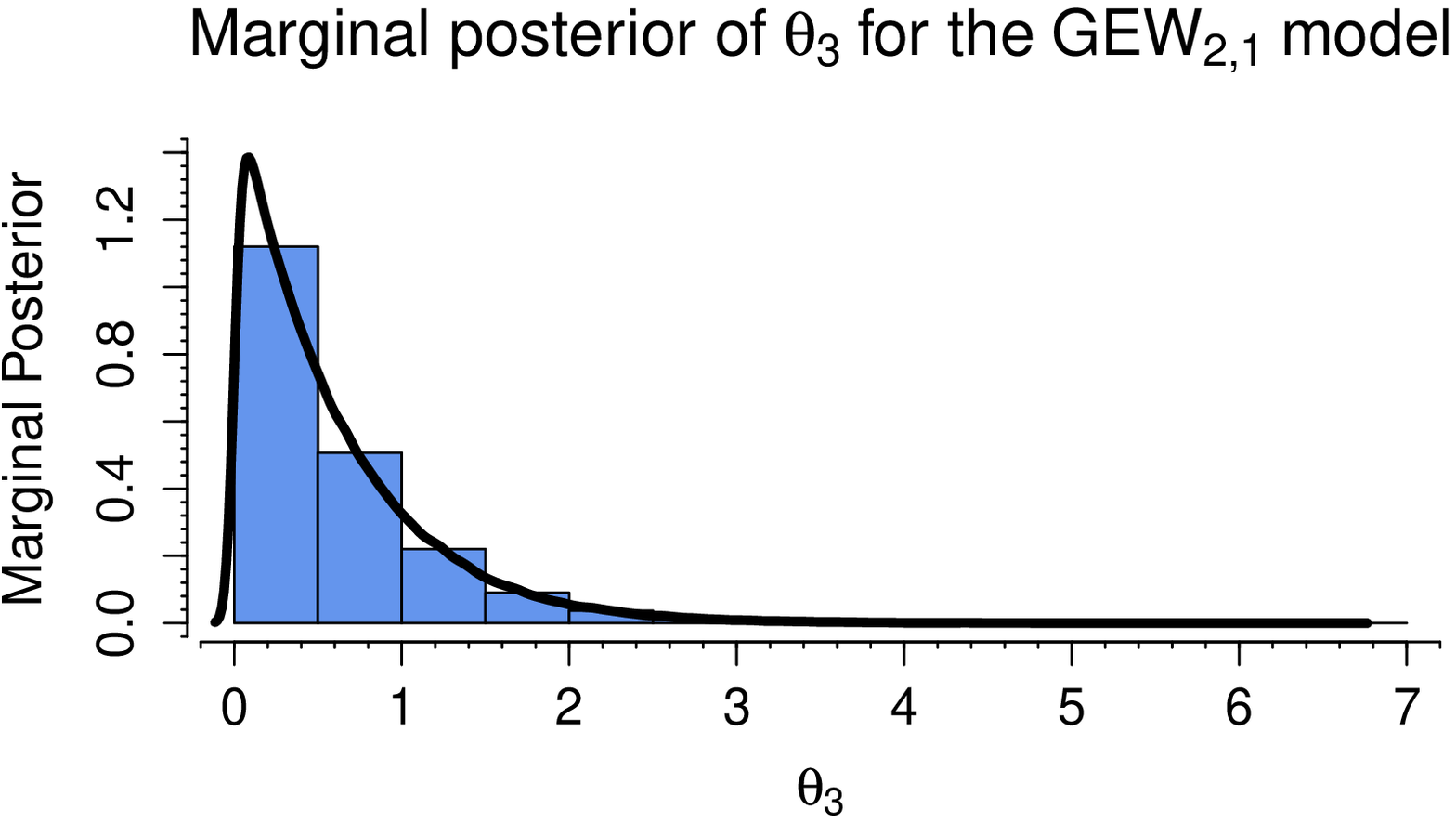}\includegraphics[width=4.8cm,height=2.2cm]{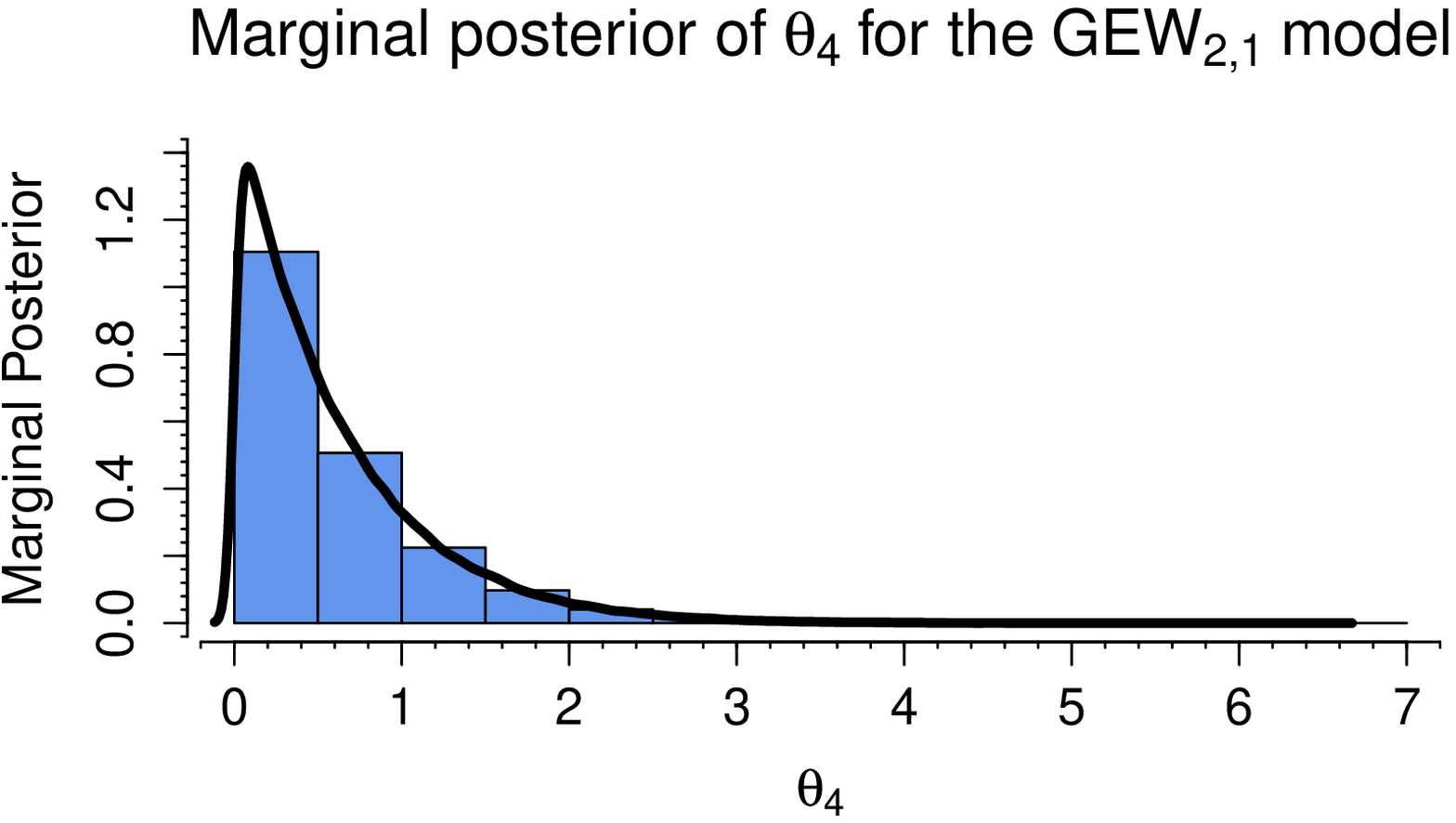}\includegraphics[width=4.8cm,height=2.2cm]{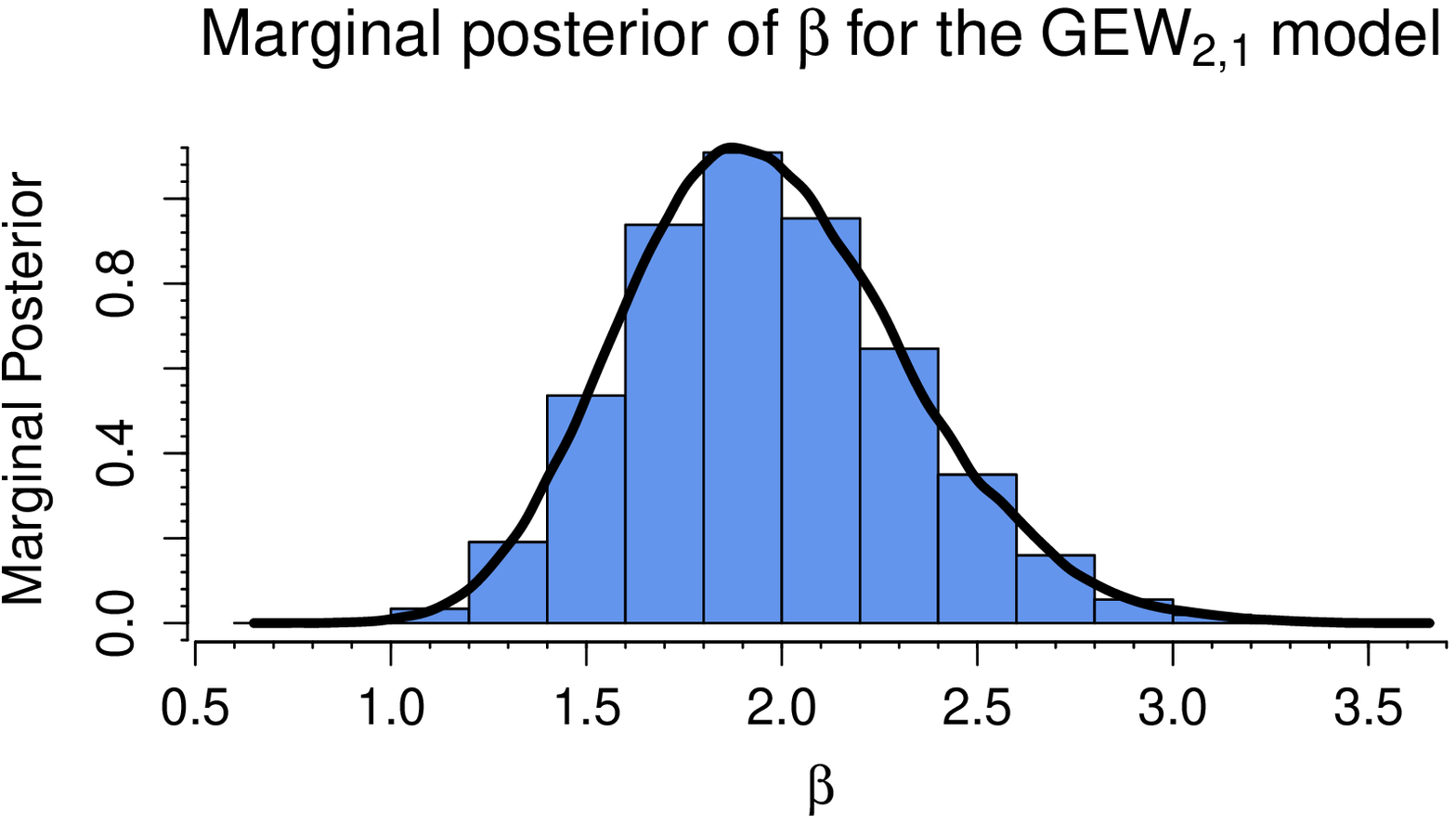}

\includegraphics[width=4.8cm,height=2.2cm]{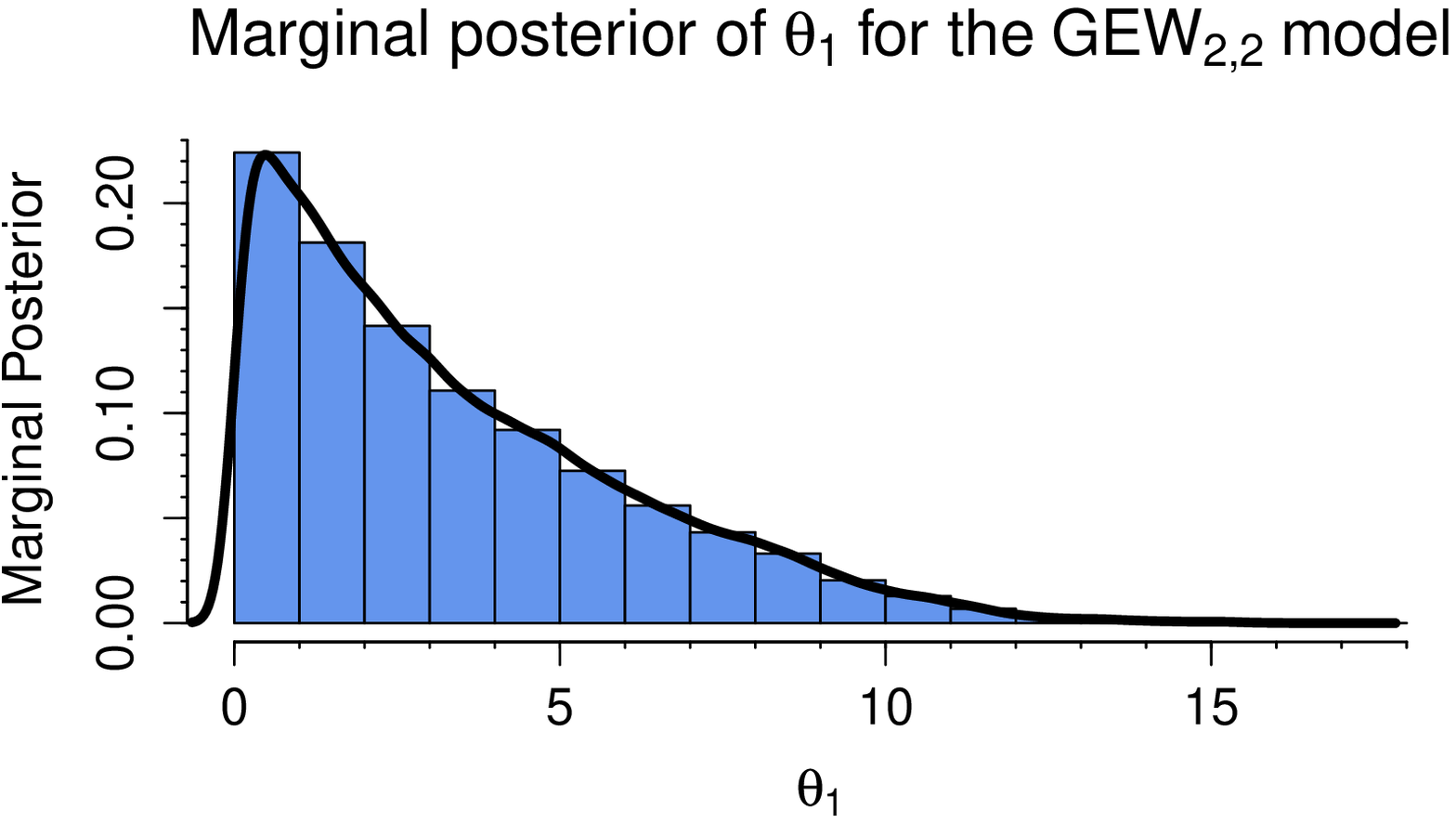}\includegraphics[width=4.8cm,height=2.2cm]{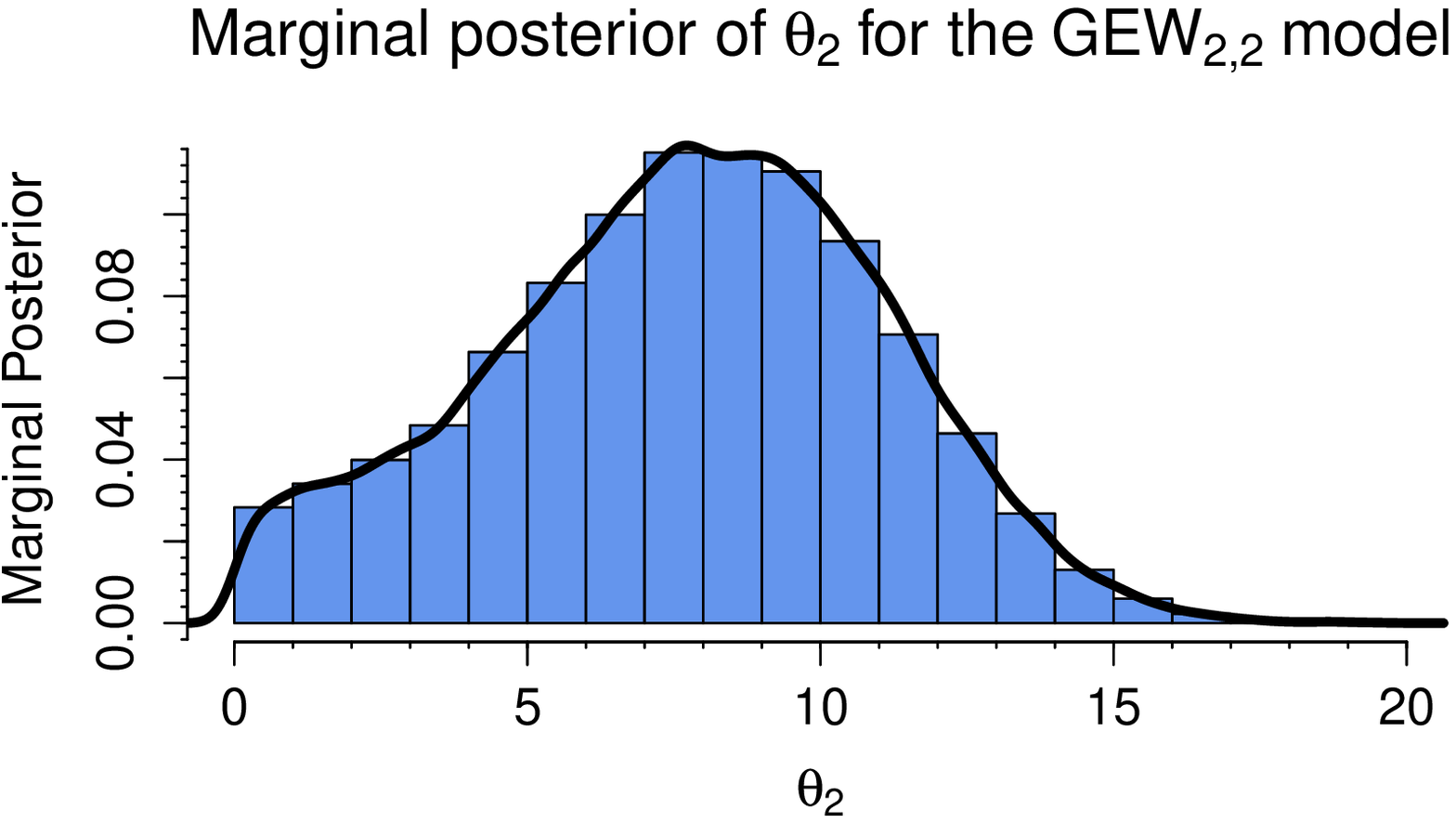}\includegraphics[width=4.8cm,height=2.2cm]{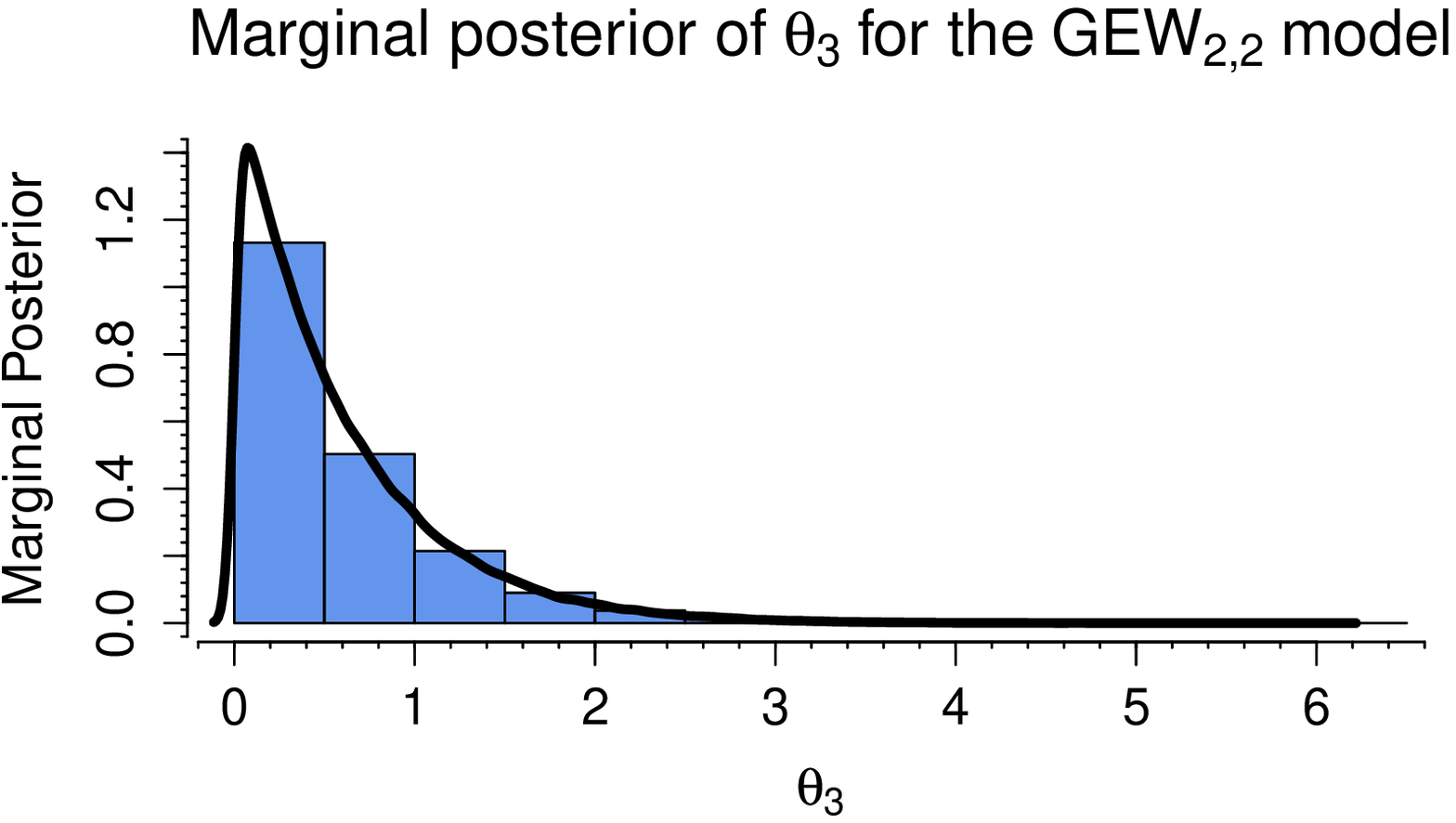}\includegraphics[width=4.8cm,height=2.2cm]{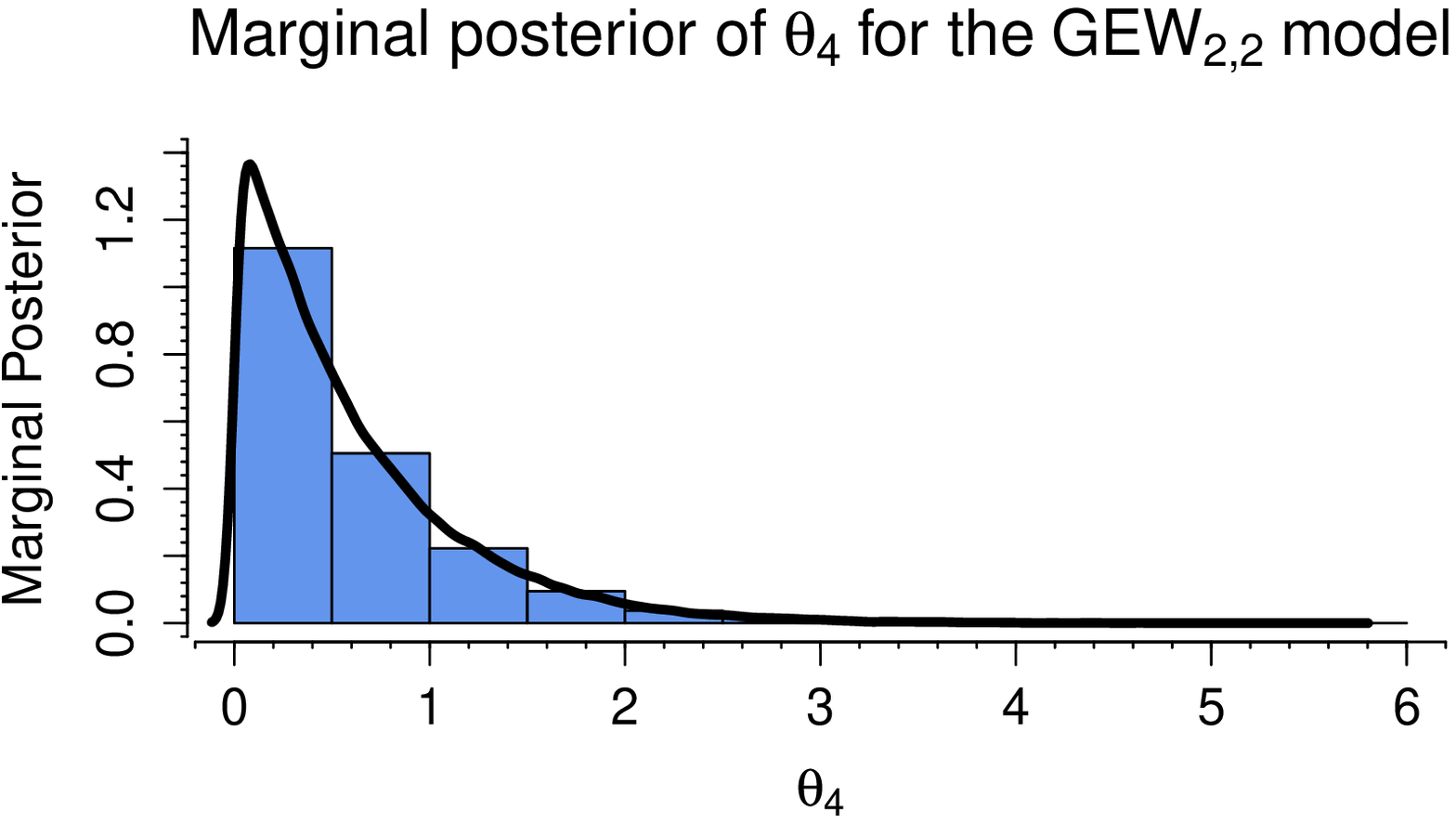}\includegraphics[width=4.8cm,height=2.2cm]{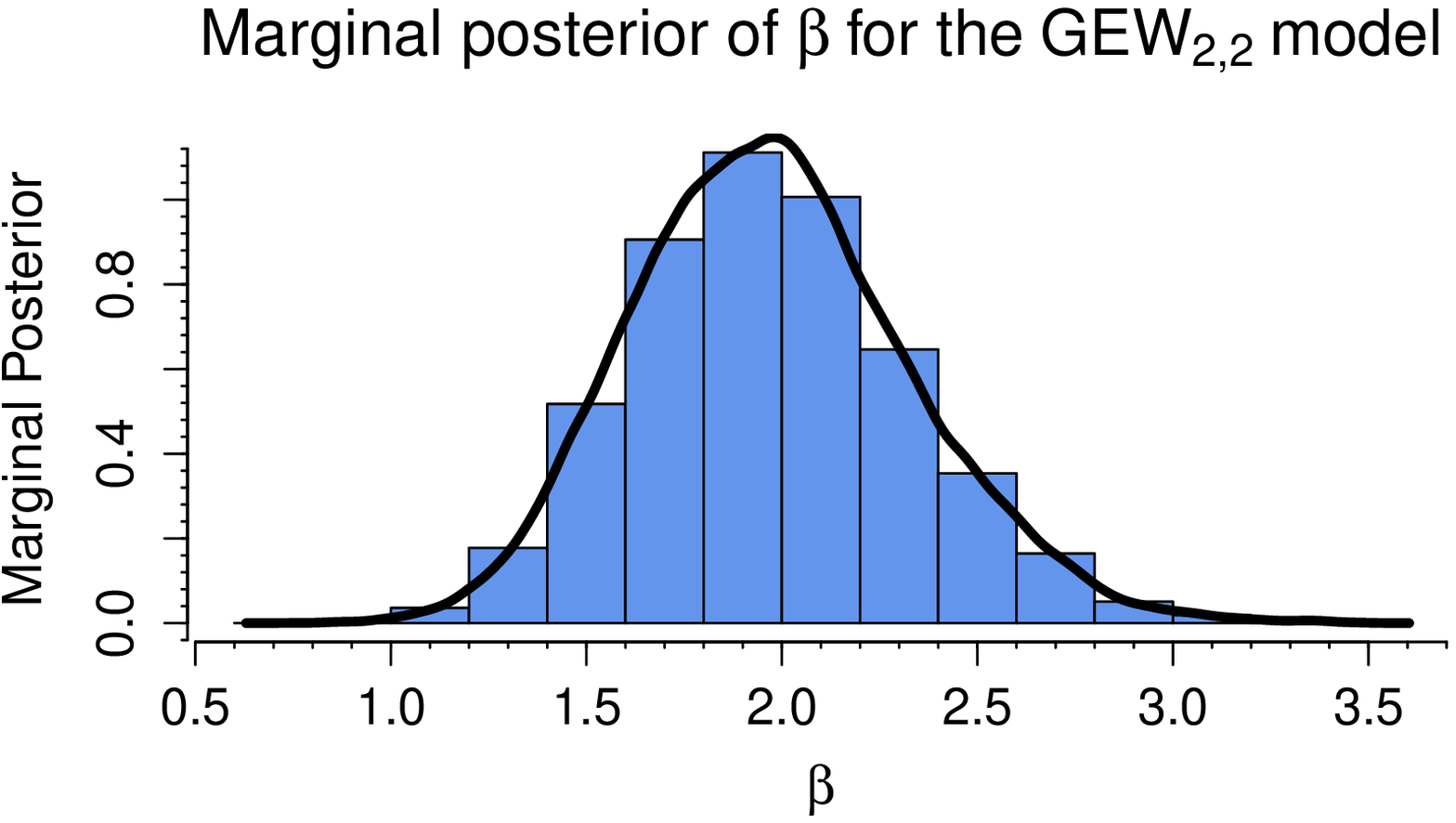}

\includegraphics[width=4.8cm,height=2.2cm]{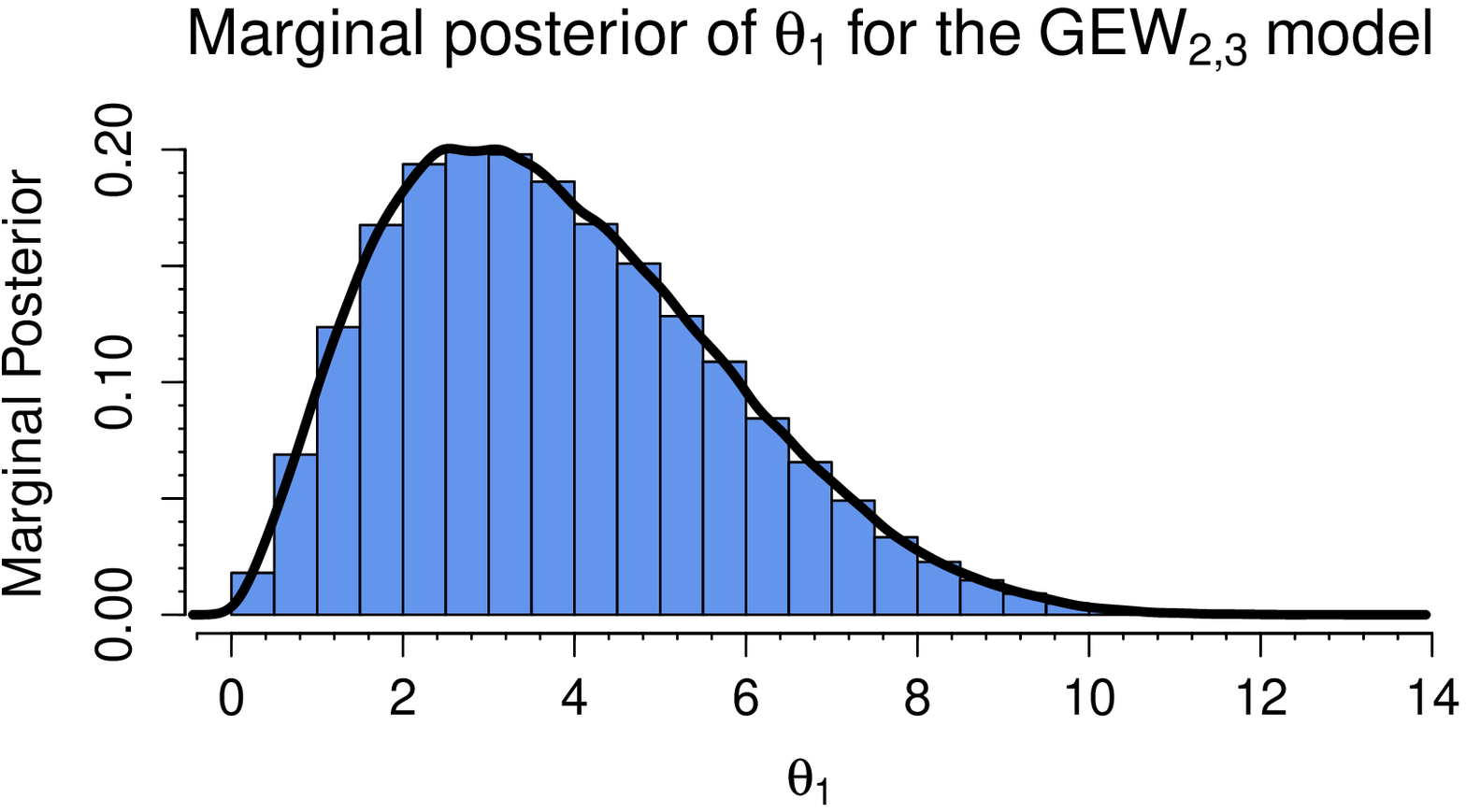}\includegraphics[width=4.8cm,height=2.2cm]{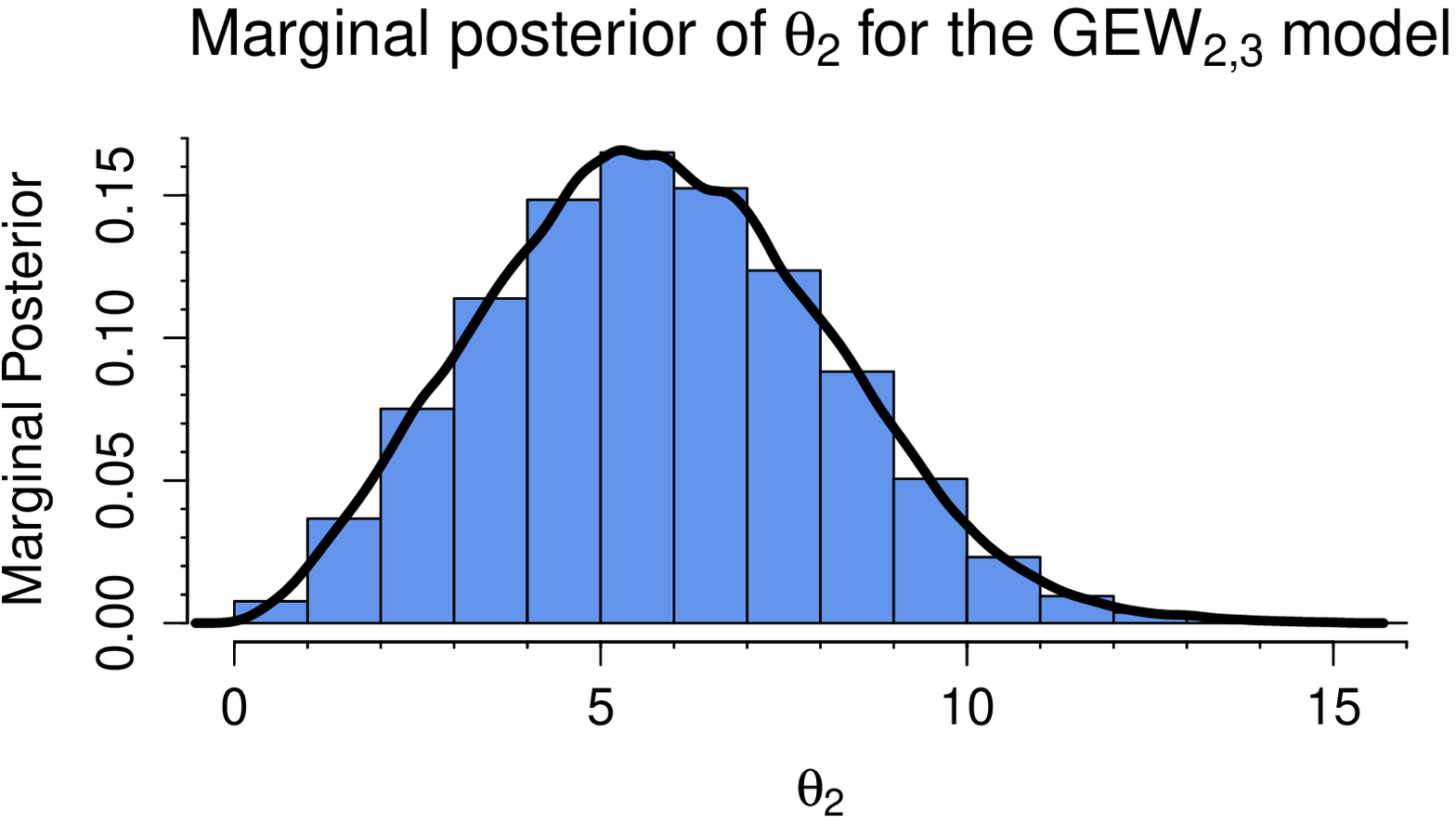}\includegraphics[width=4.8cm,height=2.2cm]{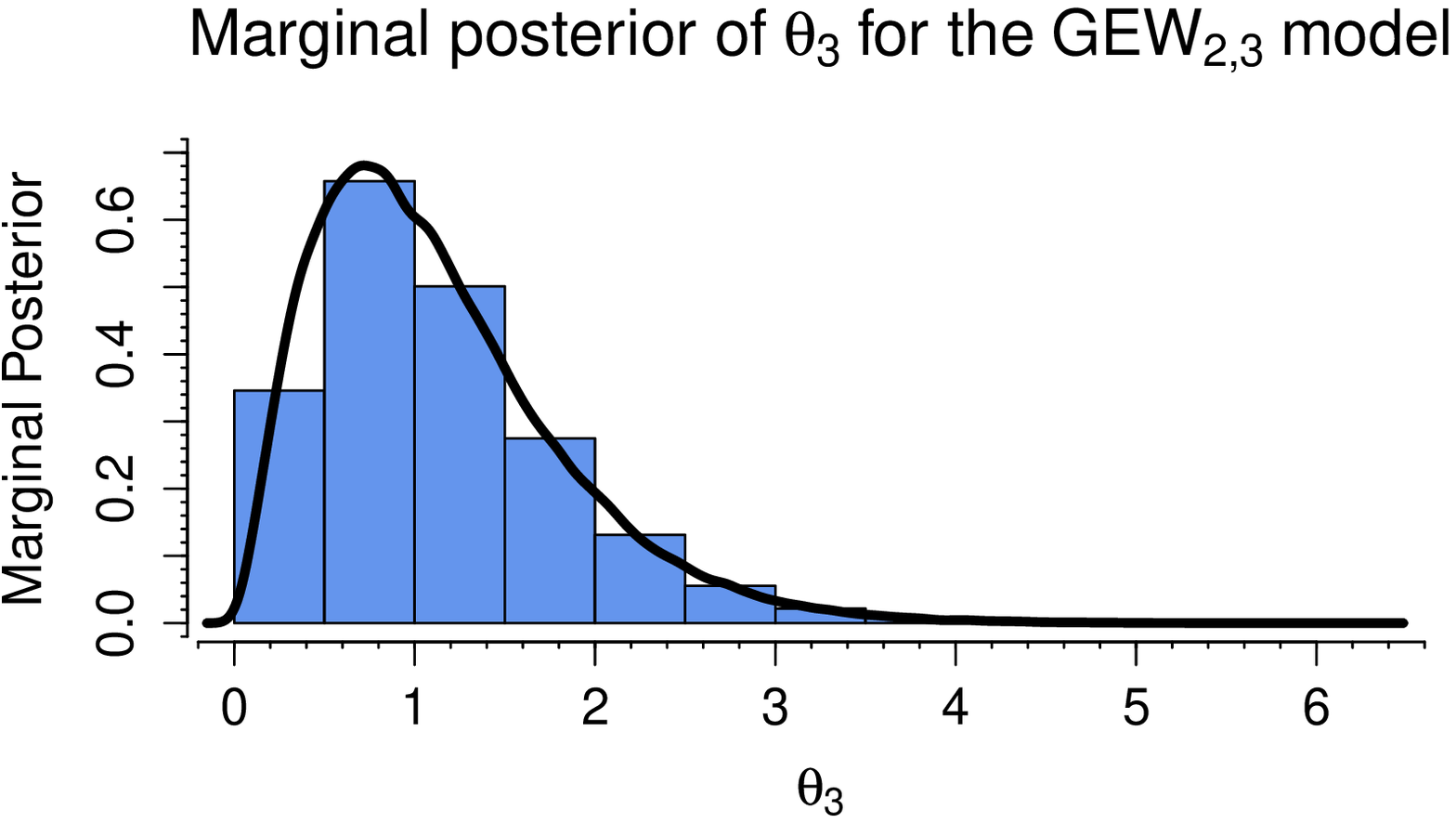}\includegraphics[width=4.8cm,height=2.2cm]{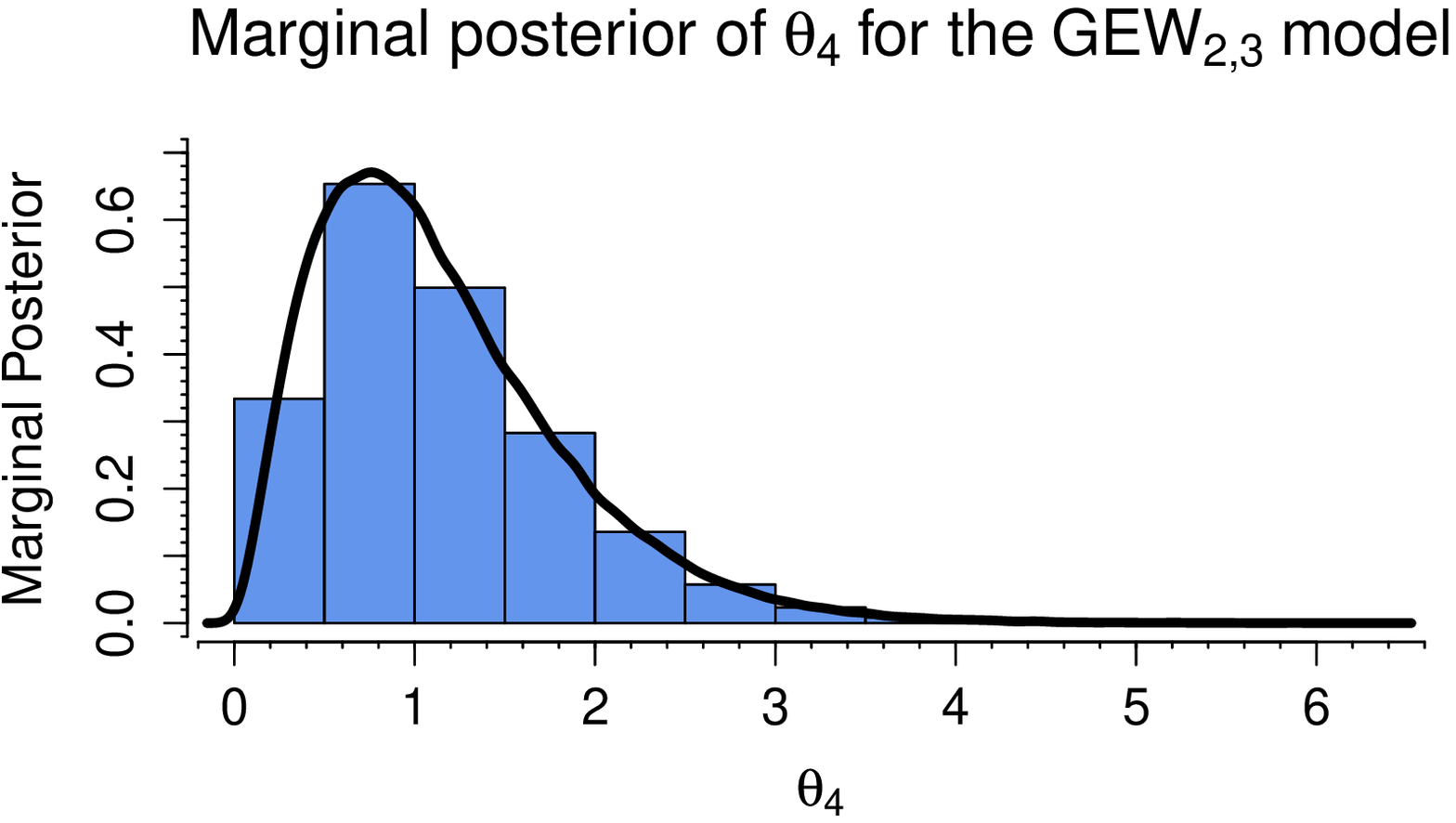}\includegraphics[width=4.8cm,height=2.2cm]{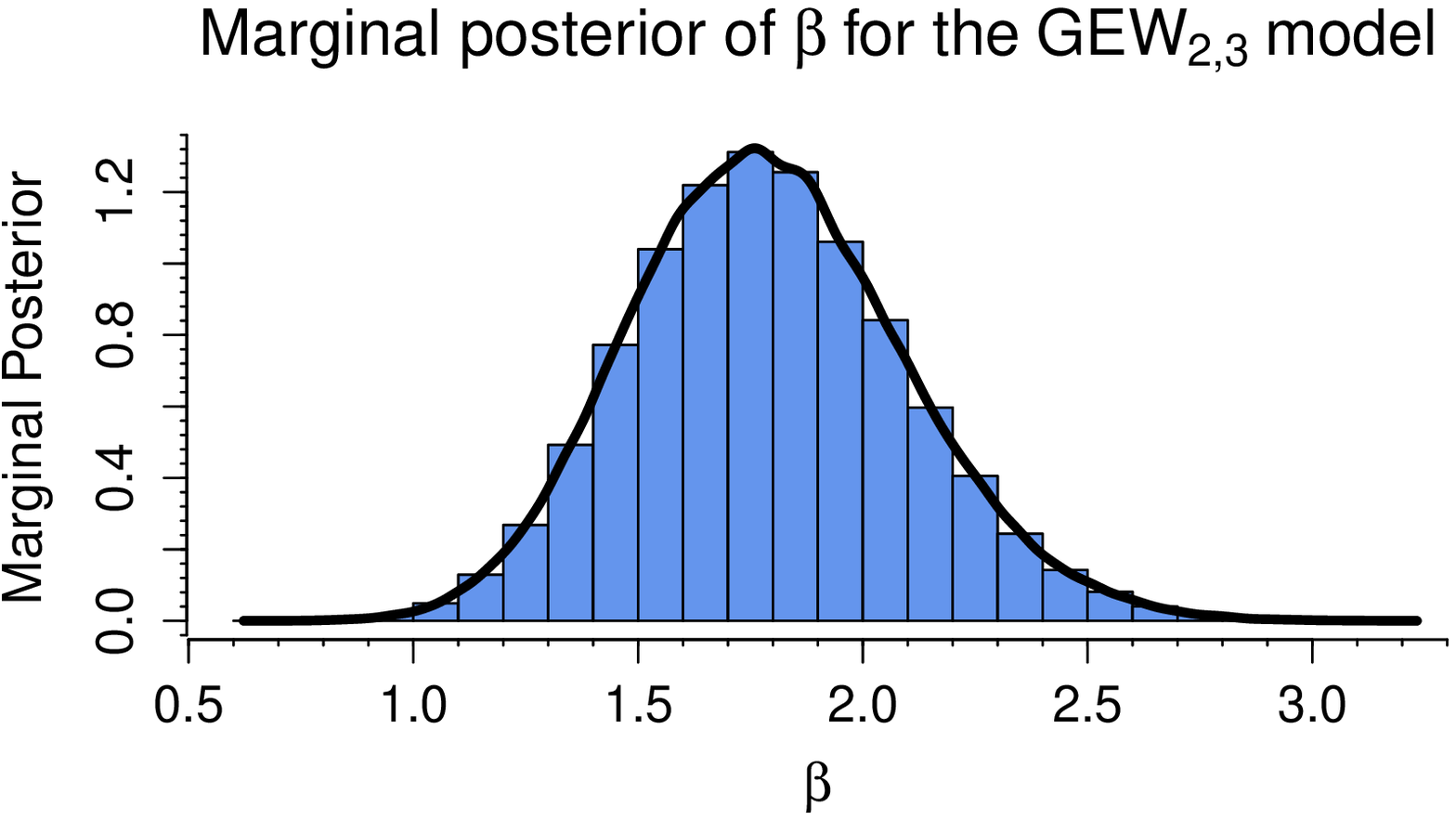}

\includegraphics[width=4.8cm,height=2.2cm]{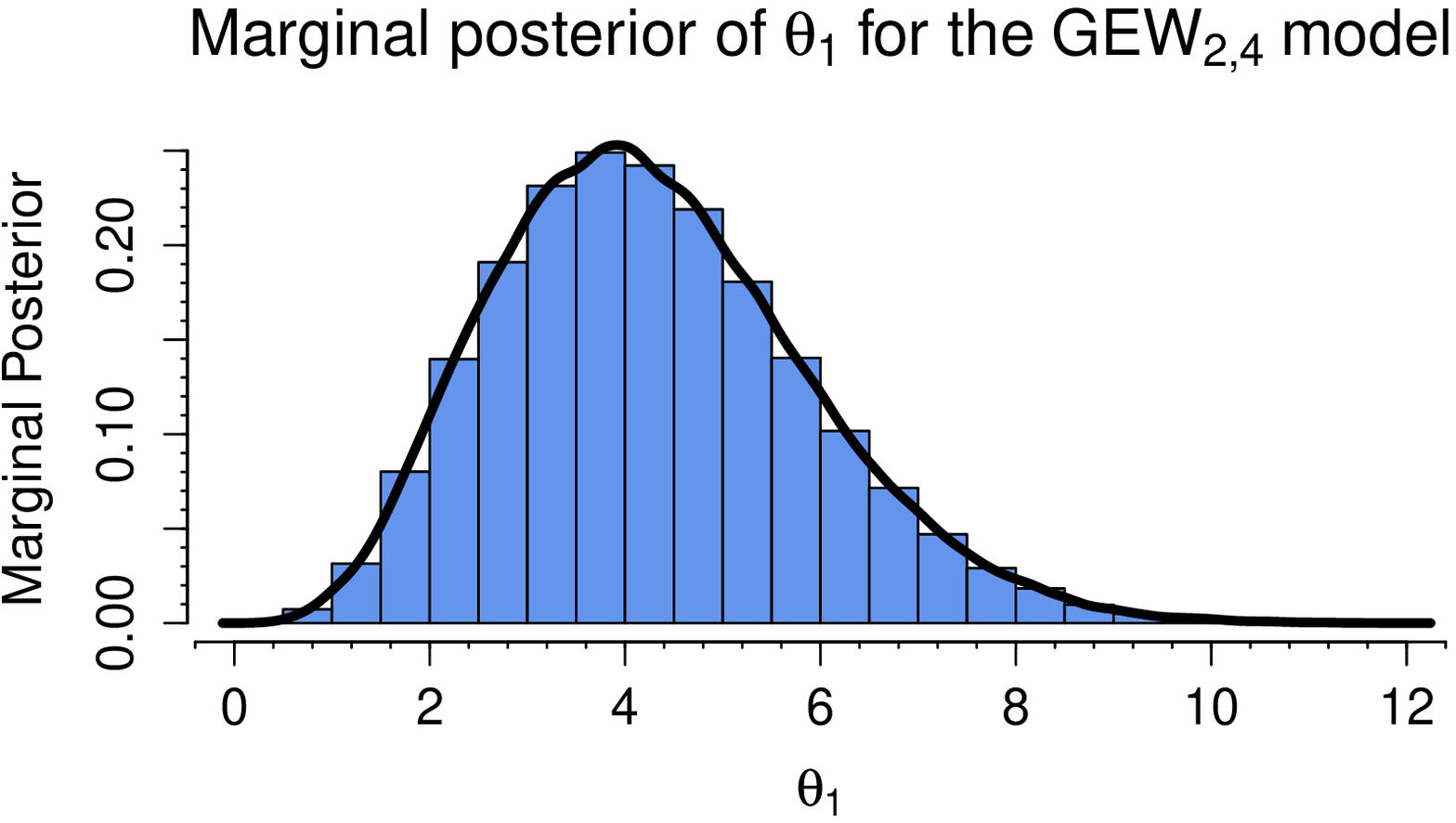}\includegraphics[width=4.8cm,height=2.2cm]{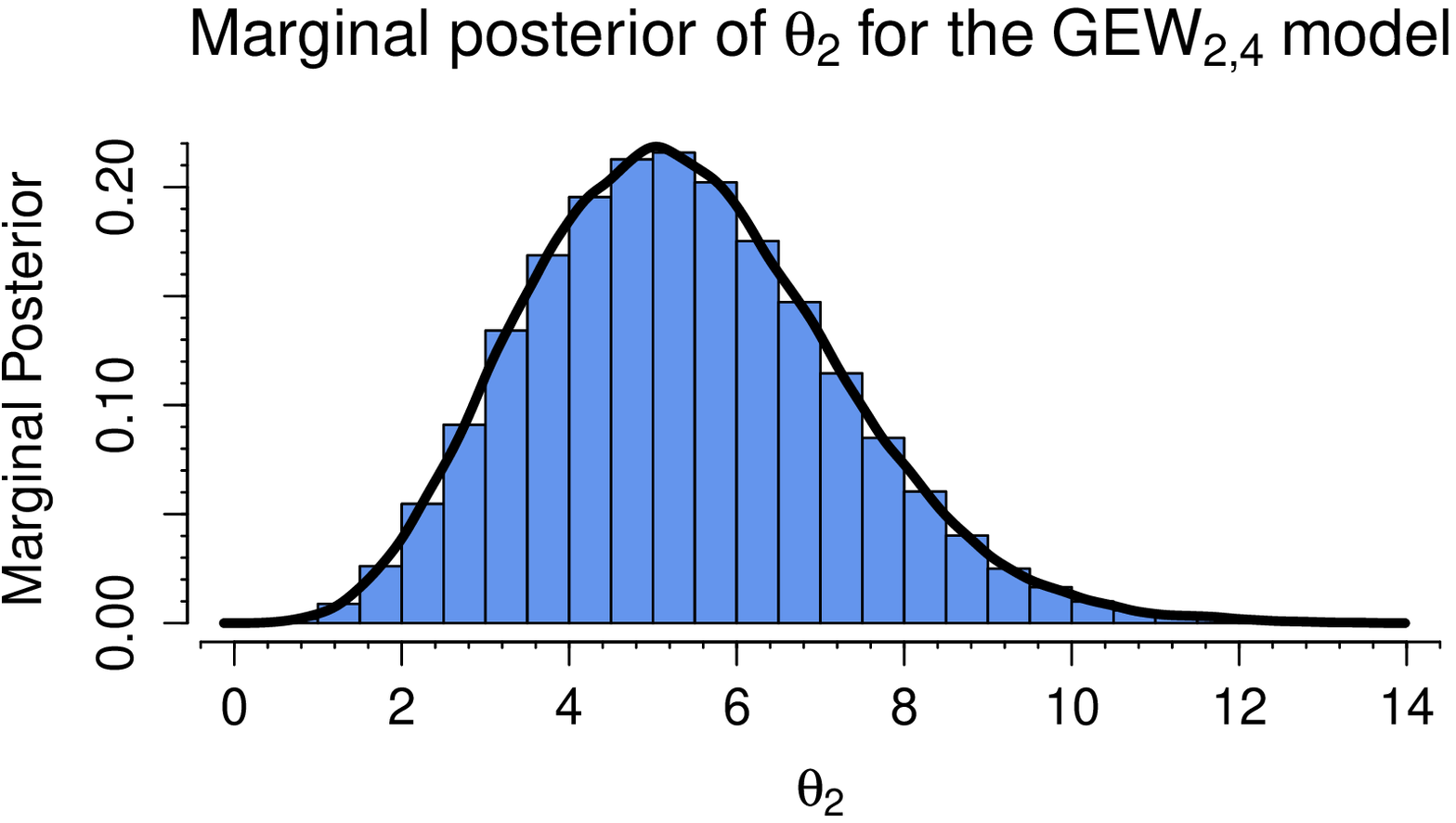}\includegraphics[width=4.8cm,height=2.2cm]{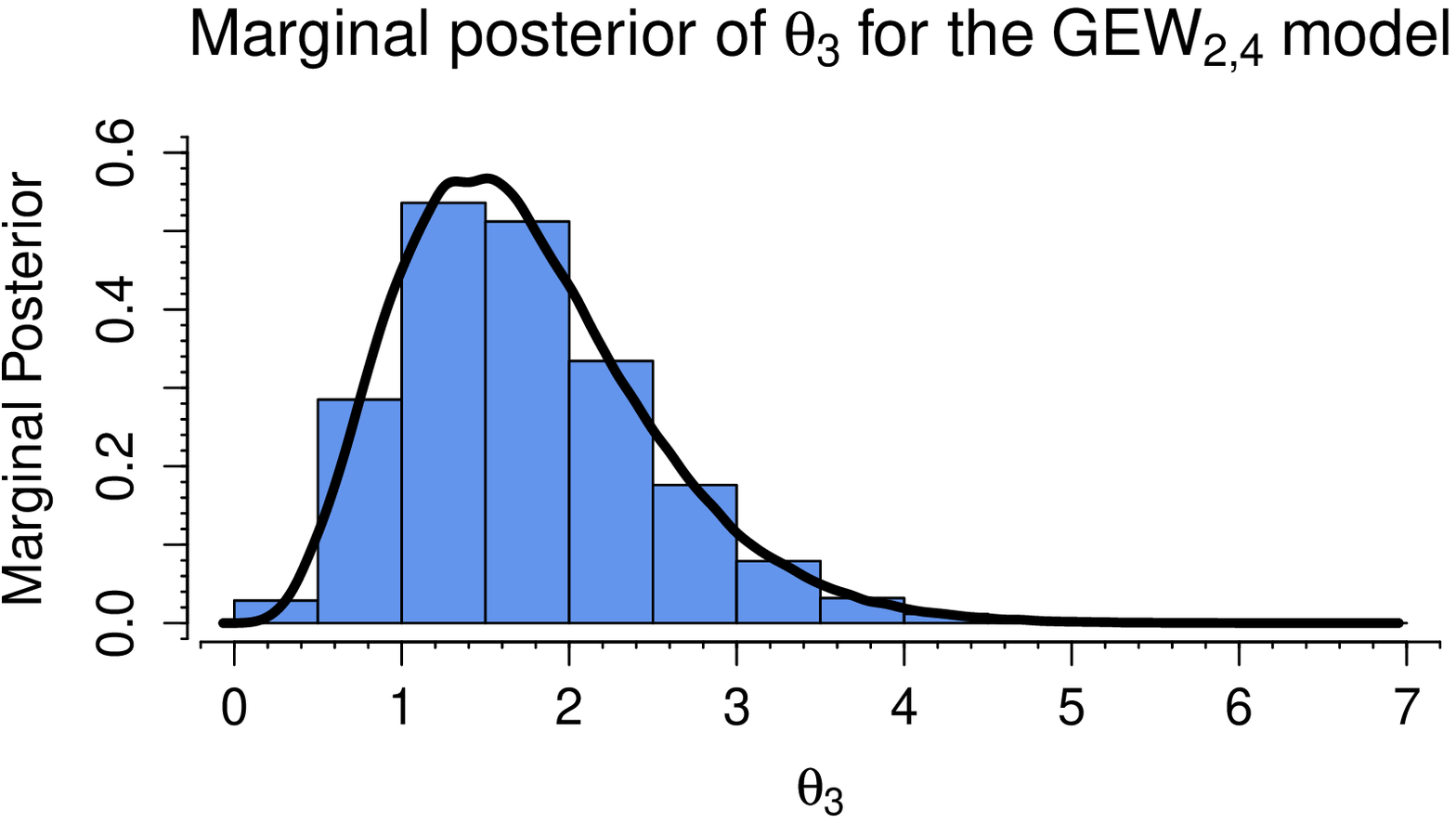}\includegraphics[width=4.8cm,height=2.2cm]{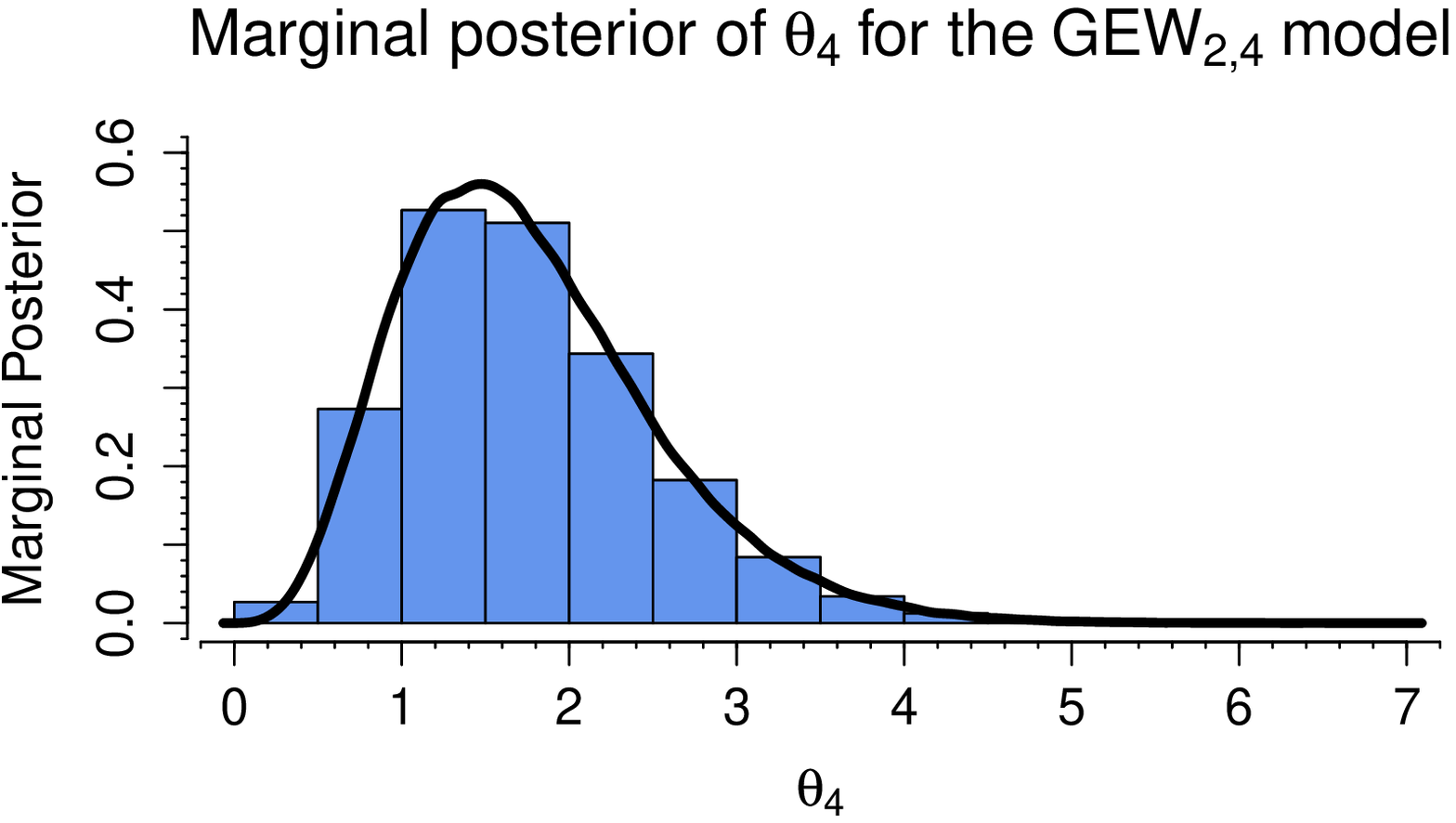}\includegraphics[width=4.8cm,height=2.2cm]{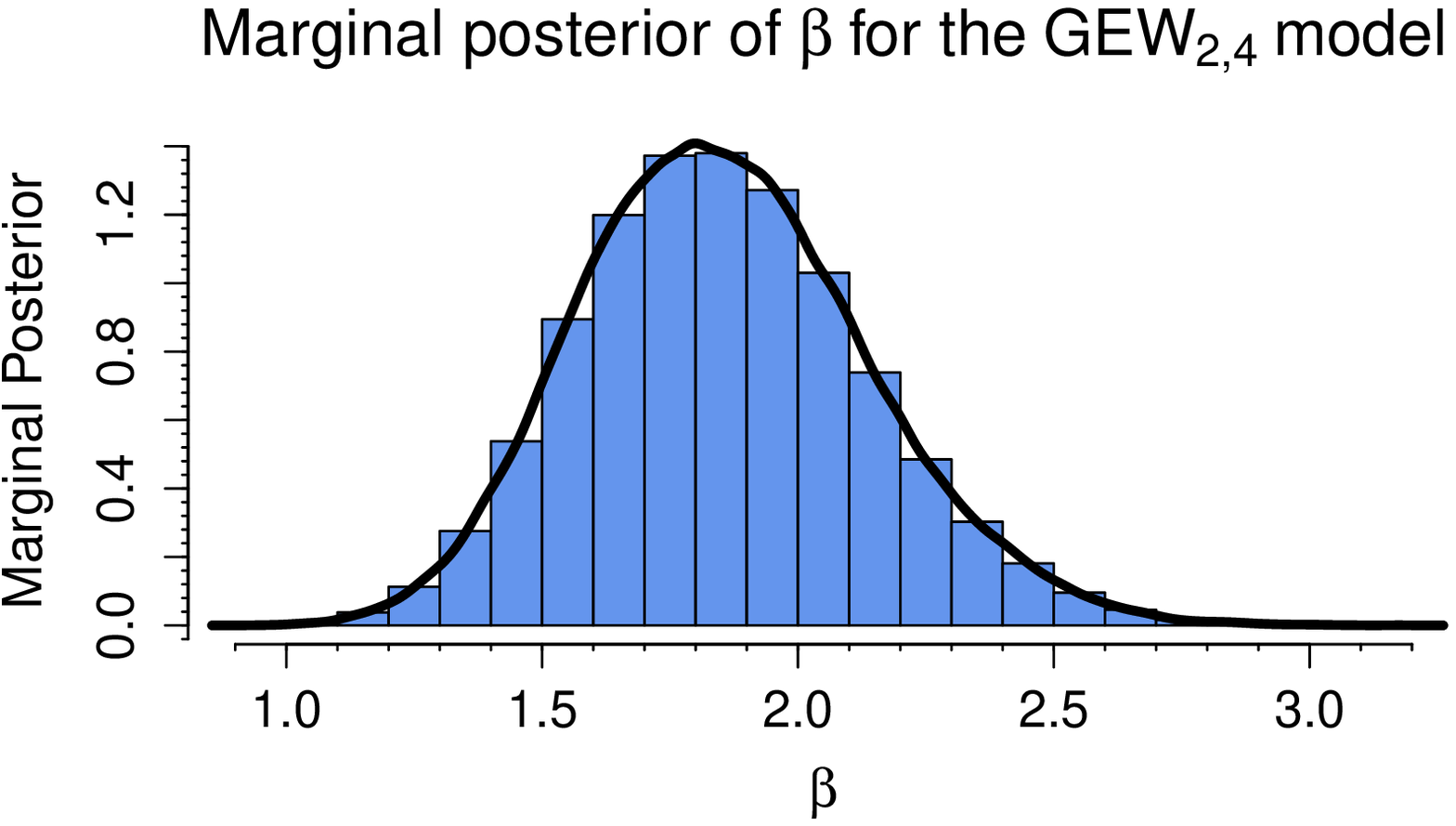}

\includegraphics[width=4.8cm,height=2.2cm]{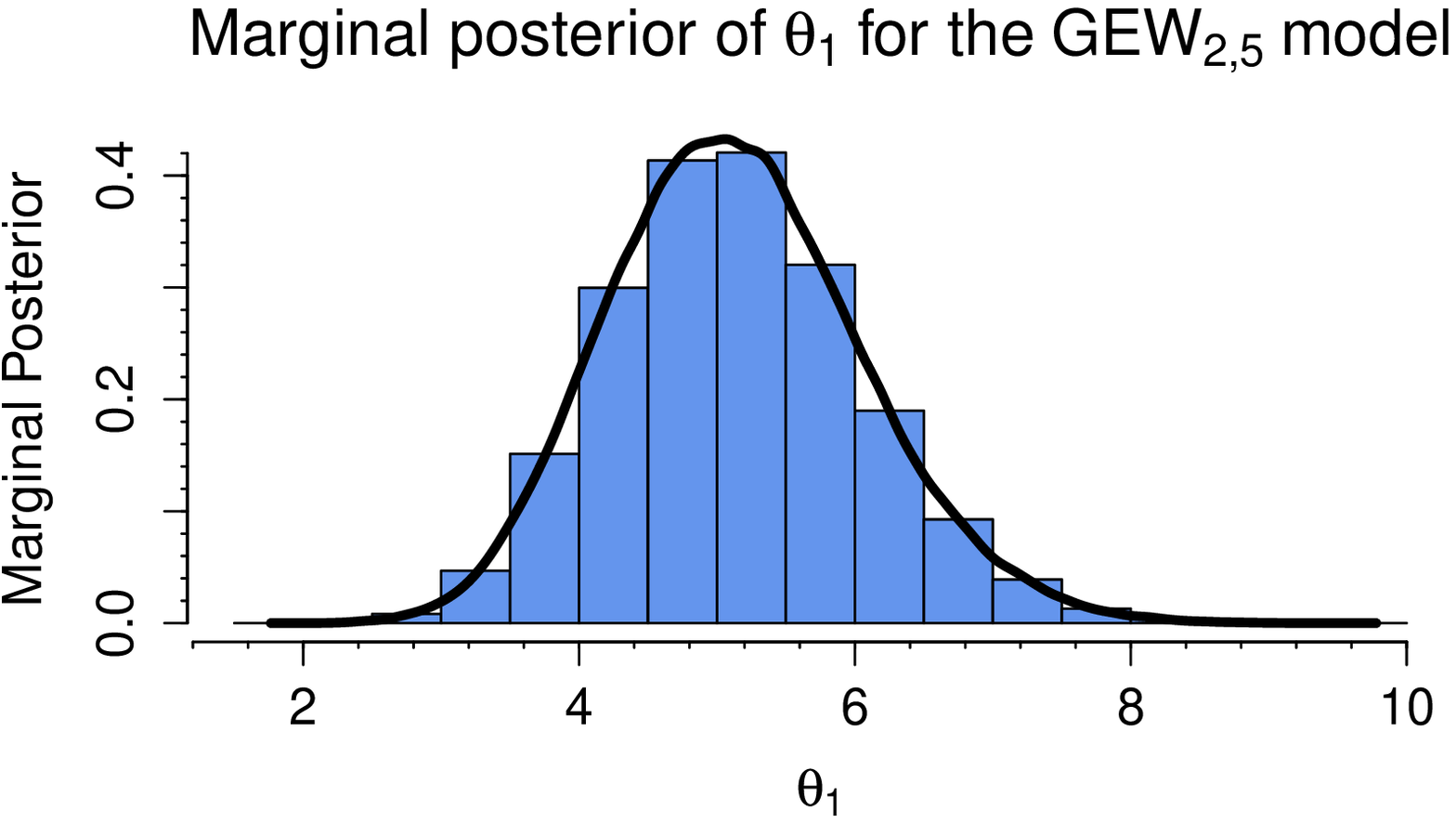}\includegraphics[width=4.8cm,height=2.2cm]{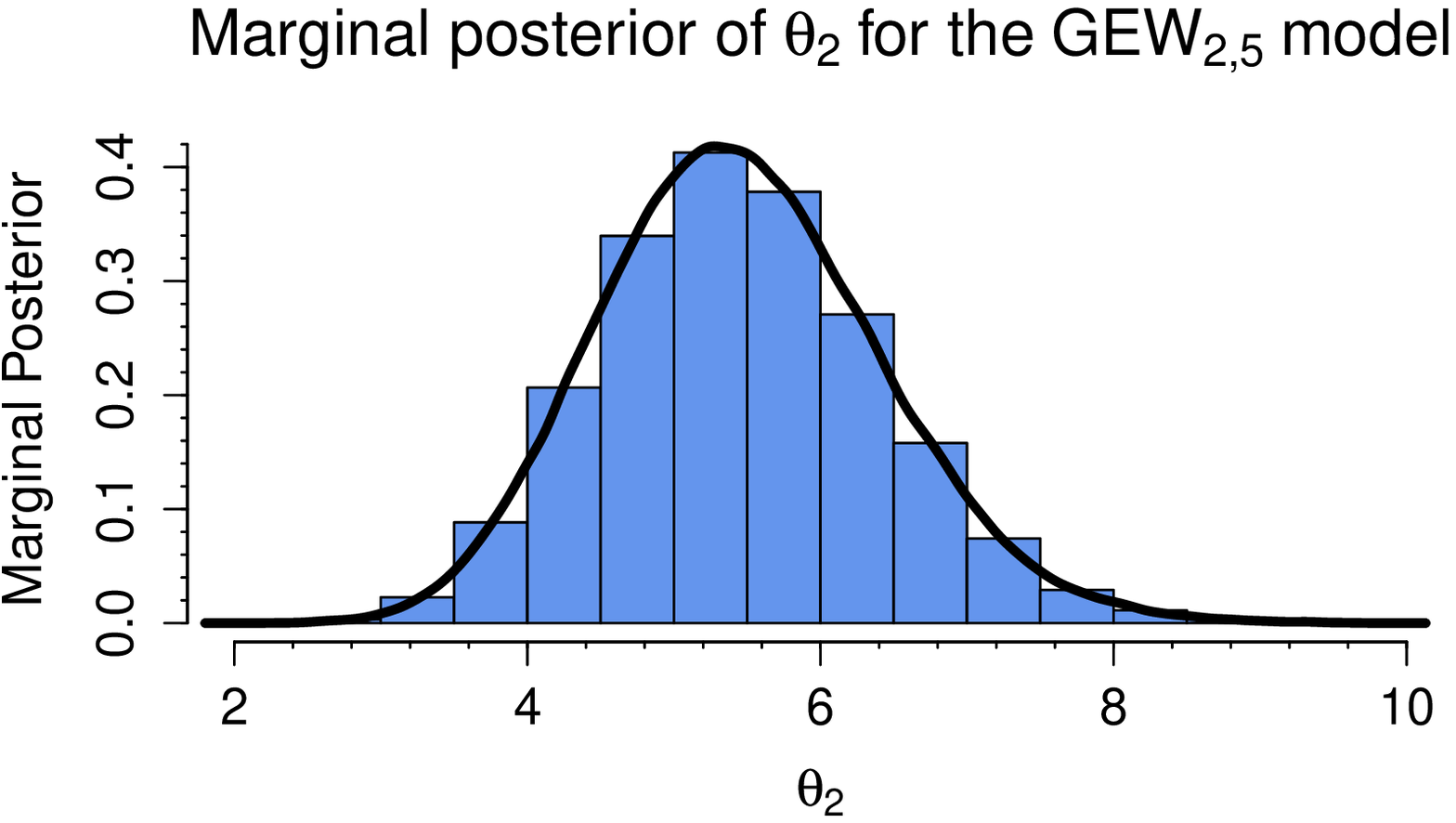}\includegraphics[width=4.8cm,height=2.2cm]{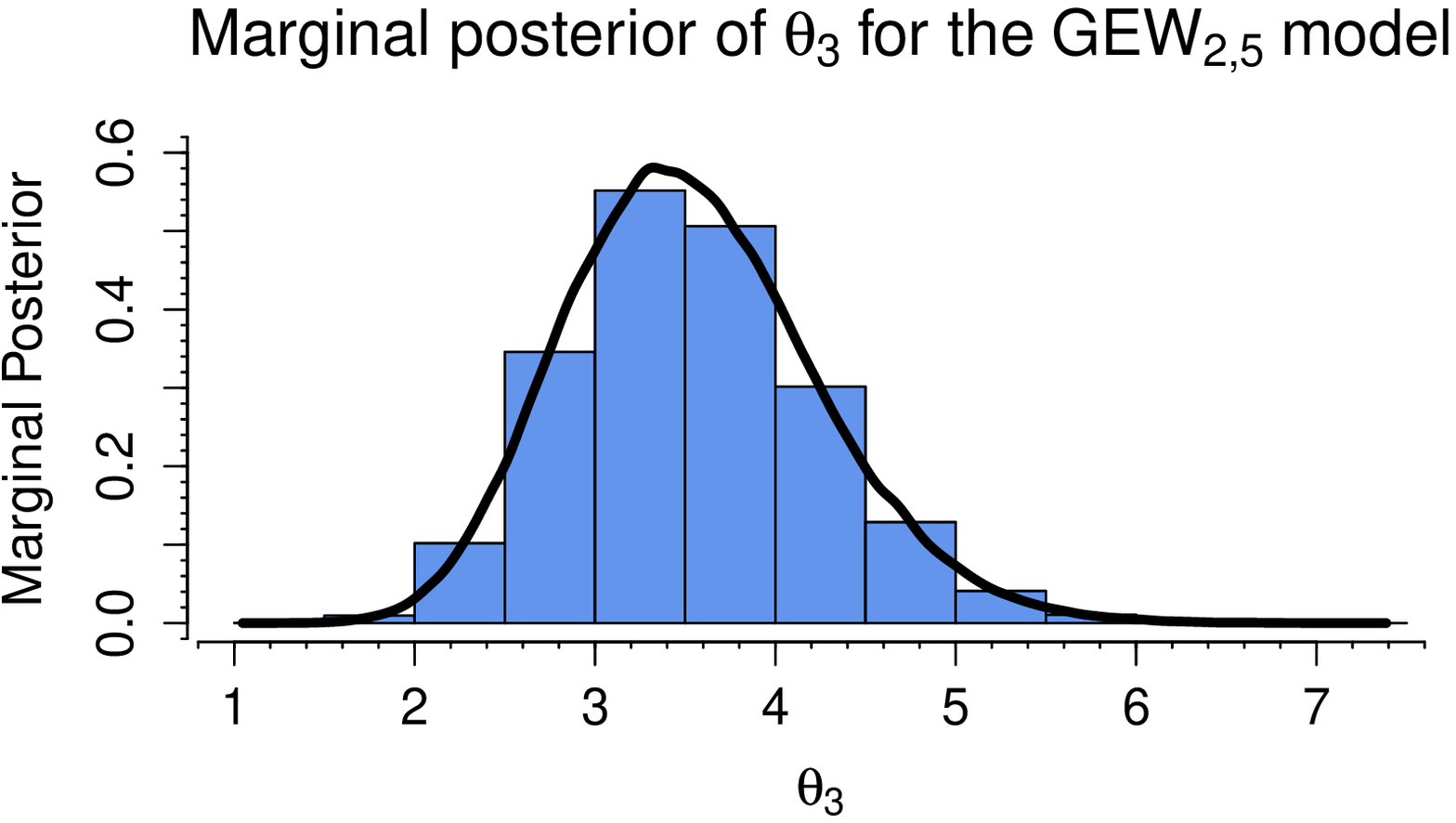}\includegraphics[width=4.8cm,height=2.2cm]{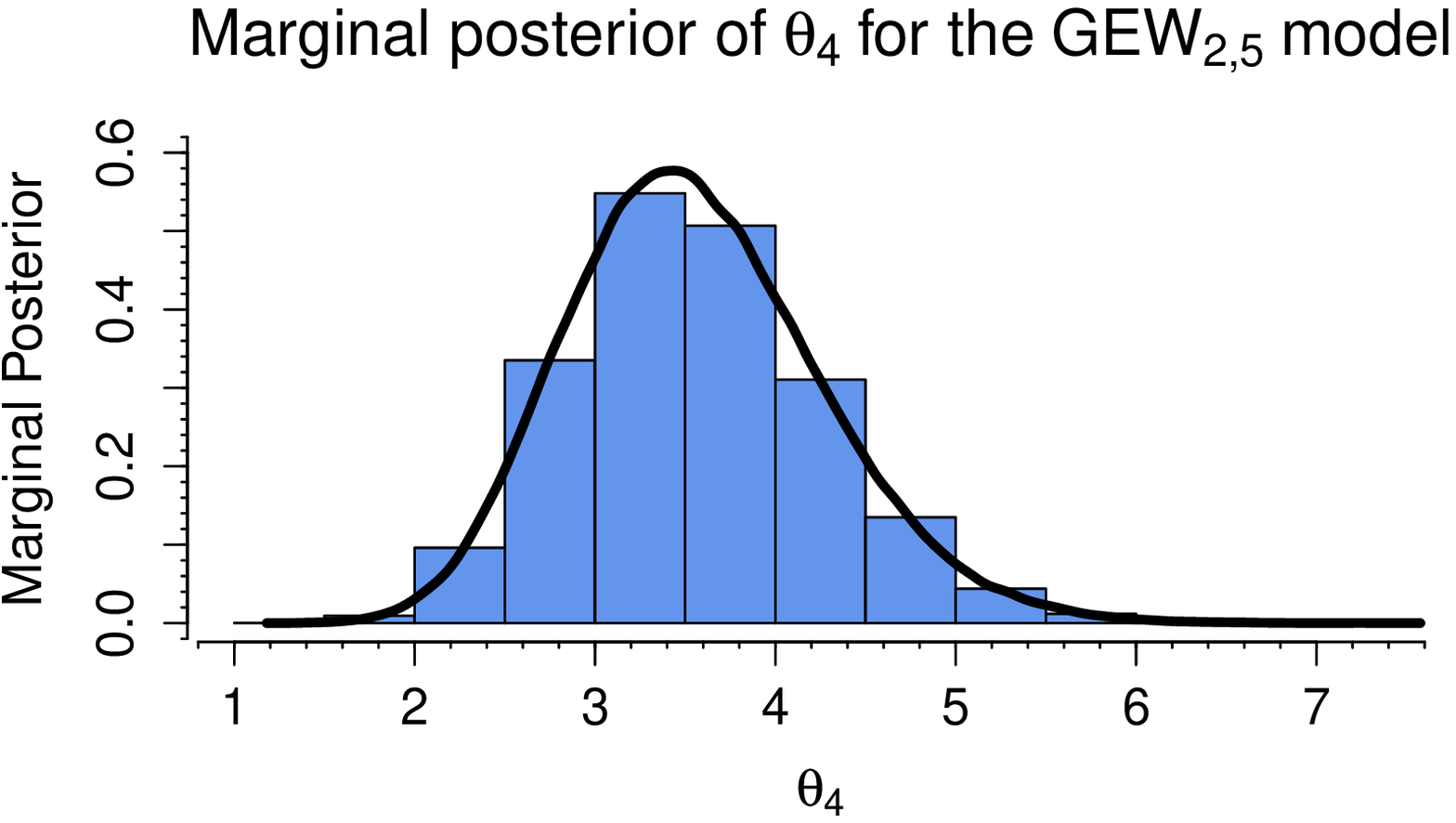}\includegraphics[width=4.8cm,height=2.2cm]{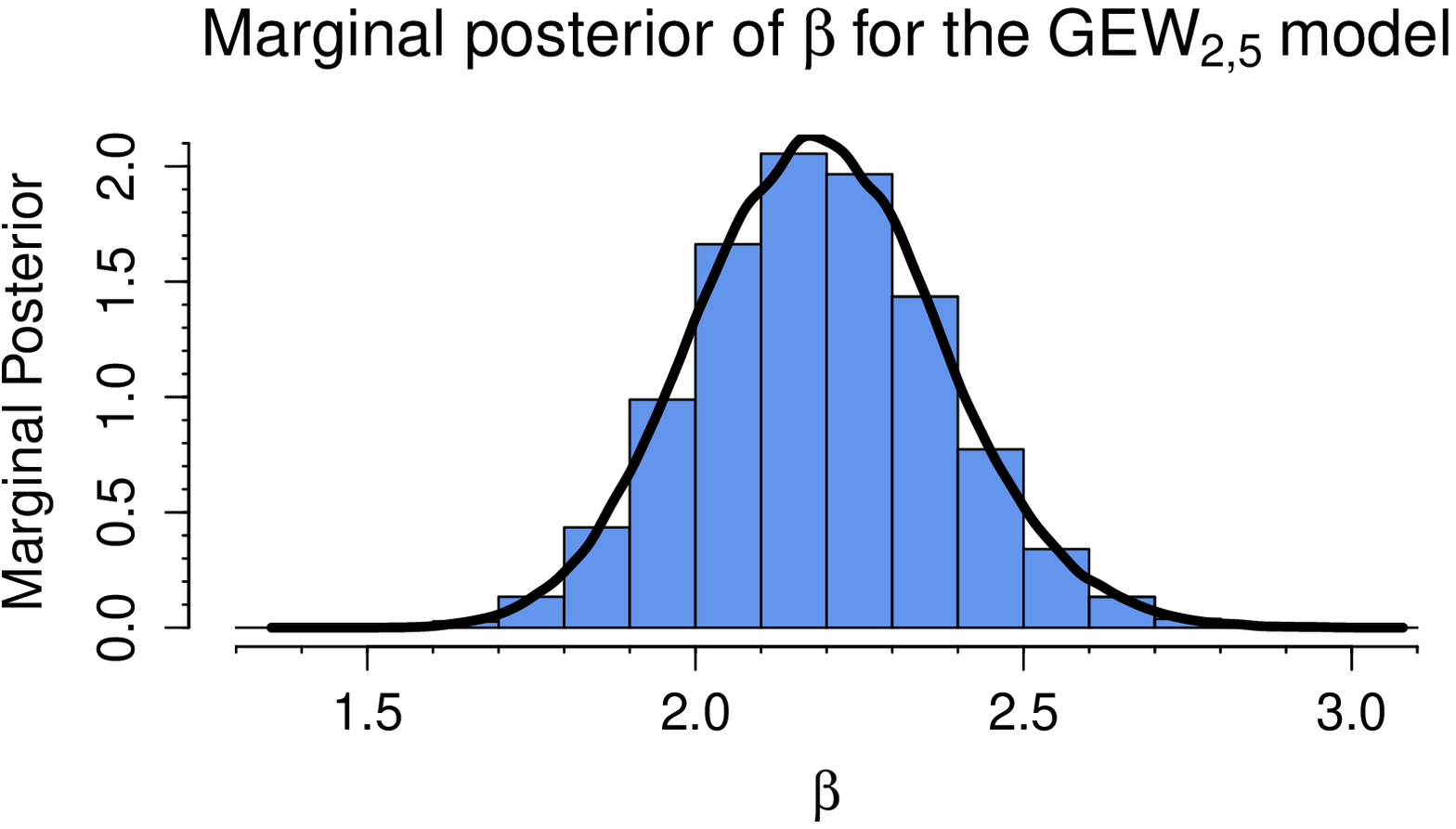}

\includegraphics[width=4.8cm,height=2.2cm]{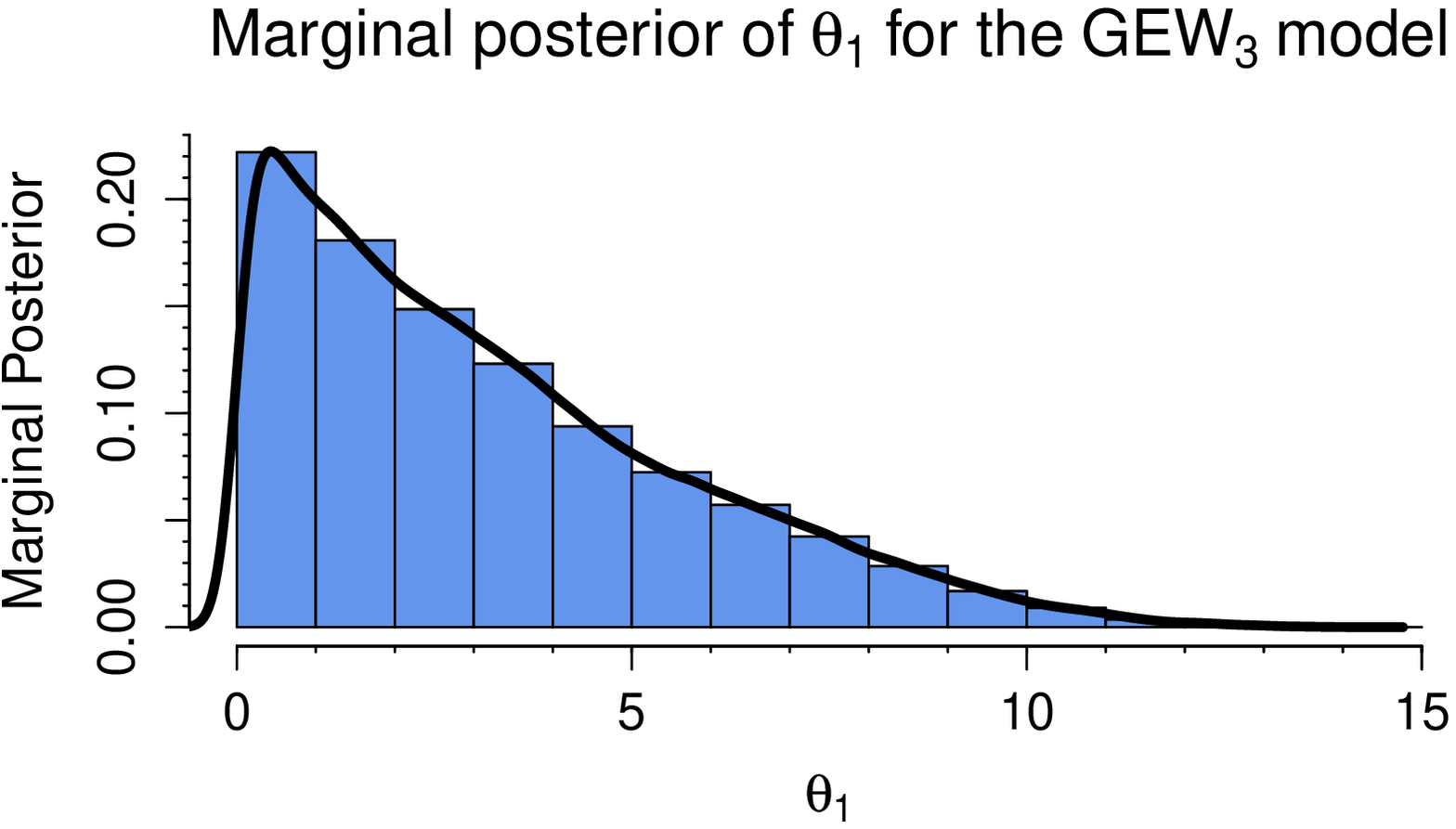}\includegraphics[width=4.8cm,height=2.2cm]{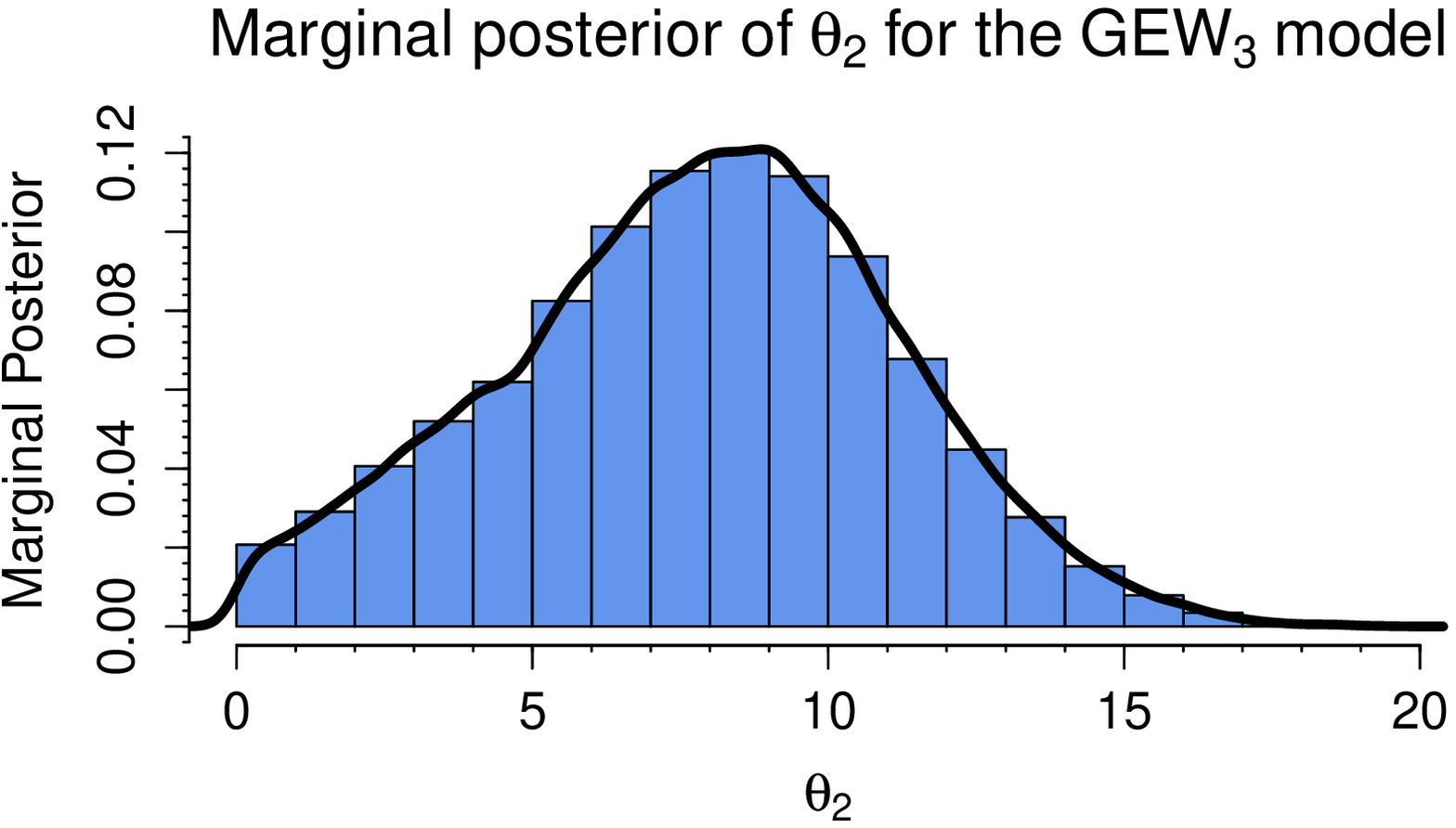}\includegraphics[width=4.8cm,height=2.2cm]{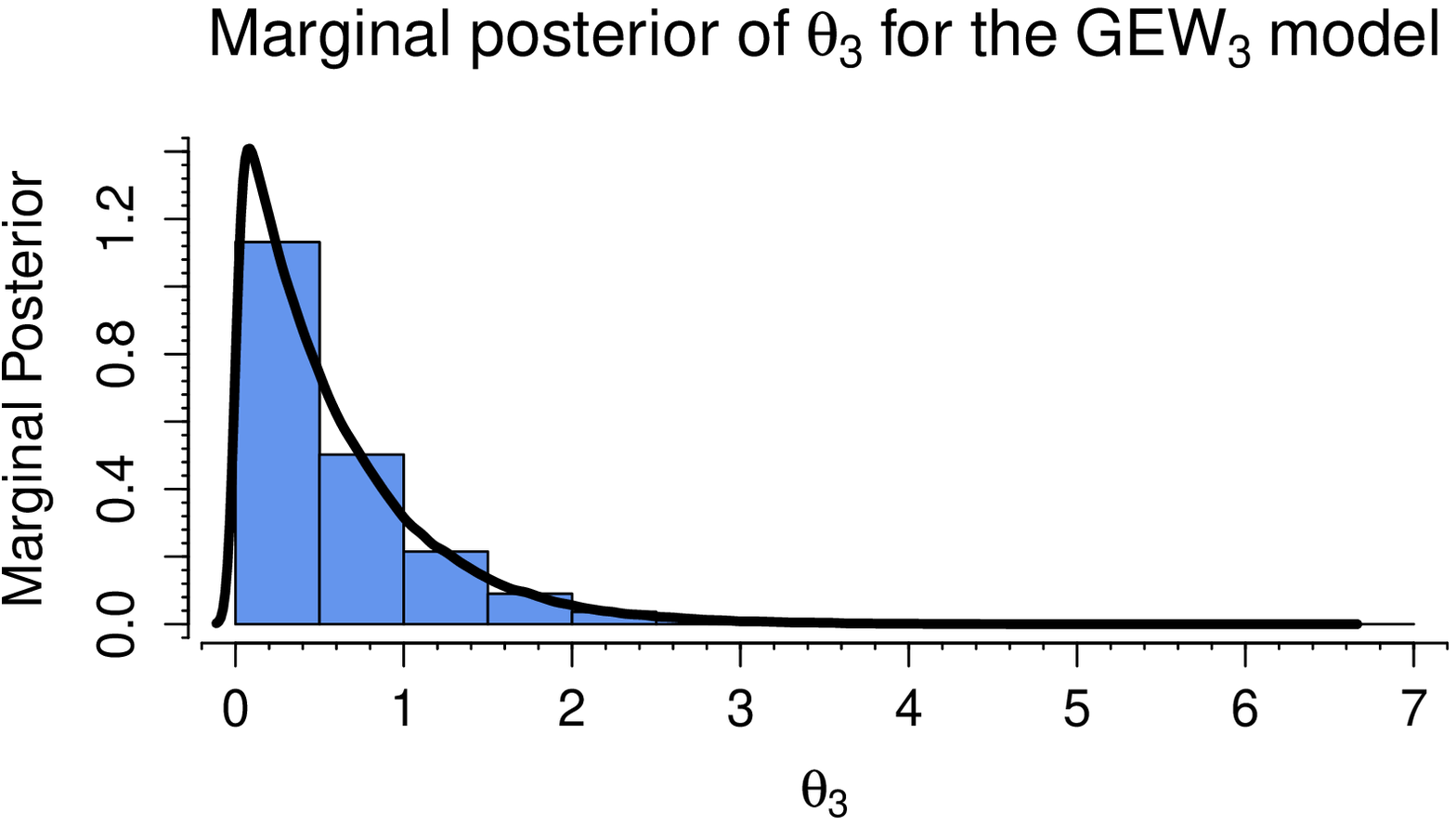}\includegraphics[width=4.8cm,height=2.2cm]{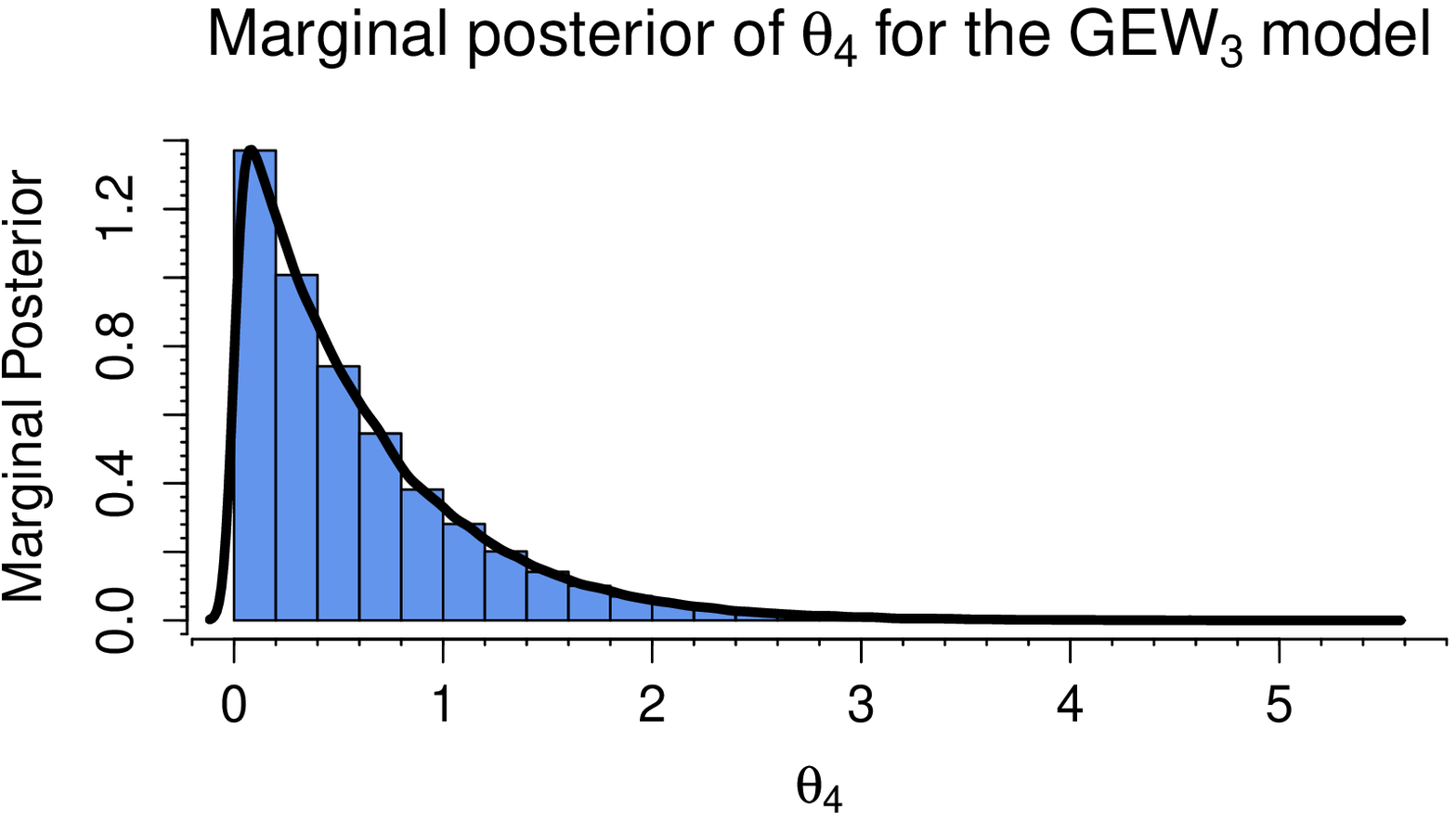}\includegraphics[width=4.8cm,height=2.2cm]{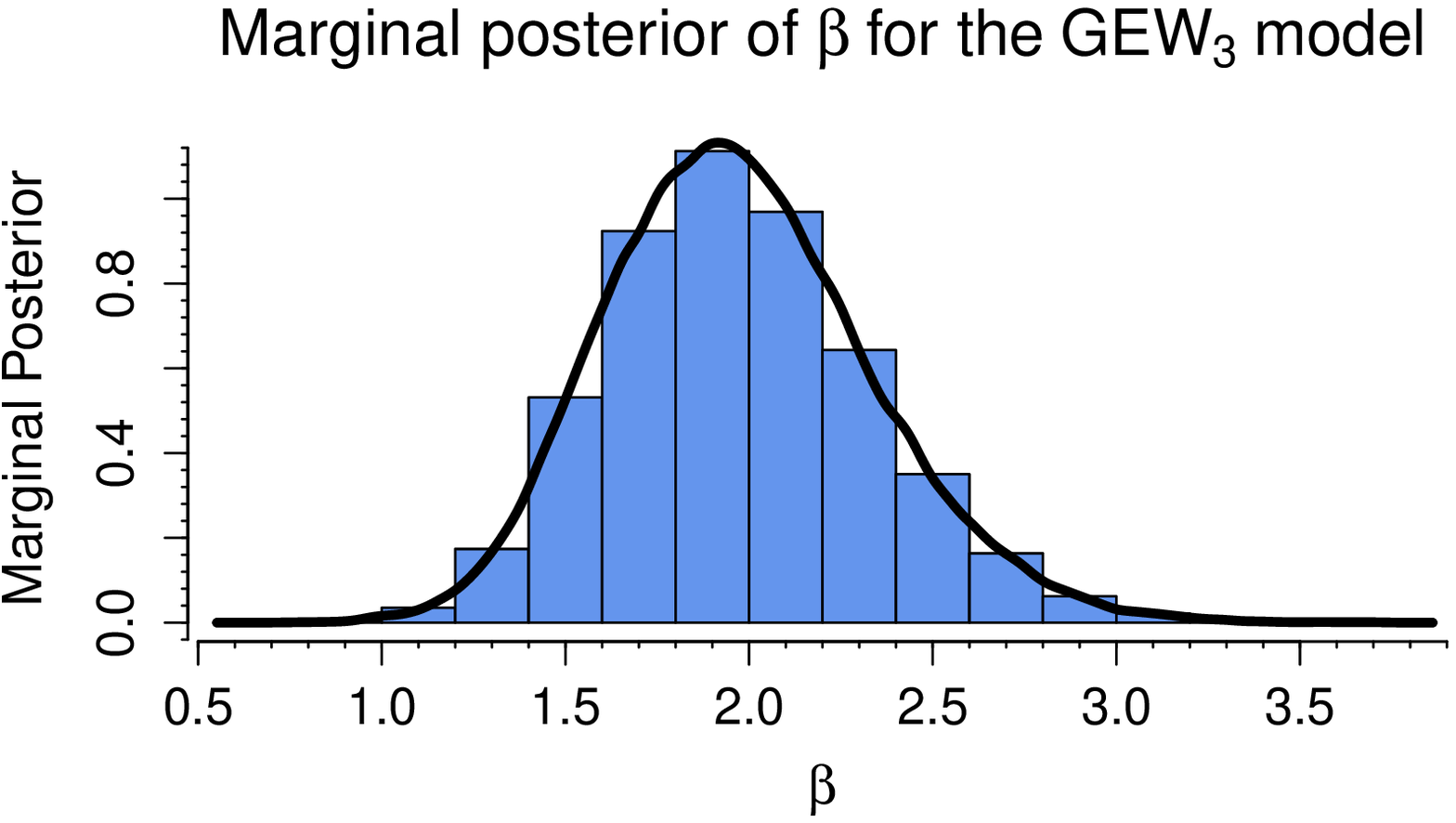}

\includegraphics[width=4.8cm,height=2.2cm]{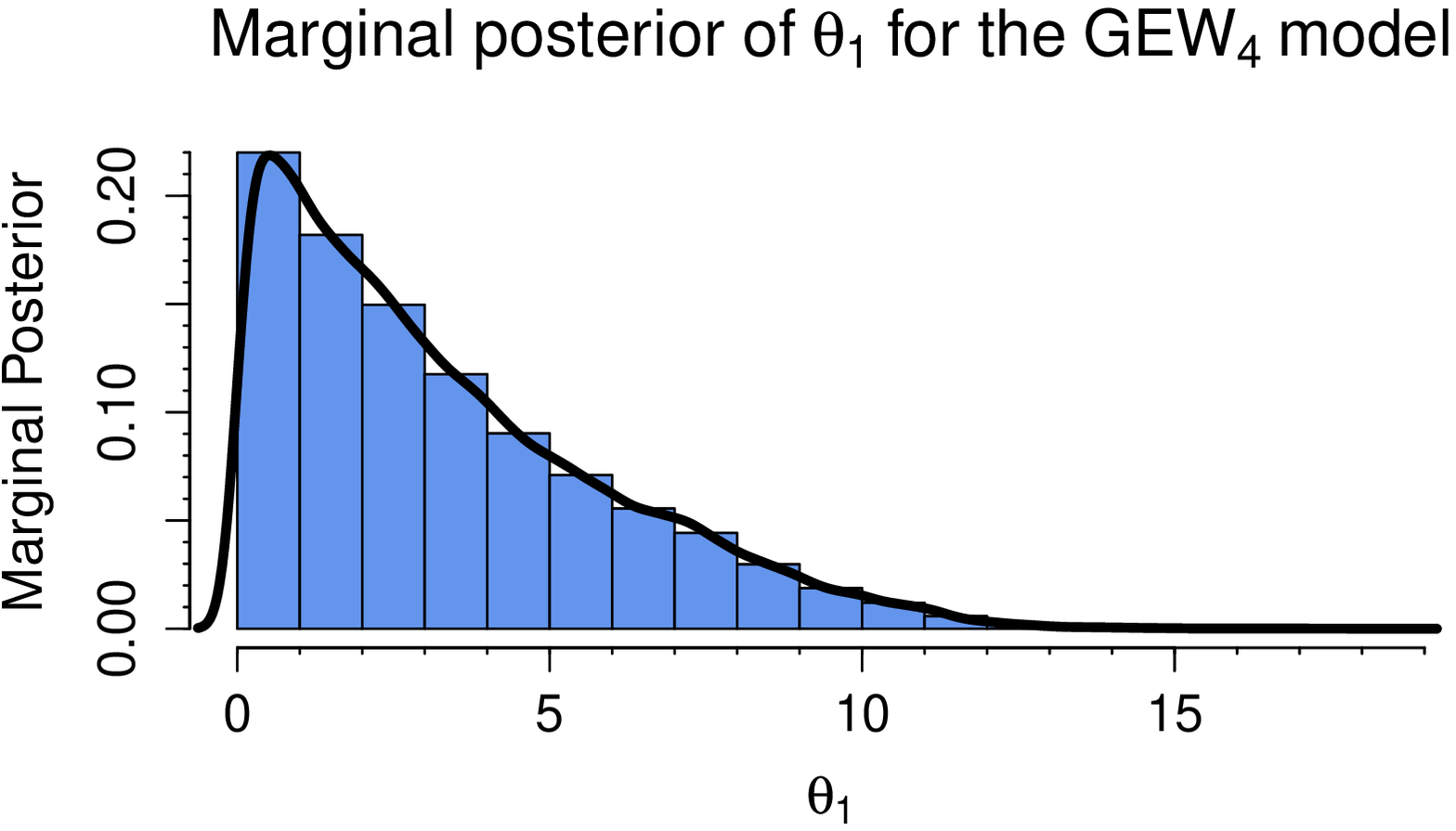}\includegraphics[width=4.8cm,height=2.2cm]{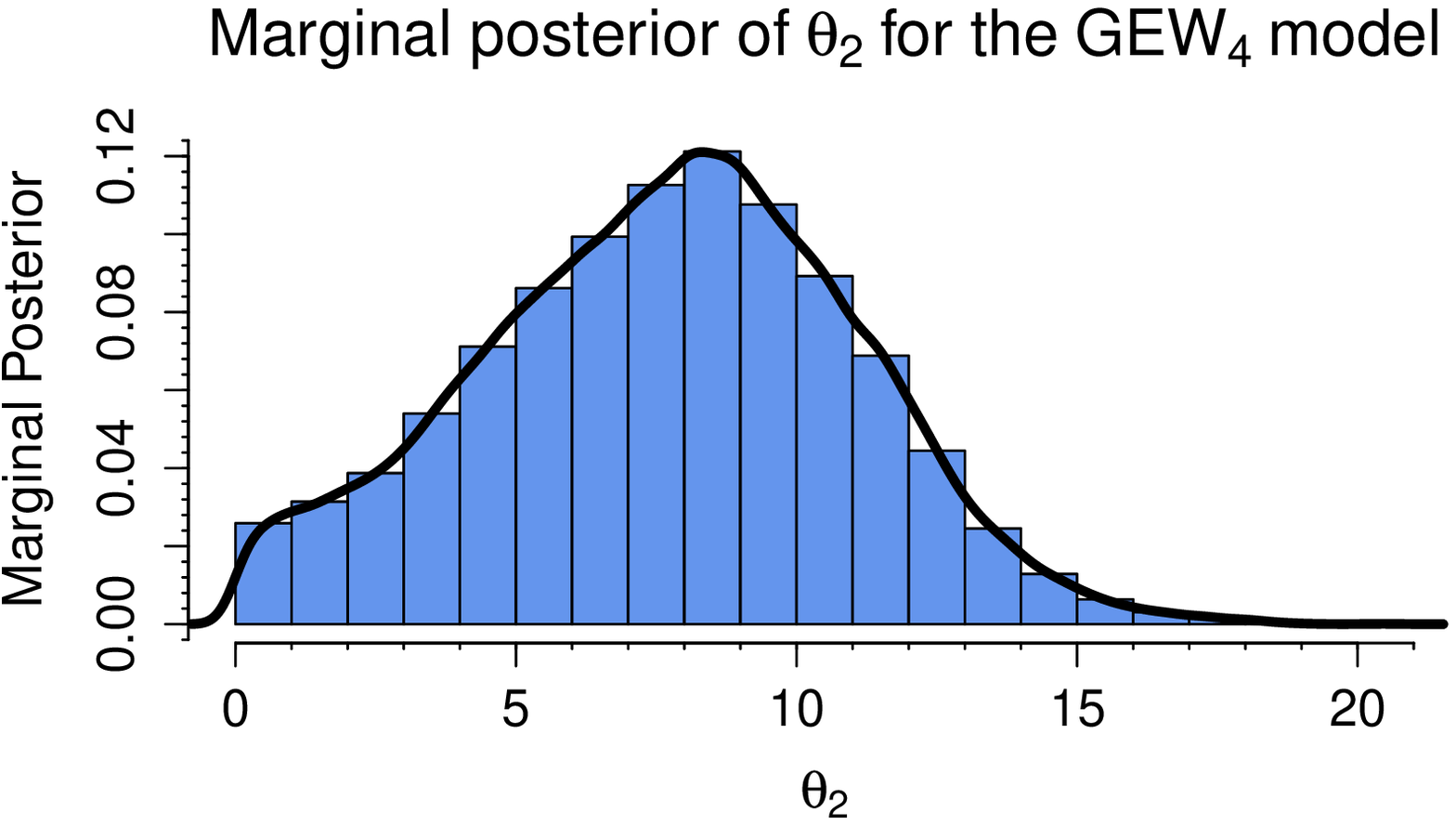}\includegraphics[width=4.8cm,height=2.2cm]{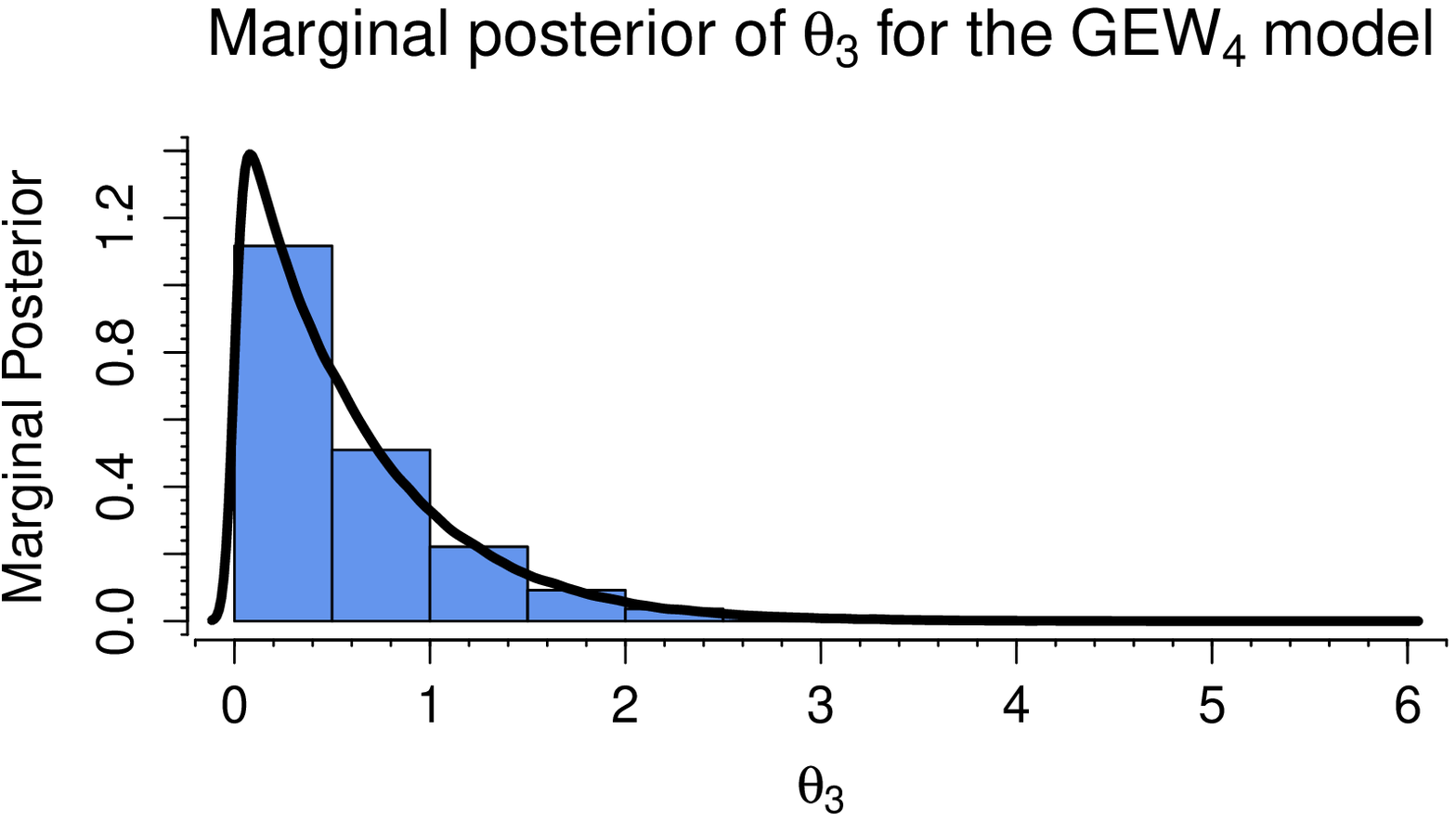}\includegraphics[width=4.8cm,height=2.2cm]{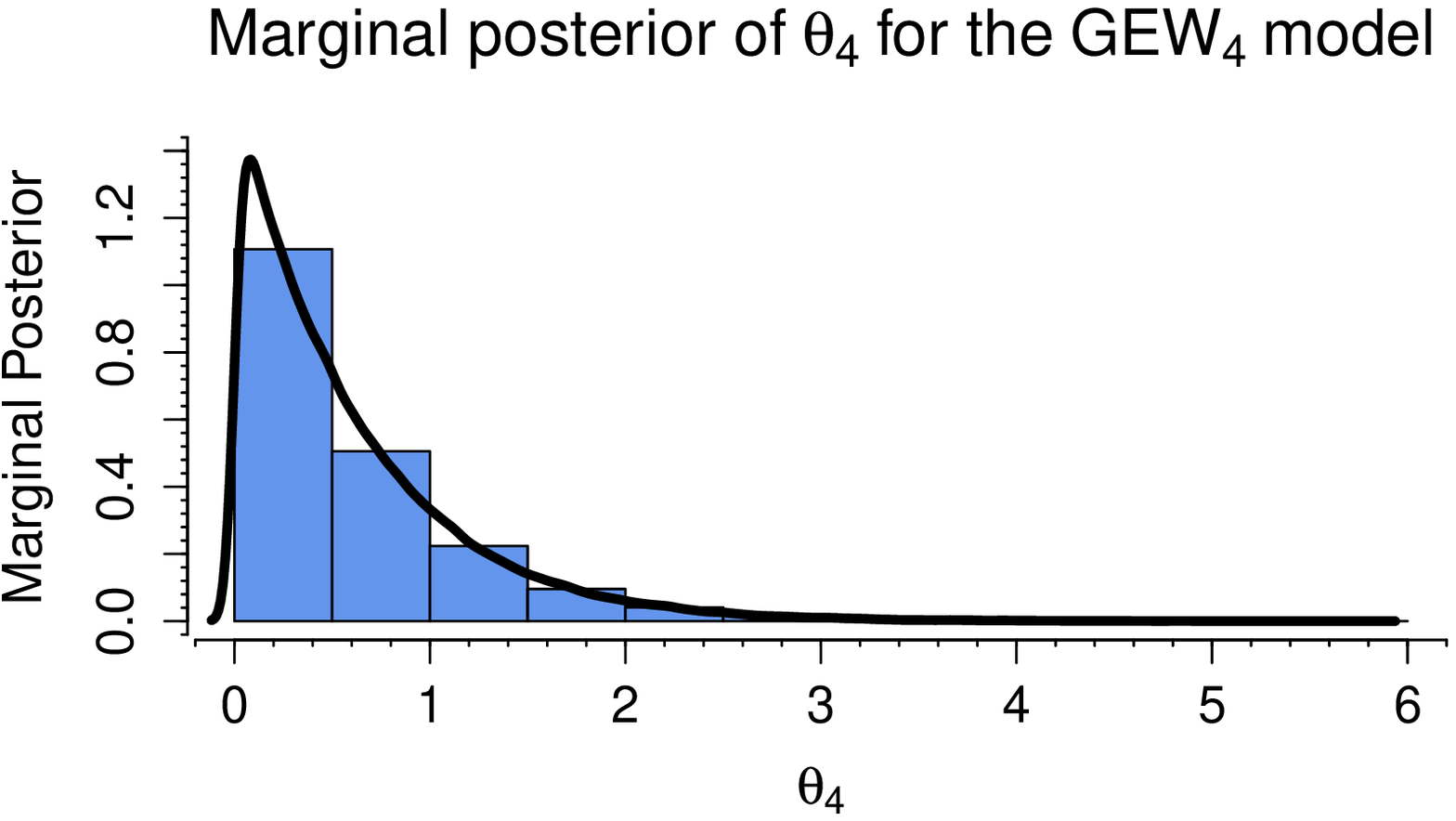}\includegraphics[width=4.8cm,height=2.2cm]{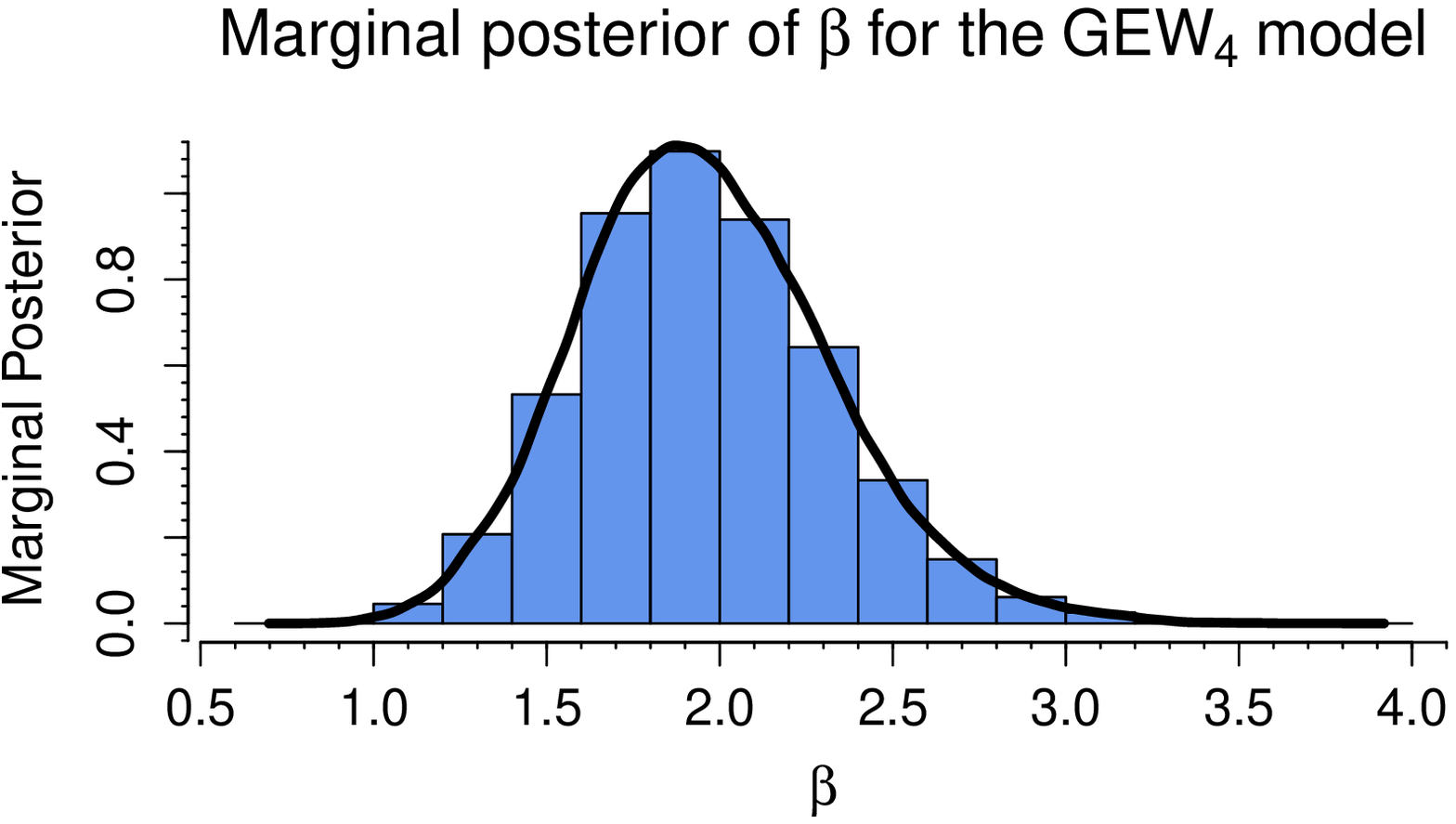}

\caption{Marginal posterior distributions for the GEW models.}
\label{Flo:GEW_MARGINAL}
\end{sidewaysfigure}

\begin{quotation}
Finally, the aim of these Bayesian ALT models is to obtain the predictive
reliability for the item being tested. The predictive reliability
at normal use stress levels is given by
\begin{equation}
R\left(x_{u}\left|\underline{x}\right.\right)=\int\int\int\int\int R\left(x_{u}\left|\theta_{1},\theta_{2},\theta_{3},\theta_{4},\beta\right.\right)\pi\left(\theta_{1},\theta_{2},\theta_{3},\theta_{4},\beta\left|\underline{x}\right.\right)d\theta_{1}d\theta_{2}d\theta_{3}d\theta_{4}d\beta,\label{eq:GEW_PREDREL}
\end{equation}
where $R\left(x_{u}\left|\theta_{1},\theta_{2},\theta_{3},\theta_{4},\beta\right.\right)$
is the Weibull reliability function at use stress levels $T_{u}$
and $S_{u}$. To evaluate $R\left(x_{u}\left|\underline{x}\right.\right)$,
the following is done:\\
1. Sample $\theta_{1},\theta_{2},\theta_{3},\theta_{4}$ and $\beta$
from the posterior $M$ times, where $M$ is a large number.\\
2. Calculate the integral in (\ref{eq:GEW_PREDREL}) by the Monte
Carlo average
\[
R\left(x_{u}\left|\underline{x}\right.\right)\approx\frac{1}{M}\sum_{m=1}^{M}R\left(x_{u}\left|\theta_{1}^{(m)},\theta_{2}^{(m)},\theta_{3}^{(m)},\theta_{4}^{(m)},\beta^{(m)}\right.\right),
\]
which is the expected reliability at time $x_{u}$, using the posterior
sample $\left\{ \theta_{1}^{(m)},\theta_{2}^{(m)},\theta_{3}^{(m)},\theta_{4}^{(m)},\beta^{(m)}\right\} ,$
$m=1,\ldots,M$. Table \ref{Flo:GEW_PREDRELT} and Figure \ref{Flo:GEW_PREDRELG}
show the predictive reliability of the models. Consistent with the
previous findings, the predictive reliability results for the $GEW_{1}$,
$GEW_{2,1}$, $GEW_{2,2}$, $GEW_{3}$ and $GEW_{4}$ models are comparable.
The GEW model is not sensitive to different choices of flat priors.
The models that make use of subjective priors produce significantly
higher predictive reliability results in this case, compared to the
models where flat priors are used. The use of subjective priors can
result in either an underestimation or overestimation of the predictive
reliability, compared to data driven results obtained from employing
flat priors, depending on the choice of hyperparameters for the gamma
priors in the $GEW_{2}$ model.
\begin{table}[H]
{\footnotesize{}\caption{Predictive reliability at use stress $T_{u}=350,S_{u}=0.3$.\label{Flo:GEW_PREDRELT}}
}{\footnotesize\par}
\centering{}{\footnotesize{}\medskip{}
}%
\begin{tabular}{|>{\centering}p{1cm}|>{\centering}m{1.7cm}|>{\centering}m{1.7cm}|>{\centering}m{1.7cm}|>{\centering}m{1.7cm}|>{\centering}m{1.7cm}|>{\centering}m{1.7cm}|>{\centering}m{1.7cm}|>{\centering}m{1.7cm}|}
\hline 
{\small{}Time} & {\small{}$GEW_{1}$} & {\small{}$GEW_{2,1}$} & {\small{}$GEW_{2,2}$} & {\small{}$GEW_{2,3}$} & {\small{}$GEW_{2,4}$} & {\small{}$GEW_{2,5}$} & {\small{}$GEW_{3}$} & {\small{}$GEW_{4}$}\tabularnewline
\hline 
{\small{}1} & {\small{}0.999987} & {\small{}0.999991} & {\small{}0.999990} & {\small{}0.999990} & {\small{}0.999998} & {\small{}1.000000} & {\small{}0.999990} & {\small{}0.999989}\tabularnewline
{\small{}2} & {\small{}0.999968} & {\small{}0.999976} & {\small{}0.999974} & {\small{}0.999974} & {\small{}0.999994} & {\small{}1.000000} & {\small{}0.999974} & {\small{}0.999973}\tabularnewline
{\small{}3} & {\small{}0.999943} & {\small{}0.999957} & {\small{}0.999954} & {\small{}0.999956} & {\small{}0.999989} & {\small{}1.000000} & {\small{}0.999954} & {\small{}0.999952}\tabularnewline
{\small{}$\vdots$} & {\small{}$\vdots$} & {\small{}$\vdots$} & {\small{}$\vdots$} & {\small{}$\vdots$} & {\small{}$\vdots$} &  & {\small{}$\vdots$} & {\small{}$\vdots$}\tabularnewline
{\small{}500} & {\small{}0.826214} & {\small{}0.845024} & {\small{}0.836377} & {\small{}0.894309} & {\small{}0.952155} & {\small{}0.997263} & {\small{}0.840580} & {\small{}0.837681}\tabularnewline
{\small{}501} & {\small{}0.825665} & {\small{}0.844519} & {\small{}0.835848} & {\small{}0.893973} & {\small{}0.951989} & {\small{}0.997251} & {\small{}0.840064} & {\small{}0.837159}\tabularnewline
{\small{}502} & {\small{}0.825115} & {\small{}0.844013} & {\small{}0.835319} & {\small{}0.893638} & {\small{}0.951823} & {\small{}0.997239} & {\small{}0.839547} & {\small{}0.836637}\tabularnewline
{\small{}$\vdots$} & {\small{}$\vdots$} & {\small{}$\vdots$} & {\small{}$\vdots$} & {\small{}$\vdots$} & {\small{}$\vdots$} &  & {\small{}$\vdots$} & {\small{}$\vdots$}\tabularnewline
{\small{}2000} & {\small{}0.179376} & {\small{}0.193196} & {\small{}0.183626} & {\small{}0.369133} & {\small{}0.586761} & {\small{}0.946555} & {\small{}0.186418} & {\small{}0.187901}\tabularnewline
{\small{}2001} & {\small{}0.179176} & {\small{}0.192981} & {\small{}0.183418} & {\small{}0.368884} & {\small{}0.586517} & {\small{}0.946499} & {\small{}0.186206} & {\small{}0.187692}\tabularnewline
{\small{}2002} & {\small{}0.178977} & {\small{}0.192766} & {\small{}0.183210} & {\small{}0.368636} & {\small{}0.586273} & {\small{}0.946442} & {\small{}0.185995} & {\small{}0.187482}\tabularnewline
\hline 
\end{tabular}{\footnotesize\par}
\end{table}
\begin{figure}[H]
\begin{centering}
\includegraphics[scale=1.1]{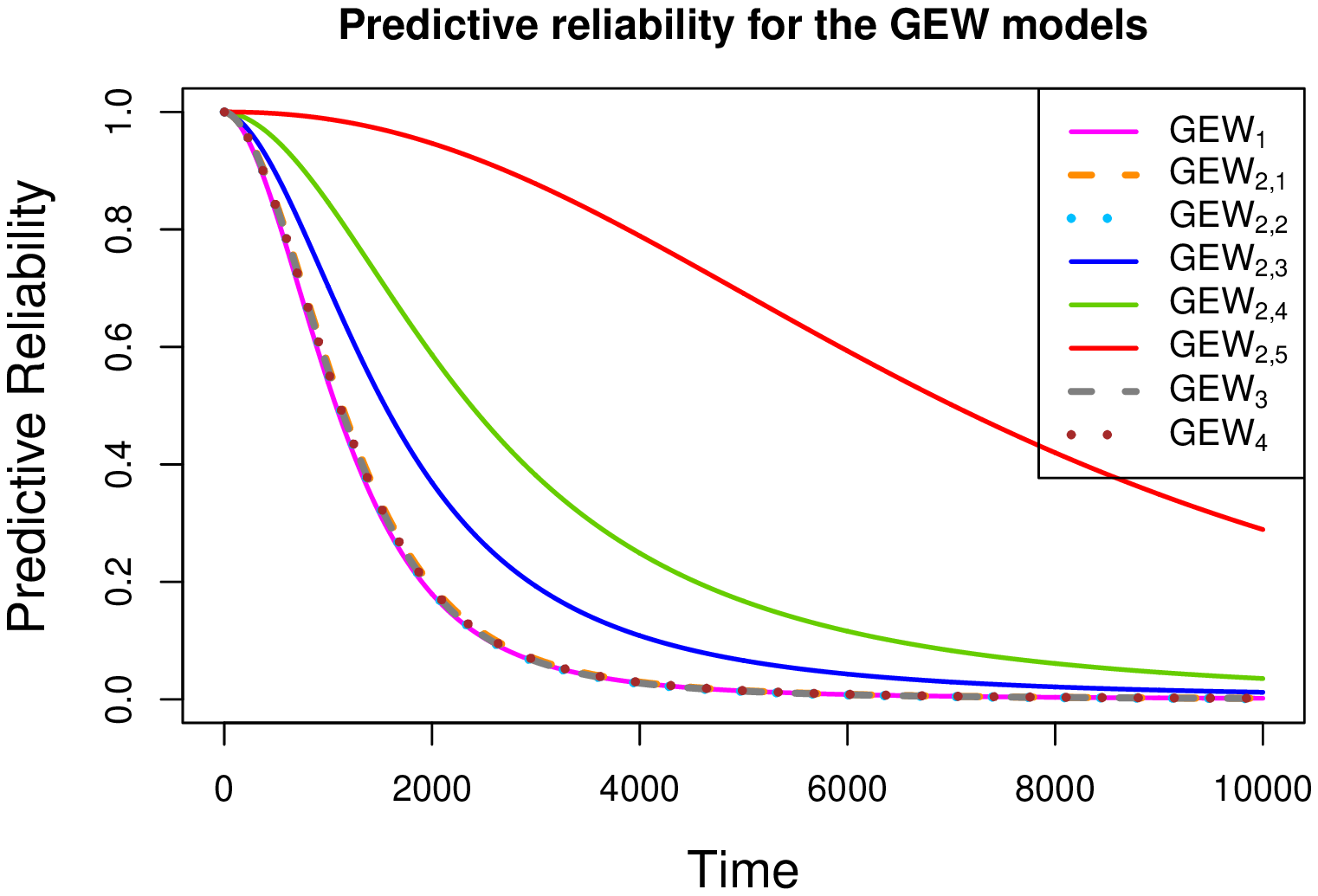}
\par\end{centering}
\caption{Predictive reliability at use stress $T_{u}=350,S_{u}=0.3$.}
\label{Flo:GEW_PREDRELG}
\end{figure}
\end{quotation}

\section{Conclusions\label{sec:GEW_CONCLUSIONS}}

In this paper, a Bayesian ALT model is presented, where the generalised
Eyring model is used as the TTF and lifetimes follow a Weibull distribution.
This is a versatile ALT model which allows for more than one accelerated
stressor, whereas many other TTFs only permit the use of a single
accelerated stressor. A general likelihood formulation is given, which
allows for complete samples, type-I censoring and type-II censoring.
Different GEW models are defined by means of the choice of prior distributions
for the model parameters. The full conditional posteriors for each
model are given and the log-concavity for each is also assessed. Under
no, to very lenient, conditions for the various models, the ARS method
within the Gibbs sampler can be used to obtain posterior samples from
the full conditional distributions. Alternatively, since these are
complex models, slice sampling can also be used to obtain posterior
samples. The models are applied to a data set concerning an electronics
epoxy packaging ALT, where temperature and relative humidity are used
as the accelerated stressors, and the results are compared.

Various choices for the hyper-parameters of the $GEW_{2}$ model are
considered. OpenBUGS is used to generate posterior samples for the
models, and the modified Gelman-Rubin statistic is calculated to assess
the convergence of the Markov chains. The fit of the models are compared
via the DIC. The $GEW_{1}$, $GEW_{2,1}$, $GEW_{2,2}$, $GEW_{3}$
and $GEW_{4}$ models show very similar DIC. The five models where
flat priors are imposed on the parameters produce alike summary statistics,
marginal posteriors and predictive reliability results. The use of
subjective priors in the $GEW_{2,3}$, $GEW_{2,4}$ and $GEW_{2,5}$
models lead to much different results, as noted in the summary statistics,
marginal posteriors and significantly higher predictive reliability.
Subjective priors can thus be utilised to adjust reliability estimates
if the researcher is of the opinion that the use of flat priors either
overestimates or underestimates the predictive reliability. This can
be achieved by means of the choice of hyper-parameters for the gamma
priors in the $GEW_{2}$ model. It may also be possible to adjust
a single parameter that is related to a specific stress, by means
of a subjective prior, if the researcher is of the opinion that the
effect of the accelerated stress is not correctly expressed by a model
where only flat priors are used. The use of priors with very small
variance in the $GEW_{2}$ model is not recommended and great caution
should be exercised by the reliability engineer if doing so.

\section*{References}
\begin{list}{}{}
\item Achcar, J. A. 1993. Approximate Bayesian methods: Some applications
in life testing problems. Brazilian Journal of Probability and Statistics,
7, 135--158. 
\item Achcar, J. A., \& Louzada-Neto, F. 1991. Accelerated life tests with
one stress variable: A Bayesian analysis of the Eyring model. Brazilian
Journal of Probability and Statistics, 5, 169--179. 
\item Ahmad, N. 1990. Bayesian and accelerated reliability analysis. Ph.D.
thesis, Aligarh Muslim University. 
\item Bagnoli, M., \& Bergstrom, T. 2005. Log-concave probability and its
applications. Economic Theory, 26, 445--469. 
\item Banerjee, A., \& Kundu, D. 2008. Inference based on type-II hybrid
censored data from a Weibull distribution. IEEE Transactions on Reliability,
57(2), 369--378. 
\item Barriga, G. D. C., Ho, L. L., \& Cancho, V. G. 2008. Planning accelerated
life tests under exponentiated- Weibull-Arrhenius model. International
Journal of Quality \& Reliability Management, 25(6), 636--653. 
\item Brooks, S. P., \& Gelman, A. 1998. General methods for monitoring
convergence of iterative simulations. Journal of Computational and
Graphical Statistics, 7(4), 434--455. 
\item Chaloner, K., \& Larntz, K. 1992. Bayesian design for accelerated
life testing. Journal of Statistical Planning and Inference, 33, 245--259. 
\item Dietrich, D. L., \& Mazzuchi, T. A. 1996. An alternative method of
analyzing multi-stress multi-level life and accelerated life tests.
Proceedings of the 1996 Annual Reliability and Maintainability Symposium,
90--96. 
\item Erkanli, A., \& Soyer, R. 2000. Simulation-based designs for accelerated
life tests. Journal of Statistical Planning and Inference, 90, 335--348. 
\item Escobar, L. A., \& Meeker, W. Q. 2006. A review of accelerated test
models. Statistical Science, 21(4), 552--577. 
\item Gilks, W. R., \& Wild, P. 1992. Adaptive rejection sampling for Gibbs
sampling. Applied Statistics, 41(2), 337--348. 
\item Kececioglu, D. B. 2002. Reliability and Life Testing Handbook. Lancaster:
DEStech Publications.
\item Kundu, D. 2008. Bayesian inference and life testing plan for the Weibull
distribution in presence of progressive censoring. Technometrics,
50(20), 144--154. 
\item Le�n, R. V., Ramachandran, R., Ashby, A. J., \& Thyagarajan, J. 2007.
Bayesian modeling of accelerated life tests with random effects. Journal
of Quality Technology, 39(1), 3--16. 
\item Mazzuchi, T. A., \& Soyer, R. 1992. A dynamic general linear model
for inference from accelerated life tests. Naval Research Logistics,
39, 757--773. 
\item Mazzuchi, T. A., Soyer, R., \& Vopatek, A. L. 1997. Linear Bayesian
inference for accelerated Weibull model. Lifetime Data Analysis, 3,
63--75. 
\item Mukhopadhyay, C., \& Roy, S. 2016. Bayesian accelerated life testing
under competing log-location-scale family of causes of failure. Computational
Statistics, 31, 89--119. 
\item Neal, R. M. 2003. Slice sampling. The Annals of Statistics, 31(3),
705--767. 
\item Nelson, W. B. 1990. Accelerated Testing: Statistical Models, Test
Plans, and Data Analysis. New York: Wiley. 
\item Pan, R. 2009. A Bayes approach to reliability prediction utilizing
data from accelerated life tests and field failure observations. Quality
and Reliability Engineering International, 25, 229--240. 
\item Papadopoulos, A. S., \& Tsokos, C. P. 1976. Bayesian analysis of the
Weibull failure model with unknown scale and shape parameters. Statistica,
36(4), 547--560. 
\item Perdona, G. C., \& Louzada-Neto, F. 2005. An approximate Bayesian
analysis for accelerated tests with log-non-linear stress-response
relationship. Journal of Mathematics and Statistics, 23(1), 59--65. 
\item ReliaSoft Corporation. 2015. Accelerated life testing reference. \url{http://www:synthesisplatform:net/references/Accelerated_Life_Testing_Reference:pdf.} 
\item Sha, N. 2018. Statistical inference for progressive stress accelerated
life testing with Birnbaum-Saunders distribution. Stats, 1, 189--203. 
\item Singpurwalla, N. D. 1971a. Inference from accelerated life tests when
observations are obtained from censored samples. Technometrics, 13(1),
161--170. 
\item Singpurwalla, N. D. 1971b. A problem in accelerated life testing.
Journal of the American Statistical Association, 66(336), 841--845. 
\item Singpurwalla, N. D. 1973. Inference from accelerated life tests using
Arrhenius type re-parameterizations. Technometrics, 15(2), 289--299. 
\item Singpurwalla, N. D., Castellino, V. C., \& Goldschen, D. Y. 1975.
Inference from accelerated life tests using Eyring type re-parameterizations.
Naval Research Logistics Quarterly, 22(2), 289--296. 
\item Soland, R. M. 1969. Bayesian analysis of the Weibull process with
unknown scale and shape parameters. IEEE Transactions on Reliability,
18(4), 181--184. 
\item Soyer, R. 2008. Accelerated life tests: Bayesian design. Encyclopedia
of Statistics in Quality and Reliability, 1, 12--20.
\item Soyer, R., Erkanli, A., \& Merrick, J. R. 2008. Accelerated life tests:
Bayesian models. Encyclopedia of Statistics in Quality and Reliability,
1, 20--30. 
\item Spiegelhalter, D. J., Best, N. G., Carlin, B. P., \& Van der Linde,
A. 2002. Bayesian measures of model complexity and fit. Journal of
the Royal Statistical Society, Series B, 64(4), 583--639. 
\item Thiraviam, A. R. 2010. Accelerated life testing of subsea equipment
under hydrostatic pressure. Ph.D. thesis, University of Central Florida. 
\item Tsokos, C. P. 1972. A Bayesian approach to reliability: Theory and
simulation. Proceedings of the 1972 Annual Reliability and Maintainability
Symposium, 78--87. 
\item Upadhyay, S. K., \& Mukherjee, B. 2010. Bayes analysis and comparison
of accelerated Weibull and accelerated Birnbaum-Saunders models. Communications
in Statistics - Theory and Methods, 39, 195--213. 
\item Van Dorp, J. R., \& Mazzuchi, T. A. 2004. A general Bayes exponential
inference model for accelerated life testing. Journal of Statistical
Planning and Inference, 119, 55--74. 
\item Van Dorp, J. R., \& Mazzuchi, T. A. 2005. A general Bayes Weibull
inference model for accelerated life testing. Reliability Engineering
and System Safety, 90(90), 140--147. 
\item Van Dorp, J. R., Mazzuchi, T. A., Fornell, G. E., \& Pollock, L. R.
1996. A Bayes approach to step-stress accelerated life testing. IEEE
Transactions on Reliability, 45(3), 491--498. 
\item Van Dorp, J. R., Mazzuchi, T. A., \& Garcidue�as, J. E. 2006. A comparison
of accelerated life testing designs within a single Bayesian inferential
framework. RAMS\textquoteright 06. Annual Reliability and Maintainability
Symposium, 208--214. 
\item Yuan, T., Liu, X., Ramadan, S. Z., \& Kuo, Y. 2014. Bayesian analysis
for accelerated life tests using a Dirichlet process Weibull mixture
model. IEEE Transactions on Reliability, 63(1), 58--67.
\end{list}

\section*{Appendix}

In this appendix, the log-concavity of the full conditional posteriors
for the GEW models in Section \ref{sec:PRIORS_POSTERIORS} will be
discussed. The log-concavity of the $GEW_{1}$ and $GEW_{2}$ models
are evaluated. The results can then easily be extended to the $GEW_{3}$
and $GEW_{4}$ models.

A twice-differentiable function $f(x)$ is said to be log-concave
if the second derivative of its natural log is non-positive on its
domain (see, for example, Bagnoli \& Bergstrom, 2005), thus if
\[
\frac{\partial^{2}\ln\left[f(x)\right]}{\partial x^{2}}\leq0\quad\forall x.
\]

\begin{thm}
The full conditional posterior distributions of the $GEW_{1}$ model
are all log-concave on their domains.
\end{thm}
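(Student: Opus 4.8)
\noindent\emph{Proof outline.} Each of the five full conditional posteriors of the $GEW_{1}$ model is a smooth function on its open-interval domain, so the plan is simply to verify the second-derivative criterion stated above for each of them, treating the parameter in question as the variable and all remaining parameters (together with the data) as fixed constants. In every case the recipe is the same: take the logarithm of the full conditional, discard additive constants, differentiate twice with respect to the relevant parameter, and check that each surviving term is nonpositive. Because the $GEW_{1}$ priors are uniform, hence flat on their supports, the prior adds nothing to the log beyond a constant, so it suffices to treat the likelihood factors. The computation splits into the four ``location'' parameters $\theta_{1},\theta_{2},\theta_{3},\theta_{4}$, which share a common form, and the shape parameter $\beta$, handled separately; since the second derivative will turn out to be nonpositive on the whole real line (respectively on $(0,\infty)$ in the case of $\beta$), it is a fortiori nonpositive on the bounded interval that forms the actual domain.

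First I would dispatch $\theta_{1},\theta_{2},\theta_{3},\theta_{4}$. Collecting the $\theta_{1}$-dependence gives
\begin{multline*}
\ln\pi_{1}(\theta_{1}\mid\cdots)=-\theta_{1}\sum_{i=1}^{k}r_{i}\\
-e^{-\theta_{1}}\left[\sum_{i=1}^{k}(n_{i}-r_{i})T_{i}e^{-\theta_{2}/T_{i}-\theta_{3}V_{i}-\theta_{4}V_{i}/T_{i}}\tau_{i}^{\beta}+\sum_{i=1}^{k}\sum_{j=1}^{r_{i}}T_{i}e^{-\theta_{2}/T_{i}-\theta_{3}V_{i}-\theta_{4}V_{i}/T_{i}}x_{ij}^{\beta}\right]+\mathrm{const},
\end{multline*}
in which the bracketed quantity is a nonnegative constant in $\theta_{1}$; differentiating twice leaves $-e^{-\theta_{1}}$ times that constant, which is $\le 0$. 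For $\theta_{2}$ the exponential factor $e^{-\theta_{2}/T_{i}}$ varies with $i$, so the relevant part of the log has the form $-\sum_{i}D_{i}e^{-\theta_{2}/T_{i}}$ with each $D_{i}\ge 0$, whose second derivative is $-\sum_{i}D_{i}T_{i}^{-2}e^{-\theta_{2}/T_{i}}\le 0$, while the linear term $-\theta_{2}\sum_{i}r_{i}/T_{i}$ contributes nothing. The parameters $\theta_{3}$ and $\theta_{4}$ are treated verbatim after replacing $1/T_{i}$ by $V_{i}$ and by $V_{i}/T_{i}$ respectively, producing second derivatives $-\sum_{i}E_{i}V_{i}^{2}e^{-\theta_{3}V_{i}}\le 0$ and $-\sum_{i}F_{i}(V_{i}/T_{i})^{2}e^{-\theta_{4}V_{i}/T_{i}}\le 0$ with $E_{i},F_{i}\ge 0$. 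These steps use only that $T_{i}>0$.

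The remaining and slightly more delicate case is $\beta$. Writing $B_{i}=T_{i}\exp(-\theta_{1}-\theta_{2}/T_{i}-\theta_{3}V_{i}-\theta_{4}V_{i}/T_{i})>0$ and re-expressing the powers as exponentials, $\tau_{i}^{\beta}=e^{\beta\ln\tau_{i}}$ and $x_{ij}^{\beta}=e^{\beta\ln x_{ij}}$, the log of the full conditional becomes
\begin{multline*}
\ln\pi_{1}(\beta\mid\cdots)=\left(\sum_{i=1}^{k}r_{i}\right)\ln\beta+(\beta-1)\sum_{i=1}^{k}\sum_{j=1}^{r_{i}}\ln x_{ij}\\
-\sum_{i=1}^{k}(n_{i}-r_{i})B_{i}e^{\beta\ln\tau_{i}}-\sum_{i=1}^{k}\sum_{j=1}^{r_{i}}B_{i}e^{\beta\ln x_{ij}}+\mathrm{const}.
\end{multline*}
Differentiating twice: the $\ln\beta$ term contributes $-(\sum_{i}r_{i})/\beta^{2}\le 0$ on $\beta>0$, the linear term contributes $0$, and each exponential term $-B_{i}e^{\beta\ln\tau_{i}}$ contributes $-B_{i}(\ln\tau_{i})^{2}e^{\beta\ln\tau_{i}}\le 0$ (and likewise for the $x_{ij}$ terms), so the total is nonpositive. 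Here the positivity of the failure and censoring times ensures that $\ln\beta$, $\ln\tau_{i}$ and $\ln x_{ij}$ are all well defined on the domain.

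I expect the only genuine subtlety to lie in choosing the right way to write the $\beta$-conditional: the key observations are that $\tau_{i}^{\beta}$ and $x_{ij}^{\beta}$ are exponential functions of $\beta$, so the subtracted terms are concave regardless of the signs of $\ln\tau_{i}$ and $\ln x_{ij}$, and that $\beta\mapsto\ln\beta$ is concave on $(0,\infty)$. The $\theta_{j}$ computations are entirely routine, and---in contrast with models that place gamma priors on the $\theta_{j}$ or on $\beta$, where conditions on the shape hyperparameters are needed---no restriction on the prior enters here, precisely because the $GEW_{1}$ priors are uniform.
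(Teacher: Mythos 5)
Your proposal is correct and follows essentially the same route as the paper: take the logarithm of each full conditional, differentiate twice in the relevant parameter, and observe that every surviving term is nonpositive (the $\theta_j$ terms being of the form $-\sum_i D_i c_i^2 e^{-c_i\theta_j}$ with $D_i\ge 0$, and the $\beta$ terms contributing $-(\sum_i r_i)/\beta^2$ plus sums weighted by $\ln^2(\tau_i)$ and $\ln^2(x_{ij})$). The paper simply writes out the derivatives explicitly for each of the five conditionals rather than grouping them, so no substantive difference remains.
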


\begin{proof}
For the $GEW_{1}$ model, the second derivatives of the natural logs
of the full conditional posteriors are determined as follows
\begin{eqnarray*}
\ell_{1,\theta_{1}} & = & \ln\left[\pi_{1}\left(\theta_{1}\left|\underline{x},\theta_{2},\theta_{3},\theta_{4},\beta\right.\right)\right]\\
 & = & -\theta_{1}\sum_{i=1}^{k}r_{i}-\sum_{i=1}^{k}\left(n_{i}-r_{i}\right)T_{i}\exp\left(-\theta_{1}-\frac{\theta_{2}}{T_{i}}-\theta_{3}V_{i}-\frac{\theta_{4}V_{i}}{T_{i}}\right)\tau_{i}^{\beta}\\
 &  & -\sum_{i=1}^{k}\sum_{j=1}^{r_{i}}T_{i}\exp\left(-\theta_{1}-\frac{\theta_{2}}{T_{i}}-\theta_{3}V_{i}-\frac{\theta_{4}V_{i}}{T_{i}}\right)x_{ij}^{\beta}\\
\frac{\partial\ell_{1,\theta_{1}}}{\partial\theta_{1}} & = & -\sum_{i=1}^{k}r_{i}+\sum_{i=1}^{k}\left(n_{i}-r_{i}\right)T_{i}\exp\left(-\theta_{1}-\frac{\theta_{2}}{T_{i}}-\theta_{3}V_{i}-\frac{\theta_{4}V_{i}}{T_{i}}\right)\tau_{i}^{\beta}\\
 &  & +\sum_{i=1}^{k}\sum_{j=1}^{r_{i}}T_{i}\exp\left(-\theta_{1}-\frac{\theta_{2}}{T_{i}}-\theta_{3}V_{i}-\frac{\theta_{4}V_{i}}{T_{i}}\right)x_{ij}^{\beta}\\
\frac{\partial^{2}\ell_{1,\theta_{1}}}{\partial\theta_{1}^{2}} & = & -\sum_{i=1}^{k}\left(n_{i}-r_{i}\right)T_{i}\exp\left(-\theta_{1}-\frac{\theta_{2}}{T_{i}}-\theta_{3}V_{i}-\frac{\theta_{4}V_{i}}{T_{i}}\right)\tau_{i}^{\beta}\\
 &  & -\sum_{i=1}^{k}\sum_{j=1}^{r_{i}}T_{i}\exp\left(-\theta_{1}-\frac{\theta_{2}}{T_{i}}-\theta_{3}V_{i}-\frac{\theta_{4}V_{i}}{T_{i}}\right)x_{ij}^{\beta}
\end{eqnarray*}
\begin{eqnarray*}
\ell_{1,\theta_{2}} & = & \ln\left[\pi_{1}\left(\theta_{2}\left|\underline{x},\theta_{1},\theta_{3},\theta_{4},\beta\right.\right)\right]\\
 & = & -\theta_{2}\sum_{i=1}^{k}\frac{r_{i}}{T_{i}}-\sum_{i=1}^{k}\left(n_{i}-r_{i}\right)T_{i}\exp\left(-\theta_{1}-\frac{\theta_{2}}{T_{i}}-\theta_{3}V_{i}-\frac{\theta_{4}V_{i}}{T_{i}}\right)\tau_{i}^{\beta}\\
 &  & -\sum_{i=1}^{k}\sum_{j=1}^{r_{i}}T_{i}\exp\left(-\theta_{1}-\frac{\theta_{2}}{T_{i}}-\theta_{3}V_{i}-\frac{\theta_{4}V_{i}}{T_{i}}\right)x_{ij}^{\beta}\\
\frac{\partial\ell_{1,\theta_{2}}}{\partial\theta_{2}} & = & -\sum_{i=1}^{k}\frac{r_{i}}{T_{i}}+\sum_{i=1}^{k}\left(n_{i}-r_{i}\right)\exp\left(-\theta_{1}-\frac{\theta_{2}}{T_{i}}-\theta_{3}V_{i}-\frac{\theta_{4}V_{i}}{T_{i}}\right)\tau_{i}^{\beta}\\
 &  & +\sum_{i=1}^{k}\sum_{j=1}^{r_{i}}\exp\left(-\theta_{1}-\frac{\theta_{2}}{T_{i}}-\theta_{3}V_{i}-\frac{\theta_{4}V_{i}}{T_{i}}\right)x_{ij}^{\beta}\\
\frac{\partial^{2}\ell_{1,\theta_{2}}}{\partial\theta_{2}^{2}} & = & -\sum_{i=1}^{k}\left(n_{i}-r_{i}\right)\frac{1}{T_{i}}\exp\left(-\theta_{1}-\frac{\theta_{2}}{T_{i}}-\theta_{3}V_{i}-\frac{\theta_{4}V_{i}}{T_{i}}\right)\tau_{i}^{\beta}\\
 &  & -\sum_{i=1}^{k}\sum_{j=1}^{r_{i}}\frac{1}{T_{i}}\exp\left(-\theta_{1}-\frac{\theta_{2}}{T_{i}}-\theta_{3}V_{i}-\frac{\theta_{4}V_{i}}{T_{i}}\right)x_{ij}^{\beta}
\end{eqnarray*}
\begin{eqnarray*}
\ell_{1,\theta_{3}} & = & \ln\left[\pi_{1}\left(\theta_{3}\left|\underline{x},\theta_{1},\theta_{2},\theta_{4},\beta\right.\right)\right]\\
 & = & -\theta_{3}\sum_{i=1}^{k}r_{i}V_{i}-\sum_{i=1}^{k}\left(n_{i}-r_{i}\right)T_{i}\exp\left(-\theta_{1}-\frac{\theta_{2}}{T_{i}}-\theta_{3}V_{i}-\frac{\theta_{4}V_{i}}{T_{i}}\right)\tau_{i}^{\beta}\\
 &  & -\sum_{i=1}^{k}\sum_{j=1}^{r_{i}}T_{i}\exp\left(-\theta_{1}-\frac{\theta_{2}}{T_{i}}-\theta_{3}V_{i}-\frac{\theta_{4}V_{i}}{T_{i}}\right)x_{ij}^{\beta}\\
\frac{\partial\ell_{1,\theta_{3}}}{\partial\theta_{3}} & = & -\sum_{i=1}^{k}r_{i}V_{i}+\sum_{i=1}^{k}\left(n_{i}-r_{i}\right)T_{i}V_{i}\exp\left(-\theta_{1}-\frac{\theta_{2}}{T_{i}}-\theta_{3}V_{i}-\frac{\theta_{4}V_{i}}{T_{i}}\right)\tau_{i}^{\beta}\\
 &  & +\sum_{i=1}^{k}\sum_{j=1}^{r_{i}}T_{i}V_{i}\exp\left(-\theta_{1}-\frac{\theta_{2}}{T_{i}}-\theta_{3}V_{i}-\frac{\theta_{4}V_{i}}{T_{i}}\right)x_{ij}^{\beta}\\
\frac{\partial^{2}\ell_{1,\theta_{3}}}{\partial\theta_{3}^{2}} & = & -\sum_{i=1}^{k}\left(n_{i}-r_{i}\right)T_{i}V_{i}^{2}\exp\left(-\theta_{1}-\frac{\theta_{2}}{T_{i}}-\theta_{3}V_{i}-\frac{\theta_{4}V_{i}}{T_{i}}\right)\tau_{i}^{\beta}\\
 &  & -\sum_{i=1}^{k}\sum_{j=1}^{r_{i}}T_{i}V_{i}^{2}\exp\left(-\theta_{1}-\frac{\theta_{2}}{T_{i}}-\theta_{3}V_{i}-\frac{\theta_{4}V_{i}}{T_{i}}\right)x_{ij}^{\beta}
\end{eqnarray*}
\begin{eqnarray*}
\ell_{1,\theta_{4}} & = & \ln\left[\pi_{1}\left(\theta_{4}\left|\underline{x},\theta_{1},\theta_{2},\theta_{3},\beta\right.\right)\right]\\
 & = & -\theta_{4}\sum_{i=1}^{k}\frac{r_{i}V_{i}}{T_{i}}-\sum_{i=1}^{k}\left(n_{i}-r_{i}\right)T_{i}\exp\left(-\theta_{1}-\frac{\theta_{2}}{T_{i}}-\theta_{3}V_{i}-\frac{\theta_{4}V_{i}}{T_{i}}\right)\tau_{i}^{\beta}\\
 &  & -\sum_{i=1}^{k}\sum_{j=1}^{r_{i}}T_{i}\exp\left(-\theta_{1}-\frac{\theta_{2}}{T_{i}}-\theta_{3}V_{i}-\frac{\theta_{4}V_{i}}{T_{i}}\right)x_{ij}^{\beta}\\
\frac{\partial\ell_{1,\theta_{4}}}{\partial\theta_{4}} & = & -\sum_{i=1}^{k}\frac{r_{i}V_{i}}{T_{i}}+\sum_{i=1}^{k}\left(n_{i}-r_{i}\right)V_{i}\exp\left(-\theta_{1}-\frac{\theta_{2}}{T_{i}}-\theta_{3}V_{i}-\frac{\theta_{4}V_{i}}{T_{i}}\right)\tau_{i}^{\beta}\\
 &  & +\sum_{i=1}^{k}\sum_{j=1}^{r_{i}}V_{i}\exp\left(-\theta_{1}-\frac{\theta_{2}}{T_{i}}-\theta_{3}V_{i}-\frac{\theta_{4}V_{i}}{T_{i}}\right)x_{ij}^{\beta}\\
\frac{\partial^{2}\ell_{1,\theta_{4}}}{\partial\theta_{4}^{2}} & = & -\sum_{i=1}^{k}\left(n_{i}-r_{i}\right)\frac{V_{i}^{2}}{T_{i}}\exp\left(-\theta_{1}-\frac{\theta_{2}}{T_{i}}-\theta_{3}V_{i}-\frac{\theta_{4}V_{i}}{T_{i}}\right)\tau_{i}^{\beta}\\
 &  & -\sum_{i=1}^{k}\sum_{j=1}^{r_{i}}\frac{V_{i}^{2}}{T_{i}}\exp\left(-\theta_{1}-\frac{\theta_{2}}{T_{i}}-\theta_{3}V_{i}-\frac{\theta_{4}V_{i}}{T_{i}}\right)x_{ij}^{\beta}
\end{eqnarray*}
\begin{eqnarray*}
\ell_{1,\beta} & = & \ln\left[\pi_{1}\left(\beta\left|\underline{x},\theta_{1},\theta_{2},\theta_{3},\theta_{4}\right.\right)\right]\\
 & = & \ln\left(\beta\right)\sum_{i=1}^{k}r_{i}-\sum_{i=1}^{k}\left(n_{i}-r_{i}\right)T_{i}\exp\left(-\theta_{1}-\frac{\theta_{2}}{T_{i}}-\theta_{3}V_{i}-\frac{\theta_{4}V_{i}}{T_{i}}\right)\tau_{i}^{\beta}\\
 &  & -\sum_{i=1}^{k}\sum_{j=1}^{r_{i}}T_{i}\exp\left(-\theta_{1}-\frac{\theta_{2}}{T_{i}}-\theta_{3}V_{i}-\frac{\theta_{4}V_{i}}{T_{i}}\right)x_{ij}^{\beta}+\left(\beta-1\right)\sum_{i=1}^{k}\sum_{j=1}^{r_{i}}\ln\left(x_{ij}\right)\\
\frac{\partial\ell_{1,\beta}}{\partial\beta} & = & \frac{1}{\beta}\sum_{i=1}^{k}r_{i}-\sum_{i=1}^{k}\left(n_{i}-r_{i}\right)T_{i}\exp\left(-\theta_{1}-\frac{\theta_{2}}{T_{i}}-\theta_{3}V_{i}-\frac{\theta_{4}V_{i}}{T_{i}}\right)\tau_{i}^{\beta}\ln\left(\tau_{i}\right)\\
 &  & -\sum_{i=1}^{k}\sum_{j=1}^{r_{i}}T_{i}\exp\left(-\theta_{1}-\frac{\theta_{2}}{T_{i}}-\theta_{3}V_{i}-\frac{\theta_{4}V_{i}}{T_{i}}\right)x_{ij}^{\beta}\ln\left(x_{ij}\right)+\sum_{i=1}^{k}\sum_{j=1}^{r_{i}}\ln\left(x_{ij}\right)\\
\frac{\partial^{2}\ell_{1,\beta}}{\partial\beta^{2}} & = & -\frac{1}{\beta^{2}}\sum_{i=1}^{k}r_{i}-\sum_{i=1}^{k}\left(n_{i}-r_{i}\right)T_{i}\exp\left(-\theta_{1}-\frac{\theta_{2}}{T_{i}}-\theta_{3}V_{i}-\frac{\theta_{4}V_{i}}{T_{i}}\right)\tau_{i}^{\beta}\ln^{2}\left(\tau_{i}\right)\\
 &  & -\sum_{i=1}^{k}\sum_{j=1}^{r_{i}}T_{i}\exp\left(-\theta_{1}-\frac{\theta_{2}}{T_{i}}-\theta_{3}V_{i}-\frac{\theta_{4}V_{i}}{T_{i}}\right)x_{ij}^{\beta}\ln^{2}\left(x_{ij}\right).
\end{eqnarray*}
Since $T_{i}\geq0$ (temperature measured in kelvin), $r_{i}\leq n_{i}$
(can not be more failures than items tested), $\tau_{i}\geq0$ (censoring
time), $x_{ij}\geq0$ (failure time), $\beta>0$ (shape parameter
of the Weibull distribution), $\exp\left(\cdot\right)\geq0$, $V_{i}^{2}\geq0$
and $\ln^{2}\left(\cdot\right)\geq0$, the full conditional posteriors
for the $GEW_{1}$ model are confirmed to be log-concave on their
domains.
\end{proof}
\begin{thm}
The full conditional posterior distributions of the $GEW_{2}$ model
are all log-concave on their domains, subject to $c_{10},c_{12},c_{14},c_{16},\sum_{i=1}^{k}r_{i}\geq1$.
\end{thm}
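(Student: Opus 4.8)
The plan is to build directly on Theorem~1. Since the priors in the $GEW_2$ model are independent, each of its five full conditional posteriors is, up to proportionality, the corresponding $GEW_1$ full conditional multiplied by a single gamma prior density. Passing to logarithms, the log of each $GEW_2$ full conditional is therefore exactly the function whose second derivative was computed in the proof of Theorem~1, plus the extra contribution of that one gamma factor. Because a product of log-concave functions is log-concave, it suffices to check that each gamma factor is log-concave on its domain and then invoke Theorem~1; the proof then reduces to five short additions.

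First I would treat $\theta_m$ for $m\in\{1,2,3,4\}$ uniformly. Writing $a_m,b_m$ for the shape and rate of the gamma prior on $\theta_m$ (so $(a_1,b_1)=(c_{10},c_{11})$, and so on), we have
\[
\ell_{2,\theta_m}=\ln\left[\pi_2\left(\theta_m\mid\underline{x},\ldots\right)\right]=\ell_{1,\theta_m}+(a_m-1)\ln\theta_m-b_m\theta_m ,
\]
with $\ell_{1,\theta_m}$ exactly the function from the proof of Theorem~1 (a smooth function whose second derivative was shown to be non-positive). Differentiating twice,
\[
\frac{\partial^{2}\ell_{2,\theta_m}}{\partial\theta_m^{2}}=\frac{\partial^{2}\ell_{1,\theta_m}}{\partial\theta_m^{2}}-\frac{a_m-1}{\theta_m^{2}} ,
\]
where the first summand is non-positive by Theorem~1 and the second is non-positive precisely when $a_m\ge1$. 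This yields the conditions $c_{10},c_{12},c_{14},c_{16}\ge1$ in the statement, and shows that the rate hyperparameters $c_{11},c_{13},c_{15},c_{17}$ need no constraint, since $-b_m\theta_m$ contributes nothing to the second derivative.

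Next I would handle $\beta$: the same manipulation gives $\ell_{2,\beta}=\ell_{1,\beta}+(c_{18}-1)\ln\beta-c_{19}\beta$, so, reading $\partial^{2}\ell_{1,\beta}/\partial\beta^{2}$ off the proof of Theorem~1,
\[
\frac{\partial^{2}\ell_{2,\beta}}{\partial\beta^{2}}=\frac{\partial^{2}\ell_{1,\beta}}{\partial\beta^{2}}-\frac{c_{18}-1}{\beta^{2}}=-\frac{1}{\beta^{2}}\left(c_{18}-1+\sum_{i=1}^{k}r_{i}\right)-(\text{non-negative terms}),
\]
the non-negative terms being controlled exactly as in Theorem~1. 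Everything therefore comes down to the sign of $c_{18}-1+\sum_{i=1}^{k}r_{i}$, and this is the one point where the $GEW_2$ argument genuinely departs from a verbatim copy of Theorem~1; I expect it to be the main (albeit very mild) obstacle. One cannot simply demand $c_{18}\ge1$, because the theorem asks only for $\sum_{i=1}^{k}r_{i}\ge1$. The resolution is to use the prior specification $c_{18}>0$ already in force: combined with $\sum_{i=1}^{k}r_{i}\ge1$ it gives $c_{18}-1+\sum_{i=1}^{k}r_{i}>0$, so the bracket is strictly positive and the whole second derivative is non-positive. Collecting the five cases proves the theorem, and the fact that $c_{17},c_{19}$ (and the other rate hyperparameters) carry no hypothesis is consistent with their entering only through linear terms.
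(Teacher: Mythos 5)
Your proposal is correct and follows essentially the same route as the paper: the paper likewise computes each second derivative as the $GEW_{1}$ expression plus the gamma-prior contribution $\frac{1-c}{\theta^{2}}$, yielding the conditions $c_{10},c_{12},c_{14},c_{16}\geq1$. Your explicit handling of the $\beta$ case --- combining $c_{18}>0$ from the prior specification with $\sum_{i=1}^{k}r_{i}\geq1$ to make the bracket positive --- is a welcome clarification of a detail the paper leaves implicit, but it is the same argument.
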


\begin{proof}
The second derivatives of the natural logs of the full conditional
posteriors for the $GEW_{2}$ model are given by
\begin{eqnarray*}
\ell_{2,\theta_{1}} & = & \ln\left[\pi_{2}\left(\theta_{1}\left|\underline{x},\theta_{2},\theta_{3},\theta_{4},\beta\right.\right)\right]\\
 & = & \left(c_{10}-1\right)\ln\left(\theta_{1}\right)-c_{11}\theta_{1}-\theta_{1}\sum_{i=1}^{k}r_{i}-\sum_{i=1}^{k}\left(n_{i}-r_{i}\right)T_{i}\exp\left(-\theta_{1}-\frac{\theta_{2}}{T_{i}}-\theta_{3}V_{i}-\frac{\theta_{4}V_{i}}{T_{i}}\right)\tau_{i}^{\beta}\\
 &  & -\sum_{i=1}^{k}\sum_{j=1}^{r_{i}}T_{i}\exp\left(-\theta_{1}-\frac{\theta_{2}}{T_{i}}-\theta_{3}V_{i}-\frac{\theta_{4}V_{i}}{T_{i}}\right)x_{ij}^{\beta}\\
\frac{\partial\ell_{2,\theta_{1}}}{\partial\theta_{1}} & = & \frac{c_{10}-1}{\theta_{1}}-c_{11}-\sum_{i=1}^{k}r_{i}+\sum_{i=1}^{k}\left(n_{i}-r_{i}\right)T_{i}\exp\left(-\theta_{1}-\frac{\theta_{2}}{T_{i}}-\theta_{3}V_{i}-\frac{\theta_{4}V_{i}}{T_{i}}\right)\tau_{i}^{\beta}\\
 &  & +\sum_{i=1}^{k}\sum_{j=1}^{r_{i}}T_{i}\exp\left(-\theta_{1}-\frac{\theta_{2}}{T_{i}}-\theta_{3}V_{i}-\frac{\theta_{4}V_{i}}{T_{i}}\right)x_{ij}^{\beta}\\
\frac{\partial^{2}\ell_{2,\theta_{1}}}{\partial\theta_{1}^{2}} & = & \frac{1-c_{10}}{\theta_{1}^{2}}-\sum_{i=1}^{k}\left(n_{i}-r_{i}\right)T_{i}\exp\left(-\theta_{1}-\frac{\theta_{2}}{T_{i}}-\theta_{3}V_{i}-\frac{\theta_{4}V_{i}}{T_{i}}\right)\tau_{i}^{\beta}\\
 &  & -\sum_{i=1}^{k}\sum_{j=1}^{r_{i}}T_{i}\exp\left(-\theta_{1}-\frac{\theta_{2}}{T_{i}}-\theta_{3}V_{i}-\frac{\theta_{4}V_{i}}{T_{i}}\right)x_{ij}^{\beta}
\end{eqnarray*}
\begin{eqnarray*}
\ell_{2,\theta_{2}} & = & \ln\left[\pi_{2}\left(\theta_{2}\left|\underline{x},\theta_{1},\theta_{3},\theta_{4},\beta\right.\right)\right]\\
 & = & \left(c_{12}-1\right)\ln\left(\theta_{2}\right)-c_{13}\theta_{2}-\theta_{2}\sum_{i=1}^{k}\frac{r_{i}}{T_{i}}-\sum_{i=1}^{k}\left(n_{i}-r_{i}\right)T_{i}\exp\left(-\theta_{1}-\frac{\theta_{2}}{T_{i}}-\theta_{3}V_{i}-\frac{\theta_{4}V_{i}}{T_{i}}\right)\tau_{i}^{\beta}\\
 &  & -\sum_{i=1}^{k}\sum_{j=1}^{r_{i}}T_{i}\exp\left(-\theta_{1}-\frac{\theta_{2}}{T_{i}}-\theta_{3}V_{i}-\frac{\theta_{4}V_{i}}{T_{i}}\right)x_{ij}^{\beta}\\
\frac{\partial\ell_{2,\theta_{2}}}{\partial\theta_{2}} & = & \frac{c_{12}-1}{\theta_{2}}-c_{13}-\sum_{i=1}^{k}\frac{r_{i}}{T_{i}}+\sum_{i=1}^{k}\left(n_{i}-r_{i}\right)\exp\left(-\theta_{1}-\frac{\theta_{2}}{T_{i}}-\theta_{3}V_{i}-\frac{\theta_{4}V_{i}}{T_{i}}\right)\tau_{i}^{\beta}\\
 &  & +\sum_{i=1}^{k}\sum_{j=1}^{r_{i}}\exp\left(-\theta_{1}-\frac{\theta_{2}}{T_{i}}-\theta_{3}V_{i}-\frac{\theta_{4}V_{i}}{T_{i}}\right)x_{ij}^{\beta}\\
\frac{\partial^{2}\ell_{2,\theta_{2}}}{\partial\theta_{2}^{2}} & = & \frac{1-c_{12}}{\theta_{2}^{2}}-\sum_{i=1}^{k}\left(n_{i}-r_{i}\right)\frac{1}{T_{i}}\exp\left(-\theta_{1}-\frac{\theta_{2}}{T_{i}}-\theta_{3}V_{i}-\frac{\theta_{4}V_{i}}{T_{i}}\right)\tau_{i}^{\beta}\\
 &  & -\sum_{i=1}^{k}\sum_{j=1}^{r_{i}}\frac{1}{T_{i}}\exp\left(-\theta_{1}-\frac{\theta_{2}}{T_{i}}-\theta_{3}V_{i}-\frac{\theta_{4}V_{i}}{T_{i}}\right)x_{ij}^{\beta}
\end{eqnarray*}
\begin{eqnarray*}
\ell_{2,\theta_{3}} & = & \ln\left[\pi_{2}\left(\theta_{3}\left|\underline{x},\theta_{1},\theta_{2},\theta_{4},\beta\right.\right)\right]\\
 & = & \left(c_{14}-1\right)\ln\left(\theta_{3}\right)-c_{15}\theta_{3}-\theta_{3}\sum_{i=1}^{k}r_{i}V_{i}-\sum_{i=1}^{k}\left(n_{i}-r_{i}\right)T_{i}\exp\left(-\theta_{1}-\frac{\theta_{2}}{T_{i}}-\theta_{3}V_{i}-\frac{\theta_{4}V_{i}}{T_{i}}\right)\tau_{i}^{\beta}\\
 &  & -\sum_{i=1}^{k}\sum_{j=1}^{r_{i}}T_{i}\exp\left(-\theta_{1}-\frac{\theta_{2}}{T_{i}}-\theta_{3}V_{i}-\frac{\theta_{4}V_{i}}{T_{i}}\right)x_{ij}^{\beta}\\
\frac{\partial\ell_{2,\theta_{3}}}{\partial\theta_{3}} & = & \frac{c_{14}-1}{\theta_{3}}-c_{15}-\sum_{i=1}^{k}r_{i}V_{i}+\sum_{i=1}^{k}\left(n_{i}-r_{i}\right)T_{i}V_{i}\exp\left(-\theta_{1}-\frac{\theta_{2}}{T_{i}}-\theta_{3}V_{i}-\frac{\theta_{4}V_{i}}{T_{i}}\right)\tau_{i}^{\beta}\\
 &  & +\sum_{i=1}^{k}\sum_{j=1}^{r_{i}}T_{i}V_{i}\exp\left(-\theta_{1}-\frac{\theta_{2}}{T_{i}}-\theta_{3}V_{i}-\frac{\theta_{4}V_{i}}{T_{i}}\right)x_{ij}^{\beta}\\
\frac{\partial^{2}\ell_{2,\theta_{3}}}{\partial\theta_{3}^{2}} & = & \frac{1-c_{14}}{\theta_{3}^{2}}-\sum_{i=1}^{k}\left(n_{i}-r_{i}\right)T_{i}V_{i}^{2}\exp\left(-\theta_{1}-\frac{\theta_{2}}{T_{i}}-\theta_{3}V_{i}-\frac{\theta_{4}V_{i}}{T_{i}}\right)\tau_{i}^{\beta}\\
 &  & -\sum_{i=1}^{k}\sum_{j=1}^{r_{i}}T_{i}V_{i}^{2}\exp\left(-\theta_{1}-\frac{\theta_{2}}{T_{i}}-\theta_{3}V_{i}-\frac{\theta_{4}V_{i}}{T_{i}}\right)x_{ij}^{\beta}
\end{eqnarray*}
\begin{eqnarray*}
\ell_{2,\theta_{4}} & = & \ln\left[\pi_{2}\left(\theta_{4}\left|\underline{x},\theta_{1},\theta_{2},\theta_{3},\beta\right.\right)\right]\\
 & = & \left(c_{16}-1\right)\ln\left(\theta_{4}\right)-c_{17}\theta_{4}-\theta_{4}\sum_{i=1}^{k}\frac{r_{i}V_{i}}{T_{i}}-\sum_{i=1}^{k}\left(n_{i}-r_{i}\right)T_{i}\exp\left(-\theta_{1}-\frac{\theta_{2}}{T_{i}}-\theta_{3}V_{i}-\frac{\theta_{4}V_{i}}{T_{i}}\right)\tau_{i}^{\beta}\\
 &  & -\sum_{i=1}^{k}\sum_{j=1}^{r_{i}}T_{i}\exp\left(-\theta_{1}-\frac{\theta_{2}}{T_{i}}-\theta_{3}V_{i}-\frac{\theta_{4}V_{i}}{T_{i}}\right)x_{ij}^{\beta}\\
\frac{\partial\ell_{2,\theta_{4}}}{\partial\theta_{4}} & = & \frac{c_{16}-1}{\theta_{4}}-c_{17}-\sum_{i=1}^{k}\frac{r_{i}V_{i}}{T_{i}}+\sum_{i=1}^{k}\left(n_{i}-r_{i}\right)V_{i}\exp\left(-\theta_{1}-\frac{\theta_{2}}{T_{i}}-\theta_{3}V_{i}-\frac{\theta_{4}V_{i}}{T_{i}}\right)\tau_{i}^{\beta}\\
 &  & +\sum_{i=1}^{k}\sum_{j=1}^{r_{i}}V_{i}\exp\left(-\theta_{1}-\frac{\theta_{2}}{T_{i}}-\theta_{3}V_{i}-\frac{\theta_{4}V_{i}}{T_{i}}\right)x_{ij}^{\beta}\\
\frac{\partial^{2}\ell_{2,\theta_{4}}}{\partial\theta_{4}^{2}} & = & \frac{1-c_{16}}{\theta_{4}^{2}}-\sum_{i=1}^{k}\left(n_{i}-r_{i}\right)\frac{V_{i}^{2}}{T_{i}}\exp\left(-\theta_{1}-\frac{\theta_{2}}{T_{i}}-\theta_{3}V_{i}-\frac{\theta_{4}V_{i}}{T_{i}}\right)\tau_{i}^{\beta}\\
 &  & -\sum_{i=1}^{k}\sum_{j=1}^{r_{i}}\frac{V_{i}^{2}}{T_{i}}\exp\left(-\theta_{1}-\frac{\theta_{2}}{T_{i}}-\theta_{3}V_{i}-\frac{\theta_{4}V_{i}}{T_{i}}\right)x_{ij}^{\beta}
\end{eqnarray*}
\begin{eqnarray*}
\ell_{2,\beta} & = & \ln\left[\pi_{2}\left(\beta\left|\underline{x},\theta_{1},\theta_{2},\theta_{3},\theta_{4}\right.\right)\right]\\
 & = & \left(c_{18}-1\right)\ln\left(\beta\right)-c_{19}\beta+\ln\left(\beta\right)\sum_{i=1}^{k}r_{i}-\sum_{i=1}^{k}\left(n_{i}-r_{i}\right)T_{i}\exp\left(-\theta_{1}-\frac{\theta_{2}}{T_{i}}-\theta_{3}V_{i}-\frac{\theta_{4}V_{i}}{T_{i}}\right)\tau_{i}^{\beta}\\
 &  & -\sum_{i=1}^{k}\sum_{j=1}^{r_{i}}T_{i}\exp\left(-\theta_{1}-\frac{\theta_{2}}{T_{i}}-\theta_{3}V_{i}-\frac{\theta_{4}V_{i}}{T_{i}}\right)x_{ij}^{\beta}+\left(\beta-1\right)\sum_{i=1}^{k}\sum_{j=1}^{r_{i}}\ln\left(x_{ij}\right)\\
\frac{\partial\ell_{2,\beta}}{\partial\beta} & = & \frac{c_{18}-1}{\beta}-c_{19}+\frac{1}{\beta}\sum_{i=1}^{k}r_{i}-\sum_{i=1}^{k}\left(n_{i}-r_{i}\right)T_{i}\exp\left(-\theta_{1}-\frac{\theta_{2}}{T_{i}}-\theta_{3}V_{i}-\frac{\theta_{4}V_{i}}{T_{i}}\right)\tau_{i}^{\beta}\ln\left(\tau_{i}\right)\\
 &  & -\sum_{i=1}^{k}\sum_{j=1}^{r_{i}}T_{i}\exp\left(-\theta_{1}-\frac{\theta_{2}}{T_{i}}-\theta_{3}V_{i}-\frac{\theta_{4}V_{i}}{T_{i}}\right)x_{ij}^{\beta}\ln\left(x_{ij}\right)+\sum_{i=1}^{k}\sum_{j=1}^{r_{i}}\ln\left(x_{ij}\right)\\
\frac{\partial^{2}\ell_{2,\beta}}{\partial\beta^{2}} & = & \frac{1-c_{18}}{\beta^{2}}-\frac{1}{\beta^{2}}\sum_{i=1}^{k}r_{i}-\sum_{i=1}^{k}\left(n_{i}-r_{i}\right)T_{i}\exp\left(-\theta_{1}-\frac{\theta_{2}}{T_{i}}-\theta_{3}V_{i}-\frac{\theta_{4}V_{i}}{T_{i}}\right)\tau_{i}^{\beta}\ln^{2}\left(\tau_{i}\right)\\
 &  & -\sum_{i=1}^{k}\sum_{j=1}^{r_{i}}T_{i}\exp\left(-\theta_{1}-\frac{\theta_{2}}{T_{i}}-\theta_{3}V_{i}-\frac{\theta_{4}V_{i}}{T_{i}}\right)x_{ij}^{\beta}\ln^{2}\left(x_{ij}\right).
\end{eqnarray*}
Since $T_{i}\geq0$ (temperature measured in kelvin), $r_{i}\leq n_{i}$
(can not be more failures than items tested), $\tau_{i}\geq0$ (censoring
time), $x_{ij}\geq0$ (failure time), $\beta>0$ (shape parameter
of the Weibull distribution), $\exp\left(\cdot\right)\geq0$, $V_{i}^{2}\geq0$
and $\ln^{2}\left(\cdot\right)\geq0$, the full conditional posteriors
for the $GEW_{2}$ model are log-concave on their domains, subject
to $c_{10},c_{12},c_{14},c_{16},\sum_{i=1}^{k}r_{i}\geq1$.
\end{proof}
\begin{thm}
The full conditional posterior distributions of the $GEW_{3}$ model
are all log-concave on their domains, when $\sum_{i=1}^{k}r_{i}\geq1$.
The same holds for the full conditional posterior distributions of
the $GEW_{4}$ model, when $c_{30},c_{32},c_{34},c_{36}\geq1$.
\end{thm}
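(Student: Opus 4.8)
The plan is to observe that $GEW_3$ and $GEW_4$ introduce no genuinely new full conditional posteriors: both are assembled from the likelihood (\ref{eq:GEW_LIKELIHOOD}) and from priors already analysed. In $GEW_3$ the parameters $\theta_1,\theta_2,\theta_3,\theta_4$ carry the uniform priors of $GEW_1$ while $\beta$ carries a gamma prior of the $GEW_2$ type (with hyperparameters $c_{28},c_{29}$); in $GEW_4$ the roles are swapped, with $\theta_1,\theta_2,\theta_3,\theta_4$ carrying gamma priors of the $GEW_2$ type (hyperparameters $c_{30},\dots,c_{37}$) and $\beta$ the uniform prior of $GEW_1$. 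Since the prior on a coordinate enters only that coordinate's full conditional, each full conditional of $GEW_3$ and $GEW_4$ is, as a function of the relevant parameter and up to a multiplicative constant, one of the full conditionals already handled in Theorem 1 or Theorem 2, merely with relabelled hyperparameters. Log-concavity then follows coordinate by coordinate.

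Concretely, for $GEW_3$ the four full conditionals $\pi_3(\theta_m\mid\cdot)$, $m=1,2,3,4$, are identical in form to $\pi_1(\theta_m\mid\cdot)$, so Theorem 1 already gives $\partial^2\ell_{3,\theta_m}/\partial\theta_m^2\le0$ with no side conditions. For $\pi_3(\beta\mid\cdot)$ I would copy the computation of $\ell_{2,\beta}$ from the proof of Theorem 2 with $(c_{18},c_{19})$ replaced by $(c_{28},c_{29})$, obtaining a second derivative whose only potentially positive contribution is $(1-c_{28})/\beta^2$; combining this with the likelihood term $-\beta^{-2}\sum_{i=1}^{k}r_i$ and using $c_{28}>0$ together with $\sum_{i=1}^{k}r_i\ge1$ makes $1-c_{28}-\sum_{i=1}^{k}r_i<0$, while all remaining terms are non-positive by the same sign considerations ($T_i\ge0$, $r_i\le n_i$, $\exp(\cdot)\ge0$, $\ln^2(\cdot)\ge0$, $\beta>0$) used in Theorems 1 and 2. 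Hence every full conditional of $GEW_3$ is log-concave once $\sum_{i=1}^{k}r_i\ge1$.

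For $GEW_4$, $\pi_4(\theta_m\mid\cdot)$ matches $\pi_2(\theta_m\mid\cdot)$ after relabelling $(c_{10},c_{11}),(c_{12},c_{13}),(c_{14},c_{15}),(c_{16},c_{17})$ as $(c_{30},c_{31}),(c_{32},c_{33}),(c_{34},c_{35}),(c_{36},c_{37})$, so by Theorem 2 the log-second-derivatives have leading terms $(1-c_{30})/\theta_1^2,\dots,(1-c_{36})/\theta_4^2$, which are non-positive exactly under $c_{30},c_{32},c_{34},c_{36}\ge1$, the remaining terms being non-positive unconditionally. Finally $\pi_4(\beta\mid\cdot)$ has exactly the form of $\pi_1(\beta\mid\cdot)$, whose log is concave on $(0,\infty)$ with no restriction (its leading term $-\beta^{-2}\sum_{i=1}^{k}r_i$ is already $\le0$). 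The only step needing a moment's care is the $\beta$ full conditional of $GEW_3$: the gamma-prior contribution $(1-c_{28})/\beta^2$ is not automatically $\le0$, so one must absorb it into the likelihood's $-\beta^{-2}\sum r_i$ term and invoke the presence of at least one observed failure, precisely the $c_{18}$ argument of Theorem 2. Beyond that, the proof is pure bookkeeping — no new estimates are required.
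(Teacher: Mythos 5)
Your proposal is correct and follows essentially the same route as the paper, which simply invokes "the same reasoning as in Theorem 1 and Theorem 2"; you reduce each full conditional of $GEW_3$ and $GEW_4$ to the corresponding one from $GEW_1$ or $GEW_2$ with relabelled hyperparameters, exactly as intended. Your explicit handling of the one non-trivial case --- absorbing the gamma-prior term $(1-c_{28})/\beta^2$ into $-\beta^{-2}\sum_{i=1}^{k}r_i$ using $c_{28}>0$ and $\sum_{i=1}^{k}r_i\geq 1$ --- is the detail the paper leaves implicit, and it is done correctly.
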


\begin{proof}
Following the same reasoning as in Theorem 1 and Theorem 2, it is
easy to show that the full conditional posteriors for the $GEW_{3}$
and $GEW_{4}$ models are log-concave on their domains subject to
the conditions $\sum_{i=1}^{k}r_{i}\geq1$ and $c_{30},c_{32},c_{34},c_{36}\geq1$,
respectively.
\end{proof}

\end{document}